\documentclass[natbib,hyperref,doi,envcountsame]{svjour3}    
\smartqed  
\usepackage{amsmath, amssymb, mathrsfs, relsize}
\usepackage{mathtools}
\usepackage{enum}
\usepackage[english]{babel}

\usepackage{marginnote}
\usepackage{tikz}

\usepackage[algoruled, vlined]{algorithm2e}
\DontPrintSemicolon

\usepackage{subfigure}
\usetikzlibrary{arrows,topaths}
\usetikzlibrary{shapes}
\usetikzlibrary{fit,matrix}
\usetikzlibrary{automata, positioning}

\spnewtheorem{definition}[theorem]{Definition}{\bfseries}{\rmfamily}

\def\qed{\ifmmode\squareforqed\else{\unskip\nobreak\hfil
\penalty50\hskip1em\null\nobreak\hfil\squareforqed
\parfillskip=0pt\finalhyphendemerits=0\endgraf}\fi}

\newcommand{\NLOGSPACE}{\textsc{NLogSpace}}

\newcommand*{\pset}[1]{\mathcal{P}(#1)}
\newcommand*{\calG}{\mathcal{G}}
\newcommand*{\calA}{\mathcal{A}}
\newcommand*{\calS}{\mathcal{S}}
\newcommand*{\calP}{\mathcal{P}}
\newcommand*{\calK}{\mathcal{K}}
\newcommand*{\calM}{\mathcal{M}}
\newcommand*{\Hist}{\mathrm{Hist}}
\newcommand*{\rank}{\mathrm{rank}}

\tikzset{accepting by double} 
\tikzstyle{every state}=[minimum size=1em, inner sep=1pt, font=\scriptsize]

\newcommand*{\gob}{\textrm{\tiny $\bullet$}}
\newcommand*{\goc}{\raisebox{-1.2pt}{\textrm{\tiny $\bullet$}}\kern-3.7pt\raisebox{2pt}{\textrm{\tiny $\bullet$}}}
\newcommand*{\gout}{\raisebox{.5pt}{\kern-1pt\textrm{\tiny $\times$}\kern-1pt}}
\newcommand*{\gow}{\textrm{\tiny $\circ$}}
\newcommand*{\gocbw}{\raisebox{-1.2pt}{\textrm{\tiny $\bullet$}}\kern-3.7pt\raisebox{2pt}{\textrm{\tiny $\circ$}}}
\newcommand*{\gocwb}{\raisebox{-1.2pt}{\textrm{\tiny $\circ$}}\kern-3.7pt\raisebox{2pt}{\textrm{\tiny $\bullet$}}}

\newcommand*{\gocw}{\raisebox{-1.2pt}{\textrm{\tiny $\circ$}}\kern-3.7pt\raisebox{2pt}{\textrm{\tiny $\circ$}}}

\tikzset{
  node/.style = {draw, minimum size=.2em, inner sep=0pt, font=\scriptsize
    font=\sffamily}
}

\tikzset{every picture/.style={node distance=5em,->,>=latex,shorten >=1pt,auto, font=\scriptsize}} 
\tikzset{initial text=, initial distance=1em} 
\tikzset{accepting by double} 
\tikzstyle{every state}=[minimum size=1.7em, inner sep=2pt, font=\scriptsize]

\tikzset{
state1P/.style={
		rectangle,
		fill=gray!50,
		rounded corners,
       draw=black, 
       minimum height=2em,
       inner sep=5pt,
       text centered,
       }
}

\tikzset{
state3P/.style={
		rectangle split,
       rectangle split horizontal,
       rectangle split parts=3,
       rectangle split part fill={none,gray!50,none},
       rounded corners,
       draw=black, 
       minimum height=1em,
       inner sep=3pt,
       text centered,
       }
}

\tikzset{
state5P/.style={
		rectangle split,
       rectangle split horizontal,
       rectangle split parts=5,
       rectangle split part fill={none,none,gray!50,none,none},
       rounded corners,
       draw=black, thick,
       minimum height=2em,
       inner sep=5pt,
       text centered,
       }
}

%
\journalname{Acta Informatica}
\begin{document}

\title{Hierarchical Information and the Synthesis of Distributed Strategies}

\titlerunning{Hierarchical Information}        

\author{Dietmar Berwanger \and
        Anup Basil Mathew \and 
        Marie van den Bogaard %
}

\authorrunning{D. Berwanger, A.B. Mathew, M. van den Bogaard} 

\institute{D. Berwanger, A. B. Mathew, and M. van den Bogaard \at
              LSV, CNRS \& Universit\'e Paris-Saclay, France \\
              \email{dwb@lsv.fr} 
}

\date{Received: date / Accepted: date}

\maketitle

\begin{abstract}
Infinite games with imperfect information are known to be undecidable 
unless the information flow is severely restricted. 
One fundamental
decidable case occurs when there is a total ordering among players, such
that each player has access to all the information that the following ones receive.  
In this paper we consider variations of this hierarchy principle for
synchronous games with perfect recall, and identify new decidable
classes for which the distributed synthesis problem is solvable with
finite-state strategies. 
In particular, we show that decidability is maintained when the
information hierarchy may change along the play, or when transient phases
without hierarchical information are allowed. 
Finally, we interpret our result in terms of distributed system 
architectures. 

\keywords{Infinite games \and Imperfect information \and Coordination \and Distributed systems \and Automated synthesis}
\subclass{91A06 \and 68M14 \and 93B50}
\end{abstract}

\section{Introduction}

To realise systems that are correct by design is a persistent 
ambition in computing science. 
The stake is particularly high for systems that interact with
an unpredictable environment over indeterminate time. 
Pioneering results in the area of synthesis, due 
to~\cite{BuechiLandweber69}, and~\cite{Rabin72}, 
show that the task can be 
automatised for the case of monolithic designs with correctness 
conditions specified by automata over infinite objects --- words or trees
representing computations. 
A most natural framework for 
representing and solving the problem is in terms of infinite games with
perfect information over finite graphs, as described 
by~\cite{PnueliRos90} or by~\cite{Thomas95}.

For distributed systems in which multiple components interact with the objective of 
satisfying a global specification, the game-theoretical formulation
of the synthesis problem leads to games with imperfect information and
to the question of whether there exists a winning strategy that can be 
distributed among the multiple players.
Unfortunately, such games are much less amenable to automated
solutions: as pointed out by~\cite{PetersonRei79}, 
it is generally undecidable whether a solution ---\,that is, a distributed
winning strategy\,--- 
exists for a finitely presented game for two players against Nature (or the
environment); furthermore,~\cite{Janin07} showed that,
even if a solution exists, it may not be
implementable by a finite-state device. As there is no
hope for solving the distributed synthesis problem uniformly, 
it remains to look out for classes that allow for an algorithmic
treatment. For surveys on results in this direction, see, e.g., the 
article of~\cite{GastinSZ09} or the theses 
of~\cite{Schewe08} and of~\cite{Puchala13}.

One fundamental case in which the distributed synthesis problem becomes
decidable is that of hierarchical systems: these
correspond to games where there exists a total order among the players
such that, informally speaking, each player has access to the information
received by the players that come later in the order. 
For such games,~\cite{PetersonRei79} 
showed that it is decidable\,---\,although, 
with nonelementary complexity\,---\,whether distributed winning
strategies exist and
if so, finite-state winning strategies can be effectively
synthesised. 
The result was extended by~\cite{PnueliRos90}
to the framework of distributed system architectures 
with linear-time specifications 
over path-shaped communication architectures where
information can flow only in one direction. 
Later,~\cite{KupfermanVar01} developed an 
automata-theoretic approach
that allows to extend the decidability
result from linear-time to branching-time specifications, and also
relaxes some of the syntactic restrictions imposed by the
fixed-architecture setting of Pnueli and Rosner. 
Finally,~\cite{FinkbeinerSch05} gave an effective
characterisation of communication architectures 
on which distributed synthesis is decidable. The criterion requires
absence of information forks, which implies a hierarchical order in
which processes, or players, have access to the 
observations emitted by the environment.

The setting of games is more liberal than that of architectures with
fixed communication channels. 
A rather general, though non-effective condition for games to 
admit finite-state distributed winning strategies is given
in~\cite{BKP11}, based on epistemic models representing the knowledge
acquired by players in a game with perfect recall. This condition
suggests that, beyond the fork-free architecture classification there
may be further natural classes of games for which the distributed synthesis
problem is decidable.

In this paper, we study a relaxation of the
hierarchical information pattern underlying the basic decidability
results on games with imperfect information and 
distributed system architectures. 
Firstly, we extend the assumption of hierarchical observation, that is,
\emph{positional} information, by incorporating perfect recall. 
Rather than requiring that a player \emph{observes} the signal
received by a less-informed player, we require that he can \emph{infer}
it from his observation of the play history. 
It can easily be seen that this gives rise to a decidable class, 
and it is likely that previous
authors had a perfect-recall interpretation in mind when describing
hierarchical systems, even if the formal definitions in the
relevant literature generally refer to observations.

Secondly, we investigate the case when the hierarchical information
order is not fixed,  
but may change dynamically along the play. 
This allows to model situations where the schedule of the 
interaction allows a less-informed player
to become more informed than others, or where the players may
coordinate on designating one to receive certain signals, and thus
become more informed than others. 
We show that this condition of dynamic hierarchical observation also
leads to a decidable class of the distributed synthesis problem.

As a third extension, we consider the case where the condition of
hierarchical information (based on perfect recall) is intermittent. That
is, along every play, it occurs infinitely often that the information
sets of players are totally ordered; nevertheless, there may be
histories at which incomparable information sets arise, as it is
otherwise typical of information forks. 
We show that, at least for
the case of winning conditions over attributes observable by all
players, this condition of recurring hierarchical observation is already
sufficient to ensure decidability of the synthesis problem, and that
finite-state winning strategies exist for all solvable instances.

For all the three conditions of hierarchical information, it is decidable with
relatively low complexity whether they hold for a given game. 
However, the complexity of solving a game is nonelementary in all
cases, as they are more general than the condition of hierarchical
observation, for which it was shown by~\cite{PetersonRei79}
that there exists no elementary solution procedure.

The last part of the paper presents an interpretation of the game-theoretic results in terms of 
distributed reactive sytems. Towards this, we extend the framework introduced 
by~\cite{PnueliRos90} which features hard-wired communciation graphs, to a model where the 
global actions are transduced into signals for the individual processes 
by a deterministic finite-state monitor. 
In this framework of monitored architectures, 
we identify classes that correspond to games with hierarchical information and 
therefore admit an effective solution of the distributed synthesis problem.

\section{Preliminaries}

\subsection{Games on graphs}

We use the standard model of concurrent games with imperfect
information, following the notation from~\cite{BKP11}. 
There is a set $N = \{1, \dots, n\}$ of  players 
and a distinguished agent called Nature. 
We refer to a list of elements
$x=(x^i)_{i \in N}$, one for each player, as a \emph{profile}.  
For each player~$i$, we fix a set $A^i$ of \emph{actions} and a set
$B^i$ of \emph{observations}; these are finite sets.

A~\emph{game graph} $G = (V, E, (\beta^i)_{i \in N})$ 
consists of a set $V$ of nodes called \emph{positions}, 
an edge relation $E \subseteq V \times A \times V$ 
representing simultaneous \emph{moves} 
labelled by action profiles, 
and a profile of \emph{observation} functions $\beta^i: V \to B^i$
that label every position with an observation for each player.
We assume that the graph has no dead ends, that is, for every
position~$v \in V$ and every
action profile~$a \in A$, there exists an outgoing move $(v, a, w) \in E$, and we 
denote the set of successors of a position~$v$
by $vEA := \{w~|(v,a,w) \text{ for some } a \in A\}$.

Plays start at a designated initial position $v_0 \in V$ 
and proceed in rounds. 
In~a round at position $v$, 
each player~$i$ chooses simultaneously
and independently an action $a^i \in A^i$, then  
Nature chooses a successor position $v'$ reachable 
along a move $(v, a, v') \in E$.  
Now, each player~$i$ receives the observation $\beta^i( v')$,
and the play continues from position $v'$.
Thus, a \emph{play} 
is an infinite sequence $\pi = v_0, v_1, v_2, \dots$ 
of positions, such that for all $\ell \ge 0$, there exists a move
 $(v_{\ell}, a, v_{\ell + 1}) \in E$.
A~\emph{history} is a nonempty prefix $\pi = v_0, v_1, \dots, v_{\ell}$ 
of a play; we refer to~$\ell$ as the
\emph{length} of the history, and we denote by 
$\Hist( G )$ the set of all histories in the game graph $G$. 
The observation function
 extends from positions 
to histories\footnote{Note that
we discard the observation at the initial position; this is technically
convenient and does not restrict the model.} and plays
as $\beta^i( \pi ) =\beta^i (v_1) \beta^i (v_2) \dots$, and we
write
$\Hist^i( G ) := \{ \beta^i( \pi )~|~ \pi \in \Hist( G ) \}$ for
the set of \emph{observation histories} of player~$i$.
We say that 
two histories $\pi, \pi'$ are \emph{indistinguishable} to player~$i$, and write 
$\pi \sim^i \pi'$, if they yield the same observation $\beta^i(\pi) = \beta^i(\pi')$. 
This is an equivalence relation, and its classes are 
called \emph{information sets}. 
The information set of player~$i$ at history~$\pi$ is 
$P^i( \pi ) := \{ \pi' \in \Hist( G)~|~ \pi' \sim^i \pi \}$. 
In terms of the taxonomy for distributed systems by~\cite{HalpernVar89}, 
our model is \emph{synchronous} and of \emph{perfect recall}. 

A \emph{strategy} for player~$i$ is a mapping $s^i: V^* \to A^i$
from histories to actions 
that is \emph{information-consistent} in the sense 
that  $s^i ( \pi ) = s^i( \pi')$, 
for any pair ~$\pi \sim^i \pi'$ 
of indistinguishable histories.
We denote the set of all strategies of player~$i$ 
by $S^i$ and the set of all strategy profiles by $S$.
A history or play $\pi = v_0, v_1, \dots$ 
\emph{follows} the strategy $s^i \in S^i$ if, for every  $\ell > 0$,
we have $( v_\ell, a, v_{\ell+1}) \in E$
for some action profile~$a$ where
$a^i = s^i( v_0, v_1, \dots, v_\ell)$.
The play $\pi$ follows a strategy
profile~$s \in S$ if it follows all component strategies~$s^i$.
The set of  possible \emph{outcomes} of a strategy profile~$s$ 
is the set of plays that follow~$s$.

A \emph{winning condition} over a game graph $G$ is a set 
$W \subseteq V^\omega$ of plays. 
A~\emph{distributed game} $\calG = (G, W)$ 
is described by a game graph and a winning condition. 
We say that a play $\pi$ is winning in~$\calG$ if $\pi \in
W$. 
A strategy profile $s$ is winning in~$\calG$ 
if all its possible outcomes are so. In this case, we refer to $s$ as
a \emph{distributed winning strategy}.
We generally assume that the game graph is finite 
or finitely presented.
The general \emph{distributed synthesis problem} asks whether for a 
given game graph and a fixed or given winning condition, 
there exists a distributed winning strategy. 

\subsection{Automata}

Our focus is on finitely-represented games, where the game graphs are
finite and the winning conditions described by finite-state automata.
Specifically, winning conditions are given by a colouring function
$\gamma: V \to C$ 
and an $\omega$-regular set $W \subseteq C^\omega$ 
describing the set of plays $v_0, v_1, \dots$ with $\gamma( v_0 ), \gamma( v_1), \dots  \in W$.
In certain cases, we assume that the colouring is \emph{observable} 
to each player~$i$, 
that is, $\beta^i( v ) \neq
\beta^i( v')$ whenever $\gamma( v ) \neq \gamma( v')$.
For general background on automata for games,  
we refer to the handbook by~\cite{GraedelThoWil02}. 

Strategies shall also be represented as finite-state machines.
A \emph{Moore} machine over an input alphabet $\Sigma$ and an output
alphabet $\Gamma$ is described by a tuple
$(M, m_0, \mu, \nu)$ consisting of 
a finite set $M$ of \emph{memory states} with an initial state $m_0$, 
a memory \emph{update} function $\mu: M \times \Sigma \to M$ 
and an \emph{output} function $\nu: M \to \Gamma$ defined on memory
states. 
Intuitively, the machine starts in the initial memory state~$m_0$,   
and proceeds as follows: in state $m$, upon reading an input symbol~$x
\in \Sigma$, 
it updates its memory state to  $m' := \mu( m, x )$ and then
outputs the letter $\nu( m )$. 
Formally, the update function $\mu$ is extended to input words in $\Sigma^*$
by setting, $\mu( \varepsilon ) := m_0$, for the empty word, and
by setting, $
  \mu(x_0 \dots x_{\ell-1} x_{\ell}) := \mu(\,
  \mu( x_0  \dots x_{\ell-1}), \, x_{\ell}),$ 
for all nontrivial words $x_0 \dots x_{\ell-1} x_{\ell}$.
This gives rise to the function $M: \Sigma^* \to \Gamma^*$
\emph{implemented} by $M$, defined by
$M(x_0, \dots, x_{\ell}) := 
\nu(\, \mu(\, x_0 \dots x_{\ell} ) \,).$
A \emph{strategy automaton} for player~$i$ on a game graph~$G$, is a
Moore machine~$M$ with input alphabet $B^i$ and output alphabet $A^i$. 
The strategy implemented by $M$ is defined as
$s^i(v_0, \dots, v_{\ell}) := M( \beta^i( v_0 \dots v_{\ell} ))$, for all~$\ell > 0$.
A~\emph{finite-state strategy} is one that can be implemented by a
strategy automaton. 

Sometimes it is convenient to refer to 
Mealy machines rather than Moore machines. These are finite-state
machines of similar format, with the only difference that the output 
function $\nu: M \times \Sigma \to \Gamma$ is defined on transitions 
rather than their target state.

In the following we will refer to several classes $\mathcal{C}$ of
finite games, always assuming that winning conditions are given as
$\omega$-regular languages.
The \emph{finite-state synthesis problem} for a class $\mathcal{C}$
 is the following: Given a game $\calG \in \mathcal{C}$,
\begin{enumerate}[(i)]
\item decide whether $\calG$ admits
  a finite-state distributed winning strategy, and
\item if yes, construct a profile of finite-state machines that
  implements a distributed winning strategy for $\calG$.
\end{enumerate}
We refer to the set of distributed (finite-state) winning strategies 
for a given 
game $\calG$ as the (finite-state) \emph{solutions} of $\calG$. 
We say that 
the synthesis problem is \emph{finite-state solvable} 
for a class~$\mathcal{C}$ if every game $\calG \in \mathcal{C}$ 
that admits a solution also admits a finite-state solution, and 
if the above two synthesis tasks can be accomplished for all 
instances in~$\mathcal{C}$.

\section{Static Information Hierarchies}

\subsection{Hierarchical observation}

We set out from the basic pattern of hierarchical information
underlying the decidability results cited in the
introduction.
These results rely on a positional interpretation of information,
i.e., on observations.

\begin{definition}
A game graph yields \emph{hierarchical observation} if there exists a total
order~$\preceq$ among the players such that whenever $i \preceq j$, then for
all pairs $v, v'$ of positions,
$\beta^i( v ) = \beta^i( v' )$ implies $\beta^j( v ) = \beta^j ( v' )$
\end{definition}

In other words, if~$i \preceq j$, then the observation of player~$i$ determines the
observation of player~$j$. An example of such a situation is illustrated in 
Figure~\ref{sfig:hierarchical-observation}.

\cite{PetersonRei79} study a 
game with players organised in a hierarchy, such that each player~$i$ sees the data 
observed by player~$i-1$. The setting is actually generic for games with reachability winning conditions
and the authors show that 
winning strategies can be synthesised in $n$-fold exponential time 
and this complexity is unavoidable.
Later, \cite{PnueliRos90} consider a similar model in the context of distributed systems with 
linear-time specifications given by finite automata on infinite words. 
Here, the hierarchical organisation is represented by a pipeline architecture which allows each process to 
send signals only to the following one. The authors show that the distributed synthesis problem for such a 
system, is solvable via an automata-theoretic technique. This technique is 
further extended by \cite{KupfermanVar01} to more general, branching-time specifications.
The key operation of the construction is that of \emph{widening} -- a finite-state
interpretation of strategies for a
less-informed player~$j$ within the strategies of
a more-informed player~$i \preceq j$. 
This allows to first solve a game as if all the
moves were performed by the most-informed player, which comes first
in the order $\preceq$, and successively
discard solutions that cannot be implemented by the less-informed
players, i.e., those which involve strategies that 
are not in the image of the widening interpretation. 

The automata-theoretic method  
for solving the synthesis problem on pipeline architectures, 
due to \cite{PnueliRos90} and \cite{KupfermanVar01},
can be adapted directly 
to solve the synthesis problem for games with hierarchical observation.
Alternatively, the solvability result follows from  
the reduction of games with hierarchical observations to pipeline architectures presented
in Theorem~\ref{thm:hierarchical-obs-pipeline}, as a part of our discussion on 
games and architectures. 

\begin{theorem}[\cite{PnueliRos90,KupfermanVar01}]\label{thm:KV01}
For games with hierarchical observation, the synthesis problem is
finite-state solvable.
\end{theorem}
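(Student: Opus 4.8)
The plan is to reduce games with hierarchical observation to the pipeline-architecture synthesis problem solved by Pnueli--Rosner and Kupferman--Vardi, so that their automata-theoretic method applies directly. Fix a game $\calG = (G, W)$ with a total order $\preceq$ witnessing hierarchical observation; relabel so that $1 \preceq 2 \preceq \dots \preceq n$, i.e., player~$1$ is the most informed. The key structural fact is that, because $\beta^i(v)$ determines $\beta^j(v)$ whenever $i \preceq j$, the observation alphabets can be arranged so that $B^1$ refines $B^2$ refines $\dots$ refines $B^n$: there are surjections $\rho^{j} : B^j \to B^{j+1}$ with $\beta^{j+1} = \rho^j \circ \beta^j$. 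This is exactly the data needed to view the interaction as a pipeline in which Nature feeds player~$1$, player~$1$ forwards (a coarsening of) its signal to player~$2$, and so on down to player~$n$.

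Concretely, I would build a pipeline architecture whose head process, fed by the environment, receives $\beta^1(v')$ at each round and outputs the pair consisting of its own action $a^1$ together with the forwarded observation $\rho^1(\beta^1(v'))$; each subsequent process~$j$ receives from its predecessor the observation $\beta^j(v')$, outputs its action $a^j$, and forwards $\rho^j(\beta^j(v'))$. One then encodes the move relation $E$ and the colouring $\gamma$ of $G$ into the linear-time (or branching-time) specification $\varphi$ of the architecture: a run is correct iff the sequence of positions it induces is consistent with $E$ under the chosen action profile and the induced colour sequence lies in $W$. Since $W$ is $\omega$-regular and $E$ is finite, $\varphi$ is $\omega$-regular, so the instance falls within the scope of Theorem~(Pnueli--Rosner, Kupferman--Vardi): the pipeline synthesis problem is decidable and, when realizable, admits a finite-state implementation. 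Unfolding the resulting finite-state pipeline controllers back along the correspondence yields a profile of strategy automata; information-consistency of $s^i$ is immediate because the automaton for player~$i$ reads only $\beta^i$.

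The main obstacle is bookkeeping rather than conceptual: one must check that the translation is faithful in both directions --- every distributed winning strategy in $\calG$ induces a correct pipeline implementation, and conversely --- which requires being careful that the pipeline model's timing (who forwards what, in which round, with the initial observation discarded as in our convention) matches the synchronous perfect-recall semantics of $\calG$ exactly, and that Nature's freedom to choose among $E$-successors with the same action profile is correctly reflected as adversarial nondeterminism in the specification. A secondary point is that the standard pipeline formulations sometimes assume each process outputs to an external channel; here the ``output'' of player~$i$ that matters for $W$ is the position sequence, so the specification must be phrased over the joint behaviour, which is routine but must be spelled out. Once this correspondence is established, finite-state solvability of the synthesis problem for games with hierarchical observation follows immediately. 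We defer the detailed construction, in a more general form, to Theorem~\ref{thm:hierarchical-obs-pipeline}. \qed
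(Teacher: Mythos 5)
Your proposal is correct and follows essentially the same route the paper takes: the paper justifies Theorem~\ref{thm:KV01} by reducing games with hierarchical observation to pipeline architectures (carried out in detail in Theorem~\ref{thm:hierarchical-obs-pipeline}), using exactly the coarsening maps $f^i : B^{i-1} \to B^i$ you call $\rho^j$, the forwarding of observations down the chain, and an adjustment of the specification for the sequentialisation latency. The subtleties you flag (timing of forwarded signals, Nature's nondeterminism resolved by environment directions, and the head process needing the game/monitor state to compute $\beta^1$ from the environment's input alone) are precisely the points the paper's construction handles via its monitored-architecture intermediate step.
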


\begin{figure}[tp]
\begin{center}
\subfigure[hierarchical observation]{ 
    \label{sfig:hierarchical-observation}
\begin{tikzpicture}[xscale=.8]

\node[state3P, initial above] 		(0) at (3,4)         {$\gow$ \nodepart{two} $v_0$ \nodepart{three} $\gow$};

 \node[state3P]              (00) at (1.5,2.5) {$ \gow$ \nodepart{two} $v_1$ \nodepart{three} $\gow$};
 \node[state3P]              (01) at (4.5,2.5) {$\gob $ \nodepart{two} $v_2$ \nodepart{three} $\gow$};

  \node[state3P]              (000) at (1.5,1) {$\gow$ \nodepart{two} $v_3$ \nodepart{three} $\gow$};

 \node[state3P]              (010) at (3.5,1) {$\gob$ \nodepart{two} $v_4$ \nodepart{three} $\gob$};
 \node[state3P]              (011) at (5.5,1) {$ \goc $ \nodepart{two} $v_5$ \nodepart{three} $ \gob$};

 \path (0) edge (00) edge (01);
 \path (00) edge (000);
 \path (01) edge (010) edge (011);
 \draw[-,dotted] (000) -- (1.5,0.3);
 \draw[-,dotted] (010) -- (3.5,0.3);
 \draw[-,dotted] (011) -- (5.5,0.3);
 
\end{tikzpicture}
}
 \hspace*{.3cm}
 \subfigure[static hierarchical information]{ 
    \label{sfig:hierarchical-info}
\begin{tikzpicture}[xscale=.8]

  \node[state3P, initial above] 				(0) at (3,4)         {$\gow $ \nodepart{two} $v_0$ \nodepart{three} $\gow$};
  
  \node[state3P]              (00) at (1.5,2.5) {$ \gow$ \nodepart{two} $v_1$ \nodepart{three} $\gow$};
  \node[state3P]              (01) at (4.5,2.5) {$\gob $ \nodepart{two} $v_2$ \nodepart{three} $\gow$};

  \node[state3P]              (000) at (1.5,1) {$\gow$ \nodepart{two} $v_3$ \nodepart{three} $\gow$};

  \node[state3P]              (010) at (3.5,1) {$\gow$ \nodepart{two} $v_4$ \nodepart{three} $ \gob$};
  \node[state3P]              (011) at (5.5,1) {$ \gob $ \nodepart{two} $v_5$ \nodepart{three} $ \gob$};

  \path (0) edge (00) edge (01);
  \path (00) edge (000);
  \path (01) edge (010) edge (011);
  \draw[-, dotted] (000) -- (1.5,0.2);
  \draw[-, dotted] (010) -- (3.5,0.2);
  \draw[-, dotted] (011) -- (5.5,0.2);
 
\end{tikzpicture}
}
\end{center}
\caption{Basic patterns of hierarchical information: 
game positions show the observation of player~$1$ (left) 
and player~$2$ (right); 
the name of the position (middle) is unobservable}
\label{fig:hierarchieS}
\end{figure}
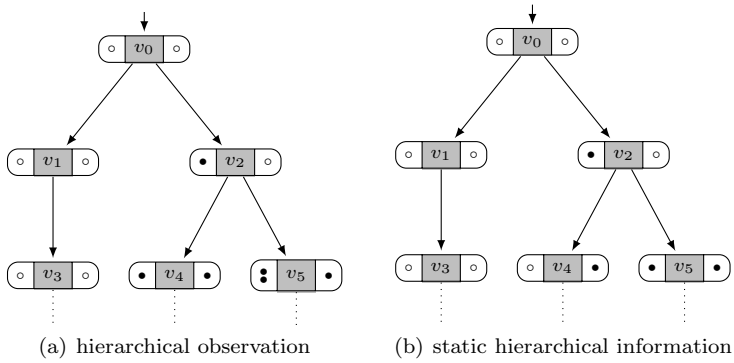

\subsection{Incorporating perfect recall}

In a first step, we extend the notion of hierarchical 
information to incorporate
the power of perfect recall that players have. 
While maintaining the requirement of a fixed order, we now ask that
the \emph{information set} of a player determines the information sets of
those who follow in the order, as in the example of Figure~\ref{sfig:hierarchical-info}.

\begin{definition}
A game graph yields \emph{(static) hierarchical information} 
if there exists a total order~$\preceq$ among the players such 
that, for all histories~$\pi$, if $i \preceq j$, then 
$P^i( \pi ) \subseteq P^j( \pi )$.
\end{definition}

The following lemma provides an operational characterisation of the
condition. 
We detail the proof, as its elements will be used later.

\begin{lemma}\label{lem:translate-observation}
A game graph $G$ yields static hierarchical information if, and only if, 
for every pair $i \preceq j$ of players, there exists a Moore machine that outputs
$\beta^j( \pi )$ on input $\beta^i(\pi)$, for every history $\pi$ in $G$.
\end{lemma}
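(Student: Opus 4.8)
The plan is to prove the two implications separately and, in fact, pairwise: fixing a total order $\preceq$ on the players, I will argue that for fixed $i \preceq j$ a translating Moore machine exists if and only if $P^i(\pi) \subseteq P^j(\pi)$ for every history $\pi$, equivalently, if and only if $\pi \sim^i \pi'$ implies $\pi \sim^j \pi'$. Conjoining over all pairs $i \preceq j$ then yields the lemma.

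The direction from right to left is immediate: a Moore machine is deterministic, so the word it produces on a given input is unique; hence if it outputs $\beta^j(\pi)$ on input $\beta^i(\pi)$ for every history $\pi$, then $\beta^i(\pi) = \beta^i(\pi')$ forces $\beta^j(\pi) = \beta^j(\pi')$, which is precisely $P^i(\pi) \subseteq P^j(\pi)$, so $G$ yields static hierarchical information with respect to $\preceq$.

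For the converse I will implement the map $\beta^i(\pi) \mapsto \beta^j(\pi)$ — which is well defined and length-preserving by the hypothesis — by the \emph{knowledge automaton} of player~$i$. Its memory states are subsets of $V$, the initial state is $\{v_0\}$, and on reading $b \in B^i$ a state $S$ is updated to $\{\, w \in V \mid (v,a,w) \in E \text{ for some } v \in S,\ a \in A, \text{ and } \beta^i(w) = b \,\}$; since $V$ is finite, this is a finite-state device. A routine induction on the length of $\pi$ shows that after reading $\beta^i(\pi)$ the automaton is in the nonempty state $S_\pi := \{\, \text{last position of } \rho \mid \rho \in \Hist(G),\ \beta^i(\rho) = \beta^i(\pi) \,\}$. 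By the hierarchy hypothesis all such $\rho$ share the same $\beta^j(\rho)$, hence the same final letter; so every position in $S_\pi$ carries the same observation for player~$j$, and defining $\nu(S)$ to be that common observation (and arbitrarily on the unreachable empty state) makes the automaton output, at each step, the corresponding letter of $\beta^j(\pi)$. It therefore produces exactly $\beta^j(\pi)$ on input $\beta^i(\pi)$.

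The whole content sits in this last construction, and its crux is the observation that player~$i$'s knowledge state — the set of positions he considers possible, obtained by the standard powerset/subset construction — already determines player~$j$'s current observation, precisely because hierarchical information forces all positions in that set to agree on $\beta^j$. Granting this, finiteness is free from $|V| < \infty$ and correctness is a plain induction; the only point demanding care is the bookkeeping around the convention that the observation at the initial position is discarded, which is why the initial memory state is the singleton $\{v_0\}$ and the first input letter read is $\beta^i(v_1)$.
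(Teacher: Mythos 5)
Your proof is correct and follows essentially the same route as the paper: both directions hinge on determinising the game graph viewed as an automaton over observations, with the hierarchy condition guaranteeing that the resulting subset states determine player~$j$'s observation. The only (cosmetic) difference is that the paper determinises a transducer over the pair alphabet $B^i \times B^j$ and then converts the resulting Mealy machine to a Moore machine, whereas you build the powerset automaton over $B^i$ alone and read off $\beta^j$ directly from the reachable subset --- the same construction in slightly more direct packaging.
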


\begin{proof}
For an arbitrary game graph~$G$, let us denote the relation 
between the
observations of two players~$i$ and $j$ along the histories in~$G$ 
by \begin{align*}T^{ij} := \{ (\beta^i( \pi ), \beta^j( \pi )) \in (B^i \times
B^j)^* ~|~\pi \in \Hist( G )
\}.
\end{align*}

Let us first assume that  there exists a Moore machine that recognises $T^{ij}$. 
Then $T^{ij}$ is actually a function and hence 
$\pi \sim^i \pi'$ implies 
\begin{align*}
  \beta^j( \pi ) = T^{ij}( \beta^i ( \pi ) ) = T^{ij}( \beta^i( \pi' )) = \beta^j( \pi' ),
\end{align*}  
and we can conclude that $\pi \sim^j \pi'$. 

To see that the converse implication holds, notice that~$T^{ij}$ is a regular relation, 
recognised by the game graph $G$ viewed as a finite-word
automaton $A_G^{ij}$ over
the alphabet of observation pairs $B^i \times B^j$. 
Concretely, consider the nondeterministic automaton~$A_G^{ij} := (V, B^i\times B^j,v_0, \Delta, V)$ 
on states corresponding to positions
of $G$, with initial state $v_0$ and 
with transitions  $( v, (b^i,b^j), v' ) \in \Delta$ if 
there
exists a move $(v, a, v') \in E$ such that
$\beta^i( v' ) = b^i$ and $\beta^j( v' ) = b^j$; all states are accepting.
Further, let~$M^{ij}$ be the automaton obtained by determinising $A_G^{ij}$ and
trimming the result, that is, removing all states that do not
lead to an accepting state. 

Now assume that the game graph $G$ at the outset yields static hierarchical information. Then, the relation~$T^{ij}$
recognised by $M^{ij}$ is functional, and hence $M^{ij}$ is deterministic in the
input component $i$: for any state $v$ there exists precisely one
outgoing transition along each observation $b^i \in B^i$. In other words, $M^{ij}$ is
a Mealy machine, which we can transform
into an equivalent Moore machine, as desired.
\qed
\end{proof}

\begin{theorem}\label{thm:static-decidable} 
For games with static hierarchical information, the synthesis problem 
is finite-state solvable.
\end{theorem}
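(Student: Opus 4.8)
The plan is to reduce the synthesis problem for games with static hierarchical information to the one for games with hierarchical observation, which is finite-state solvable by Theorem~\ref{thm:KV01}. Fix a game $\calG = (G, W)$ together with a total order witnessing static hierarchical information; after reindexing we may assume it is $1 \preceq 2 \preceq \dots \preceq n$. By Lemma~\ref{lem:translate-observation}, for each $i$ there is a Moore machine $M^{i,i+1}$ that, reading $\beta^i(\pi)$ letter by letter, outputs $\beta^{i+1}(\pi)$; these machines are effectively computable from $G$ via the construction in the proof of that lemma. Composing them into a cascade, we obtain for every player~$i$ a deterministic Moore machine $\calM_i$ that reads $\beta^i(\pi)$ and whose current control state records the tuple of current states of $M^{i,i+1}, M^{i+1,i+2}, \dots, M^{n-1,n}$; if the cascade is arranged so that $\calM_{i+1}$ is literally a component of $\calM_i$, then the state of $\calM_i$ determines the state of $\calM_j$ for every $j \succeq i$, as well as the observations $\beta^j(v)$ of all those players at the current position (read off as the output letters of the embedded machines).

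The reduced game $\calG' = (G', W')$ has positions $V' = V \times Q$, where $Q$ is the state set of $\calM_1$; a move $((v,q), a, (v',q'))$ belongs to $E'$ exactly when $(v,a,v') \in E$ and $q'$ arises from $q$ by feeding $\calM_1$ the observation profile at~$v'$. The new observation of player~$i$ at $(v,q)$ is the pair consisting of $\beta^i(v)$ and the part of $q$ that records the states of $M^{i,i+1}, \dots, M^{n-1,n}$. The winning condition $W'$ is the preimage of $W$ under the projection $V' \to V$; if $W$ is $\omega$-regular via a colouring $\gamma$, then so is $W'$ via $\gamma'(v,q) := \gamma(v)$. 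Three properties then need to be verified. First, $G'$ yields hierarchical observation for the same order: when $i \preceq j$, the $j$-part of $q$ is a sub-tuple of its $i$-part, and $\beta^j(v)$ is among the output letters recoverable from that $i$-part. Second, the new observation of player~$i$ is a function of $\beta^i(\pi)$ — all the embedded machine states are — so nothing is revealed that player~$i$ could not already infer in $G$; hence every strategy profile of $\calG$, read through the projection, is information-consistent in $\calG'$, and finite-state strategies stay finite-state. Third, the observation history of player~$i$ in $\calG'$ spells out $\beta^i(\pi)$ in its first components, so it refines the original one; combined with the fact that $q$ is determined by the underlying path, the projection is then a bijection between $\Hist(G')$ and $\Hist(G)$ that matches outcomes of corresponding strategy profiles.

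Granting these properties, a distributed (finite-state) winning strategy for $\calG'$ projects to one for $\calG$ and conversely; in the latter direction, a finite-state strategy for player~$i$ in $\calG'$, whose input alphabet carries the machine-state component, is turned into a finite-state strategy for $\calG$ by precomposing with the fixed Moore machine $\calM_i$ that computes that component from $\beta^i$. Since the reduction $\calG \mapsto \calG'$ is effective and Theorem~\ref{thm:KV01} lets us decide solvability of $\calG'$ and synthesise a finite-state solution whenever one exists, the same conclusions transfer to $\calG$.

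The main obstacle is calibrating how much information the augmented observations carry, so that the reduction is faithful in both directions at once: the new observations must refine the original ones — otherwise genuine distributed strategies for $\calG$ would not transfer to $\calG'$ — yet must not exceed what a player can already infer with perfect recall — otherwise synthesised strategies for $\calG'$ would not pull back to $\calG$. Achieving this balance while simultaneously making the augmented observations positionally hierarchical is exactly what forces the cascaded organisation of the translating machines and the inclusion of the raw observation $\beta^i(v)$ alongside the machine states; the remaining bookkeeping (initial positions, the convention of discarding the observation at $v_0$, absence of dead ends) is routine.
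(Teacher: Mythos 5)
Your proof is correct and follows essentially the same route as the paper's: both reduce static hierarchical information to hierarchical observation by synchronising $G$ with the Moore machines of Lemma~\ref{lem:translate-observation}, enriching each player's observations with information already determined by his observation history, and then invoking Theorem~\ref{thm:KV01}. The only (immaterial) difference is that you cascade the adjacent-pair machines $M^{i,i+1}$ and expose their states, whereas the paper applies the pairwise products $G \times M^{ij}$ successively and exposes their outputs.
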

\begin{proof}
Intuitively, we transform an arbitrary game graph $G = (V, E, \beta)$ with static hierarchical
\emph{information} into one with hierarchical \emph{observation}, 
by taking the synchronised product of~$G$ with automata 
that signal to each player~$i$ the
observations of all players~$j \succeq i$. 
We shall see that this
preserves the solutions to the distributed synthesis problem, 
for any winning condition on $G$. 

To make the construction precise, let us fix a pair $i \preceq j$ of players, and consider
the Moore machine $M^{ij} = (M, m_0, \mu, \nu)$ as in the proof of
Lemma~\ref{lem:translate-observation}, which translates
the observations $\beta^i( \pi )$ into 
$\beta^j( \pi )$, for every history $\pi$ in $G$.  
We define the product $G \times M^{ij}$ as a new game graph with the
same sets of actions as~$G$, and the same observation alphabets $(B^k)_{k \neq i}$,
except for player~$i$, for which we expand the alphabet to 
$B^i \times B^j$ to also include observations of player~$j$. 
The new game is over positions in 
$V \times M$ with moves $( (v, m), a, (v', m') ) $ if $(v, a, v') \in E$ and 
$\mu( m, \beta^i( v ) ) = m' $.
The observations for
player~$i$ are given by $\beta^i( v, m ) = ( \beta^i( v ), \nu ( m ))$, 
whereas they remain unchanged for all other 
players~$\beta^k( v, m ) = \beta^k( v )$, for all $k \neq i$.

The obtained product graph is equivalent to the original game graph $G$, in the
sense that they have the same tree unravelling, and
the additional components in the observations of player~$i$
(representing observations of player~$j$, given by the Moore
machine~$M^{ij}$) are already determined by his own observation
history, so player~$i$ 
cannot distinguish any pair of histories in the new game that he could
not distinguish in the original game. 
Accordingly, the strategies on the expanded game graph $G \times
M^{ij}$ correspond to
strategies on~$G$, such that the 
outcomes of any distributed strategy are preserved.
In particular, for any winning condition over $G$, a distributed strategy 
is winning in the original game if, and only if, it is winning in the
expanded game $G \times M^{ij}$.
On the other hand, the (positional) observations of Player $i$ in the
expanded game determine the observations of Player $j$.

By applying the transformation for each pair $i \preceq j$ of players
successively, we obtain a game graph that yields hierarchical
observation. Moreover, every winning condition on~$G$ induces a winning condition 
on (the first component of positions in)~$G \times M^{ij}$ 
such that the two resulting games have the same 
winning strategies.
Due to Theorem~\ref{thm:KV01}, we can thus conclude that,
under $\omega$-regular winning conditions, the synthesis problem is
finite-state solvable for games with static
hierarchical information.
\qed
\end{proof}

To decide whether a given game graph yields static hierarchical
information, the collection of Moore machines constructed according to 
Lemma~\ref{lem:translate-observation}, for all players~$i,j$, may be
used as a witness. However, this yields an inefficient procedure,
as the determinisation of a functional transducer 
involves an exponential blowup; precise bounds for such translations
are given by~\cite{WeberKlemm95}.
More directly, one could
verify that each of the transductions $A_G^{ij}$ relating observation
histories of Players~$i,j$, as defined in the proof of 
Lemma~\ref{lem:translate-observation},
is functional. This can be done in
polynomial time using, e.g., the procedure described in~\cite{BealEtAl00}. 

We can give a precise bound in terms of nondeterministic complexity.

\begin{lemma}\label{lem:deciding-hierarchical-static}
The problem of deciding whether a game yields static hierarchical
information is \textsc{NLogSpace}-complete.
\end{lemma}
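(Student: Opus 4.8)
The plan is to combine the operational characterisation behind Lemma~\ref{lem:translate-observation} with an analysis of which total orders can witness the hierarchy. First I would note that a game graph yields static hierarchical information if, and only if, the relation $\sqsubseteq$ on players defined by ``$i \sqsubseteq j$ iff $\beta^i(\pi)=\beta^i(\pi')$ implies $\beta^j(\pi)=\beta^j(\pi')$ for all histories $\pi,\pi'$'' --- equivalently, iff the relation $T^{ij}$ of the proof of Lemma~\ref{lem:translate-observation} is functional --- has every pair of players comparable. Indeed $\sqsubseteq$ is always reflexive and transitive; if it is total, linearising its equivalence classes yields a total order $\preceq$ with $i \preceq j \Rightarrow i \sqsubseteq j$, while conversely any witnessing order forces comparability of every pair. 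Hence the complementary property reads: there exist players $i \ne j$ such that \emph{neither} $T^{ij}$ \emph{nor} $T^{ji}$ is functional.

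For membership I would show this complementary property is in \NLOGSPACE. Non-functionality of $T^{ij}$ amounts to a reachability condition in the synchronous product $G \times G$: there are two finite paths from $(v_0,v_0)$ that agree at every step on player~$i$'s observation but differ at some step on player~$j$'s observation. (Since $G$ has no dead ends, every finite path from $v_0$ is already a history, so no further liveness condition is needed.) A nondeterministic log-space procedure guesses the pair $(i,j)$, then guesses a witnessing double path for $T^{ij}$, and then one for $T^{ji}$, keeping only a bounded number of positions in memory. Thus the complement is in \NLOGSPACE, and as nondeterministic logarithmic space is closed under complement (Immerman--Szelepcs\'enyi), so is the original problem.

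For hardness I would reduce directed graph reachability. Given a digraph $H$ with vertices $s,t$, construct a two-player game graph $G$ containing a copy of $H$ carrying a binary tag that is set at $s$ (branching into an $L$-copy and an $R$-copy) and then propagated along edges; player~$1$ observes the current vertex only, while player~$2$ observes the current vertex and, in addition, the tag whenever that vertex is $t$. Then along any $s$--$t$ path the $L$- and $R$-runs are indistinguishable to player~$1$ but become distinguishable to player~$2$ at $t$, so $T^{12}$ is non-functional exactly when $t$ is reachable from $s$; if $t$ is unreachable, player~$1$'s observation history always determines the vertex sequence and hence player~$2$'s observation history. A small gadget off the initial position --- two successors that look alike to player~$2$ but distinct to player~$1$, each carrying a self-loop --- makes $T^{21}$ non-functional unconditionally and without affecting $T^{12}$. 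Consequently $G$ yields static hierarchical information if, and only if, $t$ is \emph{not} reachable from $s$; the map $(H,s,t) \mapsto G$ is log-space computable, which (again using closure of \NLOGSPACE\ under complement) gives \NLOGSPACE-hardness, and since two-player instances are a special case, hardness transfers to the general problem.

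The step I expect to require the most care is the hardness construction: one must choose the observation alphabets so that they stay finite and so that neither the tag mechanism nor the auxiliary $T^{21}$-gadget (nor the self-loops added to remove dead ends) creates, in the unreachable case, a pair of histories with equal player-$1$ observation but different player-$2$ observation. These checks are routine but essential for correctness.
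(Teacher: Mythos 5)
Your proposal is correct and follows essentially the same route as the paper: membership by a nondeterministic logspace check of the complement (guess a pair of players and, for each direction, a pair of histories witnessing non-inclusion of information sets) followed by Immerman--Szelepcs\'enyi, and hardness by a logspace reduction from a reachability-type \NLOGSPACE-complete problem in which reaching the target triggers asymmetric observations that break the hierarchy. The paper uses NFA emptiness and reveals a private bit to \emph{each} player at an accepting state (breaking both orders at once) where you use graph reachability plus a separate gadget to kill $T^{21}$ unconditionally, but these are only cosmetic differences.
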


\begin{proof}
The complement problem\,---\,of verifying
that for a given game there exists a pair of players $i,j$ 
that cannot be ordered in either way\,---\,%
is solved by the following nondeterministic procedure: 
Guess a pair~$i,j$ of players, then check that $i \not\preceq j$, by
following nondeterministically a pair of histories~$\pi \sim^i \pi'$, 
such that $\pi \not\sim^j \pi'$; symmetrically, check that $j
\not\preceq i$. 
The procedure requires only logarithmic space for maintaining
pointers to four positions while keeping track of the
histories. 
Accordingly, the complement problem is in $\NLOGSPACE$, and since the
complexity class is closed under complementation (\cite{Immerman88,Szelepcsenyi88}), 
our decision problem of 
whether a game yields static hierarchical
information also belongs to $\NLOGSPACE$.

For hardness, we reduce the
emptiness problem for nondeterministic finite automata, known to be $\NLOGSPACE$-hard 
(\cite{Jones75}),  
to the problem of verifying that the following game for two players
playing against Nature on the graph of the automaton
yields hierarchical information:
Nature chooses a run in the automaton, the players can only observe the input
letters, unless an accepting state is reached; 
if this happens, Nature sends to each
player privately one bit, which violates the condition of
hierarchical information. Thus, the game has hierarchical
information if, and only if, no input word is accepted. 
\qed 
\end{proof}

\subsection{Signals and game transformations}

Functions that return information about the current history, 
such as those constructed in the proof of
Lemma~\ref{lem:translate-observation} will be a useful tool in our 
exposition, especially when the information can
be made observable to certain players without changing the game.

Given a game graph~$G$, a~\emph{signal} is a function defined on the
set of histories in~$G$, or on the set of observation histories of
some player~$i$.
We say that a signal~$f: \Hist( G ) \to \Sigma$ is
information-consistent for player~$i$ if any two histories that are
indistinguishable to~player~$i$ have the same image under~$f$; 
in particular, strategies are information-consistent signals.
A finite-state signal is one  
implemented by a Moore machine. Any 
finite-state signal $f: \Hist( G ) \to \Sigma$ can also be 
implemented by a Moore machine $M^i$ over the observation alphabet $B^i$, 
such that that $M ( \pi ) = M^i( \beta^i ( \pi ))$ for every history~$\pi$.
The \emph{synchronisation} of $G$ with a finite-state signal $f$ is
the expanded game graph $(G, f)$ obtained by taking the synchronised
product $G \times M$, as described in the proof of
Lemma~\ref{lem:translate-observation}. 
In case~$f$ is
information-consistent for player~$i$, it can be made
\emph{positionally observable} to this player,
without changing the game essentially. 
Towards this, we consider the game graph
$(G, f^i)$ that expands $(G, f)$ with an additional observation
component~$f^i( v )$ for player~$i$ at every position~$v$, 
such that $f( \pi ) = f^i( v )$ 
for each history $\pi$ that ends at $v$.   
The game graph $(G, f^i)$ is 
\emph{finite-state equivalent} to~$G$, in the sense 
that every strategy for $G$ maps via finite-state transformations to a strategy for $(G, f^i)$ 
with the same outcome and vice versa. 
Indeed, any strategy for $G$ is readily
a strategy with the same outcome for~$(G, f^i)$  and, conversely, 
every strategy profile~$s$ in~$(G, f^i)$ can be 
synchronised with the Moore machines implementing the signals~$f^i$
 for each player~$i$, to yield a finite-state 
strategy profile~$s'$ for $G$ with the same outcome as~$s$.
In particular, the transformation preserves solutions to the
finite-state synthesis problem under any winning condition.

\section{Dynamic Hierarchies}

In this section,
we maintain the requirement on the information sets
of players to be totally ordered at every
history. However, in
contrast to the case of static hierarchical information, we allow 
the order to depend on the history and to change dynamically along a play.  
Figure~\ref{sfig:dynamic-hierarchical} shows an example of such a situation: 
at the history reaching $v_2$, player~$1$ is more informed than player~$2$, however, the order switches 
when the play proceeds to position~$v_4$, for instance.

\begin{definition} 
A history~$\pi$ in a game yields \emph{hierarchical information} if 
the information sets $\{ P^i( \pi)~|~ i \in N \}$ are 
totally ordered by inclusion.
A game graph yields \emph{dynamic} hierarchical information
if every history yields hierarchical information.
\end{definition}

We first observe that, for every finite game, 
the set of histories that yield hierarchical information is
regular. We detail here the construction of an automaton for the 
complement language, which will also be of later use.

\begin{lemma}\label{lem:automaton-hierarchical-history}
For every finite game graph~$G$, we can construct a nondeterministic finite
automaton that accepts the histories in~$G$ that do not yield
hierarchical information. 
If~$G$ has $n$ players and $|V|$ positions, 
the number of automaton states is at most $2n^2|V|^2$.
\end{lemma}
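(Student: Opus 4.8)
The plan is to reduce the property ``$\pi$ does not yield hierarchical information'' to a combinatorial witness that a nondeterministic automaton can guess while reading $\pi$ position by position. Unfolding the definition, the information sets $\{ P^i(\pi) \mid i \in N \}$ fail to be totally ordered by inclusion precisely when some unordered pair of players $i,j$ has incomparable information sets, that is, $P^i(\pi) \not\subseteq P^j(\pi)$ \emph{and} $P^j(\pi) \not\subseteq P^i(\pi)$. Each non-inclusion can be rewritten existentially: $P^i(\pi) \not\subseteq P^j(\pi)$ holds if, and only if, there is a history $\pi_1$ with $\beta^i(\pi_1) = \beta^i(\pi)$ but $\beta^j(\pi_1) \neq \beta^j(\pi)$, i.e.\ the two histories agree on player~$i$'s full observation sequence yet differ on player~$j$'s observation at some step. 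This is exactly a branching of the transduction $A_G^{ij}$ from the proof of Lemma~\ref{lem:translate-observation} occurring along $\pi$.

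The automaton reads histories over the alphabet $V$. On input $\pi = v_0 v_1 v_2 \dots$, it first guesses an unordered pair $\{i,j\}$ and then runs two independent threads synchronously with the reading of $\pi$. Thread~A maintains a pointer $w$ to the current position of a guessed history $\pi_1$: whenever $\pi$ moves from $v_\ell$ to $v_{\ell+1}$, the thread moves $w$ to some $w'$ with $(w,a,w') \in E$ and $\beta^i(w') = \beta^i(v_{\ell+1})$, thus preserving the invariant that $w$ tracks a history $\pi_1$ with $\beta^i(\pi_1) = \beta^i(v_1 \dots v_\ell)$; at any step where $\beta^j(w) \neq \beta^j(v_\ell)$ it may switch to a distinguished ``satisfied'' state and stop tracking. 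Thread~B is symmetric, with the roles of $i$ and $j$ exchanged, and certifies $P^j(\pi) \not\subseteq P^i(\pi)$. The automaton accepts at the end of the input if, and only if, both threads are satisfied. No separate pointer to $\pi$ is needed, since $v_\ell$ is the letter just read; the threads may start at $v_0$, which is harmless because the observation at the initial position is discarded. For correctness one checks both directions: an accepting run yields, for the guessed pair, two histories of the same length as $\pi$ witnessing the two non-inclusions, hence incomparability; conversely, from an incomparable pair and its two witnessing histories one drives the guesses and the thread transitions accordingly. A state records the pair together with, for each thread, either a position of $G$ or the satisfied flag, giving at most $\binom{n}{2}(|V|+1)^2 \le 2 n^2 |V|^2$ states.

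The point needing most care is the bookkeeping inside the threads: player~$i$'s observation must be matched against $\pi$ for the entire length of the history — so a thread may never drop the $\beta^i$-agreement invariant — whereas the $\beta^j$-mismatch that certifies non-inclusion need occur only once, at a nondeterministically chosen step, after which the thread's pointer is irrelevant. Encoding this asymmetry in the transition relation, and verifying that the two threads are genuinely independent so that $\pi_1$ and $\pi_2$ need not coincide, is the part that has to be spelled out; the remaining verifications are routine.
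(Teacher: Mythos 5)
Your overall strategy coincides with the paper's: guess a pair of players and, while reading $\pi$, track two witness histories, one certifying $P^i(\pi)\not\subseteq P^j(\pi)$ and one certifying the symmetric non-inclusion, accepting when both are certified. The paper's automaton stores, per pair, two position pointers together with a persistent two-bit flag, for $4\cdot\tfrac{n(n-1)}{2}\cdot|V|^2 < 2n^2|V|^2$ states.

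There is, however, a genuine flaw in your encoding of the threads. You let a thread, upon observing the $\beta^j$-mismatch, ``switch to a distinguished satisfied state and stop tracking,'' and your state count $\binom{n}{2}(|V|+1)^2$ reflects exactly that. But the witness for $P^i(\pi)\not\subseteq P^j(\pi)$ must be a history $\pi_1$ of the \emph{full} length of $\pi$ with $\beta^i(\pi_1)=\beta^i(\pi)$ throughout; once the pointer is dropped, the automaton can no longer verify that the guessed prefix of $\pi_1$ extends to such a history. Incomparability of information sets is not monotone under extension of the play: the position $w$ reached by $\pi_1$ at the mismatch step may have no successor whose $\beta^i$-observation matches that of the next position of $\pi$, in which case player~$i$'s confusion is resolved one round later and the full input may well yield hierarchical information --- yet your automaton would accept it. You actually state the correct invariant yourself (``a thread may never drop the $\beta^i$-agreement invariant'') and then contradict it by declaring the pointer irrelevant after the mismatch. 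The repair is precisely the paper's bookkeeping: keep each pointer alive until the end of the input and record the mismatch in a separate sticky bit, so a thread has $2|V|$ states rather than $|V|+1$; the total $\binom{n}{2}(2|V|)^2 = 2n(n-1)|V|^2$ still meets the claimed bound. The remainder of your argument --- the reduction to two existential witnesses, the independence of the two threads, and the treatment of the initial position --- matches the paper.
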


\begin{proof} 
Let us fix a game graph~$G$. 
A history~$\pi$ in~$G$ fails to yield hierarchical information if there
are two players with incomparable information sets at~$\pi$. 
To verify this, we construct an automaton
that chooses nondeterministically 
a pair~$i,j$ of players, then, while reading the input~$\pi$, 
it guesses a pair $\pi'$,~$\pi''$ of histories  
such that $\pi' \sim^i\pi$ and $\pi'' \sim^j \pi$ and updates two flags
indicating whether $\pi' \not\sim^i \pi''$ or $\pi' \not\sim^j \pi''$; 
the input is accepted if both flags are set. Hence, a 
word $\pi \in V^*$ that corresponds to a history in~$G$ is accepted if, and only if, 
the corresponding history does not yield hierarchical information.
\footnote{Notice that the automaton may also accept words that do not 
correspond to game histories; 
to avoid this, we can take the synchronised product with the game graph~$G$ 
and obtain an automaton that recognises precisely the set of histories 
that do not yield hierarchical information.}

In its states, the constructed automaton stores the indices of the two players~$i,j$, 
a pair of game positions to keep track of the witnessing histories $\pi'$ and $\pi''$, 
and a two-bit flag to record whether the current input prefix is distinguishable 
from $\pi'$ for player~$j$ or from~$\pi''$ for player~$i$. 
Clearly, it is sufficient to consider each pair of players only once, hence,  
the automaton needs at most $4\frac{n(n-1)}{2}|V|^2 $ states, that is, 
less than $2n^2|V|^2$.
\qed
\end{proof}

To decide whether a given game graph~$G$ 
yields dynamic hierarchical information, 
we may check whether the automaton described in 
Lemma~\ref{lem:automaton-hierarchical-history} accepts all histories 
in~$G$. However, more efficient than constructing this automaton, 
we can use a nondeterministic procedure 
similar to the one of 
Lemma~\ref{lem:deciding-hierarchical-static} 
to verify on-the-fly if there exists a history at which 
the information sets of two players are incomparable: 
guess two players~$i,j$
and three histories $\pi\sim^i\pi'$ and $\pi'' \sim^j \pi$, such that
$\pi' \not\sim^i \pi''$ and $\pi' \not\sim^j \pi''$. 
Obviously, the lower bound 
from~Lemma~\ref{lem:automaton-hierarchical-history} is preserved.

\begin{lemma}\label{lem:deciding-hierarchical-dynamic}
The problem of deciding whether a game graph yields dynamic hierarchical
information is \textsc{NLogSpace}-complete.
\end{lemma}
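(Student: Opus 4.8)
The plan is to mirror the argument for Lemma~\ref{lem:deciding-hierarchical-static}, adapting it to the fact that here the obstruction to a hierarchy must be exhibited at a single history rather than globally. For membership in \NLOGSPACE{} I would work with the complement problem: decide whether there is \emph{some} history at which two players have incomparable information sets. As indicated in the discussion preceding the statement, a nondeterministic machine guesses a pair $i,j$ of players and then reads a history $\pi$ while simultaneously guessing, letter by letter together with the moves producing them, two further histories $\pi'$ and $\pi''$, maintaining the invariants $\pi \sim^i \pi'$ and $\pi \sim^j \pi''$ and two one-bit flags recording whether $\pi' \not\sim^i \pi''$ and whether $\pi' \not\sim^j \pi''$ have already become true. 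When the input ends the machine accepts iff both flags are set; by the same bookkeeping as in the proof of Lemma~\ref{lem:automaton-hierarchical-history}, this exactly certifies $P^i(\pi) \not\subseteq P^j(\pi)$ and $P^j(\pi) \not\subseteq P^i(\pi)$. The only data stored are the two player indices ($O(\log n)$ bits), three position pointers tracking the current positions of $\pi, \pi', \pi''$ ($O(\log|V|)$ bits each), and the two flags, so the procedure runs in logarithmic space. Since \NLOGSPACE{} is closed under complement (\cite{Immerman88,Szelepcsenyi88}), the original problem lies in \NLOGSPACE.

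For the lower bound I expect the reduction from the emptiness problem for nondeterministic finite automata used in the proof of Lemma~\ref{lem:deciding-hierarchical-static} to transfer essentially verbatim. Given such an automaton, build the two-player game against Nature on its transition graph in which Nature resolves the nondeterministic choices, both players observe only the input letters, and upon first reaching an accepting state Nature branches into four successors, sending player~$1$ and player~$2$ one private bit each, after which the play loops at a sink so that the no-dead-end convention is respected. If the automaton accepts no word, accepting states are never reached, both players see exactly the same thing along every history, so all information sets coincide and are trivially totally ordered, and the game yields dynamic hierarchical information; if some word is accepted, the corresponding history admits all four bit-combinations, and the histories differing only in player~$1$'s bit, respectively only in player~$2$'s bit, witness incomparable information sets at that history, so dynamic hierarchical information fails. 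The reduction is computable in \LOGSPACE, and emptiness of nondeterministic finite automata is \NLOGSPACE-hard (\cite{Jones75}), which gives \NLOGSPACE-hardness.

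I do not expect a serious obstacle here; the one point that needs care — and the only genuine difference from Lemma~\ref{lem:deciding-hierarchical-static} — is that the three guessed histories $\pi, \pi', \pi''$ must be read in lock-step so that incomparability is certified \emph{at the common history~$\pi$}, rather than via two independent checks of $i \not\preceq j$ and $j \not\preceq i$ as in the static case; this is precisely why three synchronised position pointers (instead of four) suffice and why the two flags are attached to the pair $(\pi', \pi'')$. One should also verify that the game constructed in the hardness reduction meets the requirements of the preliminaries (finite graph, no dead ends), which is exactly what the terminal self-loop takes care of.
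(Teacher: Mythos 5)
Your proposal is correct and follows essentially the same route as the paper: the paper likewise decides the complement in \NLOGSPACE{} by guessing two players and three synchronised histories $\pi \sim^i \pi'$, $\pi \sim^j \pi''$ with flags for $\pi' \not\sim^i \pi''$ and $\pi' \not\sim^j \pi''$, invokes closure of \NLOGSPACE{} under complementation, and reuses the NFA-emptiness reduction from the static case for hardness. Your explicit remarks — that three lock-step pointers replace the four independent ones of the static case, and that the four private-bit successors yield incomparability at a single history so the same reduction still applies — are exactly the details the paper leaves implicit with ``the lower bound is preserved.''
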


\begin{figure}
  \begin{center}
  \subfigure[dynamic hierarchical information]{ 
    \label{sfig:dynamic-hierarchical}
    \begin{tikzpicture}[xscale=0.8]
      
      \node[state3P, initial above] 				(0) at (3,4)         {$\gow $ \nodepart{two} $v_0$ \nodepart{three} $\gow$};
      
      \node[state3P]              (00) at (1.5,2.5) {$ \gow$ \nodepart{two} $v_1$ \nodepart{three} $\gow$};
      \node[state3P]              (01) at (4.5,2.5) {$\gob $ \nodepart{two} $v_2$ \nodepart{three} $\gow$};
            
      \node[state3P]              (000) at (1.5,1) {$\gow$ \nodepart{two} $v_3$ \nodepart{three} $\gow$};
      
      \node[state3P]              (010) at (3.5,1) {$\gow$ \nodepart{two} $v_4$ \nodepart{three} $\gob$};
      \node[state3P]              (011) at (5.5,1) {$\gow$ \nodepart{two} $v_5$ \nodepart{three} $ \goc$};

      \path (0) edge (00) edge (01);
      \path (00) edge (000);
      \path (01) edge (010) edge (011);
      \draw[-, dotted] (000) -- (1.5,.3);
      \draw[-, dotted] (010) -- (3.5,.3);
      \draw[-, dotted] (011) -- (5.5,.3); 
    \end{tikzpicture}
    \hspace*{.3cm}
  }
  \subfigure[recurring hierarchical information]{ 
    \label{sfig:recurring-hierarchical}
      \begin{tikzpicture}[xscale=0.8]

        \node[state3P, initial above] 				(0) at (3,4)         {$\gow $ \nodepart{two} $v_0$ \nodepart{three} $\gow$};
        
        \node[state3P]              (00) at (1.2,2.5) {$\gow$ \nodepart{two} $v_1$ \nodepart{three} $\gob$};
        \node[state3P]              (01) at (3,2.5) {$\gow$ \nodepart{two} $v_2$ \nodepart{three} $\gow$};
        \node[state3P]              (02) at (5,2.5) {$\gob $ \nodepart{two} $v_3$ \nodepart{three} $\gow$};

        \node[state3P]              (000) at (2.1,1) {$\gow$ \nodepart{two} $v_4$ \nodepart{three} $\gow$};

        \node[state3P]              (010) at (5,1) {$\gob$ \nodepart{two} $v_5$ \nodepart{three} $\gob$};

        \path (0) edge (00) edge (01) edge (02);
        \path (00) edge (000);
        \path (01) edge (000);
        \path (02) edge (010);

        \path (000) edge [bend left=100, looseness=1.8] (0);
        \path (010) edge [bend right=90, looseness=1.7] (0);
        \node[draw=none] (x) at (1.5,.3){};
        
   \end{tikzpicture}
 }
 
\end{center}
\caption{More patterns of hierarchical information}
\label{fig:hierarchieS}
\end{figure}
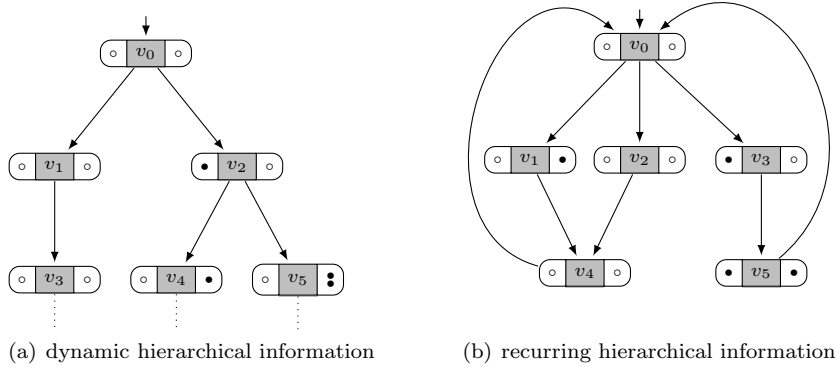

In the remainder of the section, we show that, 
 under the more liberal condition of dynamic hierarchical information, 
distributed games are still
decidable. 

\begin{theorem}\label{thm:dynamic-decidable} 
For games with dynamic hierarchical information, the synthesis problem 
is finite-state solvable.  
\end{theorem}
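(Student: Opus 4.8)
The plan is to reduce games with dynamic hierarchical information to games with hierarchical observation, and then invoke Theorem~\ref{thm:KV01} (equivalently, Theorem~\ref{thm:static-decidable}). The reduction "sorts" the players according to their current degree of informedness.

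First I would record an elementary but crucial consequence of perfect recall together with the nesting hypothesis: at every history, each player \emph{knows his own rank} in the informedness order, as well as the \emph{identity of the set of players that are strictly more informed} than he is. The point is that the relation $P^i(\pi) \subseteq P^j(\pi)$ is $\sim^i$-invariant: if $\pi'\sim^i\pi$ then $\pi'\in P^i(\pi)\subseteq P^j(\pi)$, so $P^j(\pi')=P^j(\pi)$ and the inclusion persists; combined with the dichotomy that the two information sets are always comparable, this shows that for each $j$ player~$i$ can tell whether $j$ is at least as informed as he is. Dually, whenever $P^i(\pi)\subseteq P^j(\pi)$, the observation history $\beta^j(\pi)$ is constant over all histories that player~$i$ considers possible, so player~$i$ can reconstruct it.

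Second, following the recipe of Lemma~\ref{lem:translate-observation}, I would turn these facts into finite-state signals: for each ordered pair $i,j$ there is a Moore machine that reads $\beta^i(\pi)$ and outputs the current informedness relation between $i$ and $j$ together with $\beta^j(\pi)$ whenever $P^i(\pi)\subseteq P^j(\pi)$ (and a blank symbol otherwise) — obtained by determinising and trimming the word automaton $A_G^{ij}$, which is deterministic in the $i$-component exactly at the states where $i$ dominates $j$. Synchronising $G$ with these machines, as in the discussion of signals and game transformations, makes the current ranking permutation and the domination relation positionally observable to every player without altering the game's solutions. On this expanded graph one can then build the \emph{sorted game} $\hat G$, whose positions additionally carry the ranking permutation and in which the role of "player~$r$" is played by whoever currently occupies rank~$r$, observing the pair formed by the set of players above him and his own information content. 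Using the finite automaton of Lemma~\ref{lem:automaton-hierarchical-history} to stay inside the dynamically hierarchical part, and the nesting argument once more, one checks that the observation of rank~$r$ determines that of rank~$r+1$, so $\hat G$ yields hierarchical observation for the fixed order $1\prec 2\prec\dots\prec n$. It then remains to transfer distributed (finite-state) winning strategies back and forth between $\hat G$ and $G$ and to appeal to Theorem~\ref{thm:KV01}.

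The main obstacle is precisely this last transfer. A player's rank may rise and fall repeatedly along a play, so a real player does not \emph{a priori} have access to the full observation history of the rank-slot he currently occupies: during the phases in which he was \emph{less} informed than the then-occupant of that slot, he could not track what that slot observed. Making the translation of strategies well defined and finite-state therefore requires organising the product so that, at each point where two players exchange rank, the one becoming less informed is handed exactly the bounded summary he needs in order to resume his new slot, and verifying that this supplementary signal is information-consistent — i.e., something he is entitled to know, which is delicate because the two information sets involved in a rank swap need not literally coincide at the swap. I expect establishing the correctness and finite-state computability of these hand-offs to be the technical heart of the proof; the surrounding apparatus is a routine elaboration of the constructions already developed for Theorems~\ref{thm:KV01} and~\ref{thm:static-decidable} and, if the sorting construction turns out to be awkward, the alternative is to adapt the widening/narrowing automata method of Pnueli--Rosner and Kupferman--Vardi directly, performing the narrowing step for each player along the dynamically changing order rather than a fixed one.
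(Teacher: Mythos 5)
Your overall strategy coincides with the paper's: show that each player's rank in the informedness order is an information-consistent finite-state signal (this is Lemma~\ref{lem:rank-observable}, and your $\sim^i$-invariance argument for the inclusion $P^i(\pi)\subseteq P^j(\pi)$ is exactly the one used there), make the rank positionally observable, and then build a ``sorted'' game in which a fixed roster of rank-indexed players\,---\,the paper calls them shadow players\,---\,inherits the observations and actions of whoever currently occupies each rank, so that static hierarchical information holds and Theorem~\ref{thm:static-decidable} applies. You also correctly locate the one place where this plan threatens to break: when two players swap ranks along a move, the shadow who ends up at the lower rank has, before the swap, been tracking the formerly \emph{less}-informed player, and therefore misses information that the player he must now impersonate already possessed; so the naive sorted game does not satisfy the key invariant that each shadow's information set equals that of the actual player of the same rank.

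The gap is that you stop at diagnosing this obstacle (``I expect establishing the correctness and finite-state computability of these hand-offs to be the technical heart of the proof'') without supplying the construction, and that construction is precisely the non-routine content of the theorem. The paper resolves it in Lemma~\ref{lem:nocross} by a \emph{cross-free} transformation: for each pair $i,j$ one forms the finite-state signal $\lambda_j^i(\pi)$, recording what player~$j$ knows about player~$i$'s current observation, inserts an intermediary position into every move $(v,a,w)$, and delivers there to player~$i$ the collection $\{\lambda_j^i(w) : i \preceq_w j\}$ as a half-step lookahead. The issue you flag as delicate\,---\,whether the hand-off is something the recipient is ``entitled to know''\,---\,is indeed the crux, and the paper's answer is that this lookahead signal is \emph{not} in general information-consistent for its recipient at the intermediary position; this is harmless because the intermediary position has a unique successor, so no player can act on the extra information before reaching the target, where it becomes legitimately deducible from his own observations. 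Only after this transformation does the induction showing that shadow player $\rank^i(\pi)$ has exactly the information set $P^i(\pi)$ go through, and with it the two-way, finite-state transfer of winning strategies. Without an argument of this kind, your proof is incomplete at its central step.
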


For the proof, we transform an arbitrary game~$\calG$ with 
dynamic hierarchical information into one with static
hierarchical information, among a different set of $n$
\emph{shadow} players $1', \dots, n'$, where
each shadow player~$i'$ 
plays the role of the $i$-most informed player in the original
game, in a sense that we will make precise soon. 
The information sets of the shadow players follow their nominal
order, that is, if $i < j$ then $P^{i'}( \pi ) \subseteq P^{j'}( \pi
)$. 
The resulting shadow game inherits the graph structure of the orginal game,
and we will ensure that, for every history~$\pi$,
\begin{enumerate}[(i)]

\item each shadow player~$i'$ has the same
information (set) as the $i$-most informed actual player, and
\item each shadow player~$i'$ has the same
choice of actions as the $i$-most informed actual player.
\end{enumerate}
This shall guarantee that the shadow game preserves 
the winning status of the original game. 

The construction proceeds in two phases. 
Firstly, we expand the game graph~$G$ so that 
the correspondence between actual and shadow players does not depend
on the history, but only on the current position. This is done
by synchronising~$G$ with a finite-state machine that
signals to each player his rank in the information hierarchy at the
current history. 
Secondly, we
modify the game graph, where the shadow-player correspondence is recorded as a
positional attribute,
such that the observation of each player 
is received by his shadow player, at every position; similarly,   
the actions of each player are transferred to his shadow player.
Finally, we show how finite-state 
winning strategies for the shadow game can be re-distributed 
to the actual players to yield a winning profile of finite-state
winning strategies for the original game.

\subsection{Information rank signals}

For the following, let us fix a game $\calG$ with dynamic hierarchical information with
the usual notation. For a history~$\pi$, we write $\preceq_\pi$ for
the total order among players induced by the inclusions between their information sets
at~$\pi$.
To formalise the notion of an~$i$-most informed player, 
we use the shortcut $i \approx_\pi j$ to denote that $i \preceq_\pi j$ and 
$j \preceq_\pi i$; 
likewise, we write $i \prec_\pi j$ to denote that $i \preceq_\pi j$ and not $j
\preceq_\pi i$. 

Then, the \emph{information rank} of player~$i$ on the game graph $G$ is a signal $\rank^i: \Hist(
G ) \to N$ defined by
\begin{align*}
\textrm{rank}^i( \pi ) := |\{ j \in N ~|~ 
  j \prec_\pi i \text{ or } 
  ( j < i \text{ and } j \approx_\pi i )\, \}|.
\end{align*}
Likewise, we define the \emph{order} of player~$i$ \emph{relative} to
player~$j$ as a Boolean signal
 $\preceq^i_j: \Hist( G ) \to \{0, 1\}$ 
with $\preceq^i_j( \pi ) = 1$ if, and only if, $i \preceq_\pi j$.

\begin{lemma}\label{lem:rank-observable}
The  information rank of each player~$i$ and his order relative to
any player~$j$  are finite-state signals 
that are information-consistent to player~$i$. 
\end{lemma}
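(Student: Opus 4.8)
The plan is to establish two things for each player~$i$: that the signals $\rank^i$ and $\preceq^i_j$ are information-consistent for player~$i$, and that they are finite-state, i.e.\ implementable by a Moore machine reading $\beta^i$. I would treat information-consistency first, since it is the conceptual heart. Fix two histories $\pi \sim^i \pi'$. The key observation is that, because the game yields dynamic hierarchical information, the information set $P^i(\pi)$ sits in a chain with every other $P^j(\pi)$, so the relation $j \preceq_\pi i$ is equivalent to $P^j(\pi) \subseteq P^i(\pi)$, and $j \approx_\pi i$ to $P^j(\pi) = P^i(\pi)$. Since $\pi \sim^i \pi'$ means $P^i(\pi) = P^i(\pi')$, it suffices to argue that for each $j$, the inclusion $P^j(\pi) \subseteq P^i(\pi)$ holds if and only if $P^j(\pi') \subseteq P^i(\pi')$, and similarly for equality; this then gives $\preceq_\pi{\restriction}(\{i\}\times N) = \preceq_{\pi'}{\restriction}(\{i\}\times N)$, whence $\rank^i(\pi) = \rank^i(\pi')$ and $\preceq^i_j(\pi) = \preceq^i_j(\pi')$. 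For this I would use the fact that the information sets are \emph{nested intervals} in the chain: $P^j(\pi) \subseteq P^i(\pi)$ iff $\pi \in P^j(\pi)$ forces comparability the right way; more carefully, I would show $P^j(\pi)\subseteq P^i(\pi)$ is equivalent to the statement ``every $\rho \sim^j \pi$ also satisfies $\rho \sim^i \pi$,'' and since $\pi\sim^i\pi'$, the set $P^i(\pi)$ is literally the same set, so the two conditions on $\pi$ and $\pi'$ coincide once one checks $P^j(\pi)$ and $P^j(\pi')$ relate correctly to the common set $P^i(\pi)$. The cleanest route: by dynamic hierarchy, $P^i(\pi)$ and $P^j(\pi)$ are comparable, and $P^i(\pi')=P^i(\pi)$, so $P^j(\pi)$ is comparable to $P^i(\pi')$ as well; one then does a short case analysis showing the direction and strictness of the comparison is determined by whether $\pi'$ itself lies in $P^j(\pi)$ together with the length-synchrony of histories (our model is synchronous and of perfect recall). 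This is the step I expect to be the main obstacle — making the nesting argument fully rigorous without slipping, since the naive claim ``$\sim^i$-equivalent histories have the same $\preceq_\pi$'' is false in general and only the restriction to pairs involving~$i$ survives.

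Once information-consistency is in hand, finite-statavailability follows from general machinery already in the excerpt. The relation $T^{ij}$ between observation histories of players~$i$ and~$j$, or rather the whole tuple of such relations, is recognised by the game graph viewed as a finite-word automaton over the product observation alphabet $\prod_{k} B^k$; determinising and trimming, as in the proof of Lemma~\ref{lem:translate-observation}, yields a deterministic finite automaton whose state, after reading a prefix of a play, records exactly the set of positions the play could currently be in as seen by each player, equivalently the tuple $(P^k(\pi))_{k\in N}$ up to the relevant finite abstraction. From that state one can read off, purely combinatorially, whether $P^j(\pi) \subseteq P^i(\pi)$ for each~$j$ — hence the values $\rank^i(\pi)$ and $\preceq^i_j(\pi)$. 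This gives a Moore machine over the full observation profile computing these signals; by information-consistency for player~$i$ (just proved), the output depends only on $\beta^i(\pi)$, so by the remark in the ``Signals and game transformations'' subsection the signal can equivalently be implemented by a Moore machine over the input alphabet $B^i$ alone. That is exactly the assertion that $\rank^i$ and $\preceq^i_j$ are finite-state signals information-consistent for player~$i$, completing the proof.

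I would present it in the order: (1) reduce both signals to knowing, for each $j$, the pair $(P^j(\pi)\subseteq P^i(\pi),\, P^i(\pi)\subseteq P^j(\pi))$; (2) prove this pair is invariant under $\pi\sim^i\pi'$ using the dynamic-hierarchy nesting property and synchrony; (3) invoke the subset-construction automaton from Lemma~\ref{lem:translate-observation} to get a finite-state implementation over the full observation profile; (4) use information-consistency plus the ``signal over $B^i$'' remark to descend to a Moore machine reading only $\beta^i$. Steps (3)–(4) are routine given the earlier material; step (2) is where the care is needed.
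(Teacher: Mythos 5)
Your treatment of information-consistency is sound and is essentially the paper's argument: for $\pi \sim^i \pi'$ one shows that, for each $j$, the way $P^j(\pi)$ compares to the common set $P^i(\pi)=P^i(\pi')$ is preserved, using that whenever $P^i(\pi)\subseteq P^j(\pi)$ the witness $\pi'$ itself lies in $P^j(\pi)$, so that $P^j(\pi)=P^j(\pi')$, and handling the remaining case by comparability at $\pi'$. The paper phrases this as ``the set of players that count for the rank of~$i$ is the same at $\pi$ and at $\pi'$,'' but the substance is the same, including your (correct) warning that only comparisons involving~$i$ are preserved under~$\sim^i$.

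The gap is in your step (3). The determinised automaton you describe has as its state (a tuple of) sets of \emph{positions} at which the play could currently be from the point of view of each player, that is, the sets $\{\,v ~:~ v \text{ is the last position of some } \rho\sim^k\pi\,\}$. These sets do \emph{not} determine the inclusions between the information sets, which are sets of \emph{histories}: two distinct histories $\pi,\sigma$ of the same length may end at the same position with $\sigma\sim^j\pi$ but $\sigma\not\sim^i\pi$, in which case $P^i(\pi)\subsetneq P^j(\pi)$ while the corresponding position sets coincide; reading the order off such a state would report $i\approx_\pi j$ and hence a wrong rank (and a wrong value of $\preceq^i_j$). What is missing --- and what the paper's proof supplies --- is, for each pair $i,j$, a nondeterministic automaton that guesses a witness history $\pi'\sim^i\pi$ and carries along a \emph{flag} recording whether $\pi'$ has already been distinguished from~$\pi$ by player~$j$ at some earlier round; this flag is a property of the run, not of the current position, and it is exactly the extra state component (the same device as in Lemma~\ref{lem:automaton-hierarchical-history}) that makes the condition $j\prec_\pi i$ recognisable. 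Determinising these enriched automata and taking Boolean combinations yields the Moore machine over~$V$; from there your step (4) --- descending to a Moore machine over $B^i$ via the already-established information-consistency --- goes through exactly as in the paper.
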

\begin{proof} We detail the argument for the rank, 
  the case of relative order is similar and simpler.

Given a game $\calG$ as in the statement,
let us verify that the signal $\rank^i $ is information-consistent,
for each player~$i$. Towards this, consider two histories
$\pi \sim^i \pi'$ in $G$, 
and suppose that some player~$j$ does not count for the rank of $i$ at $\pi$, in the
sense that either $i \prec_\pi j$ or ($i \approx_\pi j$ and $i <
j$) --- in both cases, it follows that $\pi \sim^j \pi'$, hence 
$P^j( \pi ) = P^j( \pi' )$, which implies that $j$ does not count for
the rank of $i$ at $\pi'$ either. Hence, the set of players that
count for the rank of player~$i$ is the same at $\pi$ and at
$\pi'$, which means that $\rank^i( \pi ) = \rank^i( \pi' )$.

To see that the signal $\rank^i$ can be implemented by a
finite-state machine, we first build, for every pair $i, j$ of players,
a nondeterministic automaton $A_i^j$ that accepts the histories $\pi$ 
where $j \prec_\pi i$, by guessing a history $\pi' \sim^i \pi$ 
and verifying that $\pi' \not\sim^j \pi$. To accept the histories
that satisfy $i \approx_\pi j$, we take the product of the automata
$A_i^j$ and $A_j^i$ for $i \preceq_\pi j$ and $j \preceq_\pi i$ and accept if both
accept. Combining the two constructions allows us to describe, 
for every player~$j$, an automaton $A_j$ 
to recognise the set of histories at which~$j$ counts for $\rank^i( \pi )$.

Next, we determinise each of the automata $A_j$ and take appropriate
Boolean combinations to obtain a Moore machine $M^{i}$ with input alphabet $V$ and 
output alphabet $\pset{N}$, which upon reading a
history $\pi$ in $G$, outputs the set of players that count for 
$\rank^i( \pi )$. Finally we replace each set in the output of $M^{i}$
by its size to obtain a Moore machine
that returns on input $\pi \in V^*$, the rank of 
player~$i$ at the actual history~$\pi$ in $G$. 

As we showed that $\rank^i$ is an information-consistent
signal, we can conclude that there exists a Moore machine that inputs
observation histories $\beta^i( \pi )$ of player~$i$ and outputs
$\rank^i( \pi )$.
\qed
\end{proof}

One consequence of this construction is that we can view 
the signals~$\rank^i$ and~$\preceq^i_j$ as attributes of positions rather than
properties of histories. 
Accordingly, we can assume 
without loss of generality that the observations of each player~$i$ 
have an extra $\rank$ component taking values in~$N$ and that the symbol 
$j$ is observed at history~$\pi$ in this component if, and
only if, $\rank^i(\pi) = j$. When referring to the positional attribute
$\preceq^i_j$ at $v$, it is more convenient to write 
$i\preceq_v j$ rather than $\preceq_j^i$.
 
\subsection{No crossing}

As we suggested in the proof outline,
each player~$i$ and his shadow player, identified by the observable
signal~$\rank^i$, should be equally informed. 
To achieve this, we will let the observation of 
player~$i$ be received by his shadow, in every round of a play. 
However, since the rank of players, and hence the
identity of the shadow, changes with the history, 
an information loss can occur 
when the information order between two players, say~$1 \prec 2$ along a
move is swapped to become~$2 \prec 1$ in the next round. 
Intuitively, the observation received by player~$2$ after this move 
contains one piece of information that allows him to
catch up with~player~$1$, and another piece of information to
overtake~player~$1$. Due to their rank change along the move, the
players would now also change shadows. Consequently, 
the shadow of $1$ at the target position, who was
previously as (little) informed as player~$2$, just receives the
new observation of player~$1$, but he may miss the piece of
information that allowed player~$2$ to catch up (and which player~$1$
had). Figure~\ref{sfig:crossing} pictures such a situation.

\begin{figure}
\begin{center}
\subfigure[crossing: $1 \prec 2$ at $v_2$ to $2 \prec 1$ at $v_4$]{
  \label{sfig:crossing}
  \begin{tikzpicture}[xscale=.7]

\node[state3P, initial above] 				(0) at (6,8)         {$\circ$ \nodepart{two} $v_0$ \nodepart{three} $\circ$};

 \node[state3P]              (00) at (4,6) {$\circ$ \nodepart{two} $v_1$ \nodepart{three} $\circ$};
 \node[state3P]              (01) at (8,6) {$\bullet$ \nodepart{two} $v_2$ \nodepart{three} $\circ$};

  \node[state3P]              (000) at (4,4) {$\circ$ \nodepart{two} $v_3$ \nodepart{three} $\circ$};

 \node[state3P]              (010) at (7,4) { $\circ$\nodepart{two} $v_4$ \nodepart{three} $\bullet$};
 \node[state3P]              (011) at (9,4) {$\circ$ \nodepart{two} $v_5$ \nodepart{three} $\goc$}; 
 
 \path (0) edge (00) edge (01);
 \path (00) edge (000);
 \path (01) edge (010) edge (011);
 

\end{tikzpicture}
\hspace*{.3cm}
}
\hspace*{.3em}
\subfigure[half-step lookahead]{
\label{sfig:lookahead}
\begin{tikzpicture}[xscale=0.8]

\node[state3P, initial above] 				(0) at (6,8)         {$\circ$ \nodepart{two} $v_0$ \nodepart{three} $\circ$};

\node[state, rectangle, dashed] 				(0') at (5,7)         {$\gob,\gow|~~~~$};

 \node[state3P]              (00) at (4,6) {$\circ$ \nodepart{two} $v_1$ \nodepart{three} $\circ$};
 \node[state, rectangle, dashed]              (01') at (7,7) {$\gob,\gow|~~~~$}; 
 
 \node[state3P]              (01) at (8,6) {$\bullet$ \nodepart{two} $v_2$ \nodepart{three} $\circ$};
  \node[state, rectangle, dashed]              (000') at (4,5) {$~~~\gow|\gow~~~$};   
       
  \node[state3P]              (000) at (4,4) {$\circ$ \nodepart{two} $v_3$ \nodepart{three} $\circ$};

 \node[state3P]              (010) at (7,4) { $\circ$\nodepart{two} $v_4$ \nodepart{three} $\bullet$};
 \node[state3P]              (011) at (9,4) {$\circ$ \nodepart{two} $v_5$ \nodepart{three} $\goc$}; 

 \node[state, rectangle, dashed]              (010') at (7.3,5) {$~~~~|\gob,\goc$};
 \node[state, rectangle, dashed]              (011') at (8.7,5) {$~~~~|\gob,\goc$};

  \path (0) edge (0') edge (01') ;
 \path (0') edge (00);

 \path (01') edge (01);
 \path (00) edge (000');
  \path (000') edge (000);
  
 \path (01) edge (010') edge (011');
 \path (010') edge (010);
 \path (011') edge (011);

\end{tikzpicture}
}
\end{center}
\caption{Eliminating crossings}
 \label{fig:cross-free}
\end{figure}
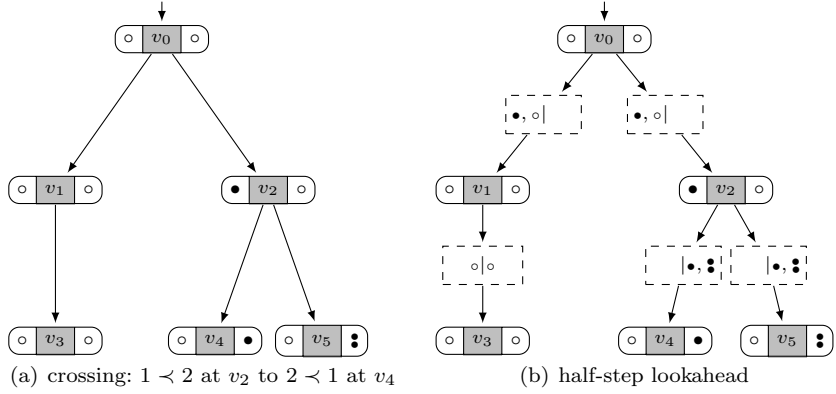

We describe a transformation to eliminate the crossings due to switches in the
information order, such that this artefact does no longer occur.
Formally, for a play $\pi$ in~a game, we say that Player $i$ and $j$ cross at stage $\ell$
if $P^i( \pi_\ell ) \subsetneq P^j( \pi_\ell)$
and $P^j( \pi_{\ell+1} ) \subsetneq P^i( \pi_{\ell+1})$.
We say that a game with dynamic hierarchical information
is \emph{cross-free} if there are no crossing players in any play.

\begin{lemma}\label{lem:nocross}
Every game with dynamic hierarchical information is finite-state
equivalent to a game that is cross-free.
\end{lemma}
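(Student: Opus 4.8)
The plan is to simulate each crossing by a \emph{half-step} of the play. I would split every move $(v, a, v') \in E$ of the game into two consecutive moves $v \to \langle v, v'\rangle \to v'$ through a fresh intermediate position $\langle v, v'\rangle$, at which the players receive a partial preview — a \emph{lookahead} — of the observation they are about to obtain at~$v'$. Using Lemma~\ref{lem:rank-observable} and the signal machinery, I may assume at the outset that the information order $\preceq_v$ at every history is already a positional attribute. To each intermediate position $\langle v, v'\rangle$ I assign a fixed order $\preceq_{\langle v, v'\rangle}$ that interpolates between $\preceq_v$ and $\preceq_{v'}$: a total preorder on~$N$ containing $\prec_v \cup \prec_{v'}$. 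Such a preorder exists — take the transitive closure of $\prec_v \cup \prec_{v'}$ and, if needed, a canonical linear extension of the induced order on its classes; the classes that remain incomparable are necessarily $\approx_v$- and $\approx_{v'}$-equal, so ordering them creates no conflict — and it satisfies the two inclusions $\prec_v \subseteq \preceq_{\langle v, v'\rangle}$ and $\prec_{\langle v, v'\rangle} \subseteq \preceq_{v'}$. The transitions out of intermediate positions use a single dummy action, so strategies play no genuine role there, and a winning condition on the original game is transferred by declaring a play of the new graph winning exactly when the play obtained by erasing its intermediate positions is winning; for an $\omega$-regular condition given by a colouring this is again $\omega$-regular.

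The two inclusions are exactly what makes the new game cross-free. Since $\prec_v \subseteq \preceq_{\langle v, v'\rangle}$, no pair of players can swap strictly along $v \to \langle v, v'\rangle$; since $\prec_{\langle v, v'\rangle} \subseteq \preceq_{v'}$, none can swap strictly along $\langle v, v'\rangle \to v'$; and applied to the next move the same inclusions rule out crossings at the seam between consecutive half-steps. What then remains — and this is where the real work lies — is to define the lookahead observations so that the information sets at an intermediate position $\langle v, v'\rangle$ are totally ordered exactly according to $\preceq_{\langle v, v'\rangle}$ (so that the new game still yields dynamic hierarchical information, now cross-free), while at the genuine positions $v'$ they remain precisely those of the original game. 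The guiding principle is that the lookahead of player~$i$ at $\langle v, v'\rangle$ should be a \emph{coarsening} of his forthcoming observation $\beta^i(v')$ — just enough of the $v'$-observations of the members of his $\preceq_{\langle v, v'\rangle}$-class to realise the interpolated order, and nothing more — so that when $\beta^i(v')$ is revealed in full at~$v'$, player~$i$'s observation history determines the very same information set as the corresponding history of the original game: the preview tells him nothing he is not about to learn anyway.

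Granting this, finite-state equivalence follows. A strategy profile for the original game lifts to the new graph by ignoring all lookaheads and playing the dummy action at every intermediate position, with the same outcomes under the natural identification of a play with its half-step expansion. Conversely, a finite-state strategy profile for the new graph is simulated on the original game by a profile of Moore machines that, upon reaching each genuine position, reconstruct the lookahead that would have been revealed one step earlier; this is possible precisely because that lookahead is a coarsening of the observation now in hand and because actions at intermediate positions have no effect on the transitions. The resulting profile is finite-state and has the same outcome, and since the new graph is finite and the data $\preceq_v$, $\preceq_{\langle v, v'\rangle}$ and the lookahead functions are effectively computable, the lemma follows.

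I expect the main obstacle to be exactly the construction of the lookahead observations together with the simultaneous verification of the two requirements: (i) that the information sets at intermediate positions are totally ordered by the prescribed interpolation $\preceq_{\langle v, v'\rangle}$, and (ii) that each lookahead is a mere coarsening of the upcoming observation, so that no player gains genuine information at any real position and the game — for every winning condition — is preserved up to finite-state recoding. The combinatorial facts about $\preceq_{\langle v, v'\rangle}$ and the cross-freeness of the resulting game are routine once those requirements are met, and the back-and-forth translation of strategies is then straightforward.
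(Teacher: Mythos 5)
Your skeleton coincides with the paper's: split every move through a fresh intermediate position, hand out a ``half-step lookahead'' there, transfer the winning condition by erasing the intermediate positions, and translate strategies back and forth by finite-state machines. But the proposal stops exactly where the proof begins: you never define the lookahead observations, and you say yourself that this is ``where the real work lies.'' That missing definition is the entire content of the lemma, so as it stands this is a plan rather than a proof.

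Moreover, the guiding principle you do commit to points in a direction that does not work in general. You want player~$i$'s lookahead at $\langle v, v'\rangle$ to be a coarsening of the \emph{positional} observations at $v'$ of the members of his interpolated class. The obstruction to cross-freeness is not positional: when $j$ crosses from strictly more informed than $i$ at $v$ to strictly less informed at $v'$, what $i$ must acquire at the half-step is the surplus knowledge that $j$ held about the \emph{past history} (the part of $P^j$ that was finer than $P^i$ before the move); two histories ending at the same pair $(v,v')$ may differ on whether such a surplus exists and on what it is, so no function of the observation profile at $v'$ can supply it uniformly. The paper's lookahead is accordingly a history-dependent signal: player~$i$ receives the knowledge sets $\lambda^i_j(\pi) = \{\beta^i(\mathrm{end}(\pi'))~|~\pi' \in P^j(\pi)\}$ for precisely those $j$ with $i \preceq_{v'} j$ at the target. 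The choice of recipients is what makes this sound: at the target, $i$ is at least as informed as $j$, so $P^i$ determines $P^j$ and hence $\lambda^i_j$, and nothing genuinely new has been learned by the time a player acts again; at the intermediate position the signal is deliberately \emph{not} information-consistent, and is exactly what lets $i$ absorb the knowledge of a player crossing down past him. Your interpolated preorder $\preceq_{\langle v,v'\rangle}$ and the two inclusion checks are fine but peripheral; to repair the proposal you must replace ``coarsening of $\beta^i(v')$'' by a signal of this knowledge-set type and prove the information-consistency-at-the-target claim, which is where Lemma~\ref{lem:rank-observable} and the synchronised-product machinery are actually needed.
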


\begin{proof}
Let~$G$ be a game graph with dynamic hierarchical information.
We define a signal for each pair of players~$i, j$ that represents
the knowledge that player $j$ has about the current
observation of player~$i$. 
If this signal is made observable to
Player $i$ only at histories~$\pi$ at which $i \preceq_\pi j$,  
the game remains essentially unchanged,
as players only receive information from less-informed players, 
which they could hence deduce from their observation. 
Concretely, we define the signal $\lambda_j^i: V^* \to \pset{B^i}$
by 
\begin{align*}
  \lambda_j^i( \pi ) := \{ \beta^i( v' )~:~ v' \text{ is the last state of some
    history } \pi' \in P^j( \pi ) \}.
\end{align*}
Clearly, this is a finite-state signal. 

Now we look at the 
synchronised product of $G$ with the signals~$(\lambda_j^i)_{i,j \in N}$ and the
relative-order signal $\preceq^i_j$ constructed in the proof of
Lemma~\ref{lem:rank-observable}. In the resulting game graph, the signal value $\lambda_j^i( \pi )$ at a 
history~$\pi$ that ends at a position~$w$ is represented by the 
position attribute $\lambda_j^i( w )$. We add
to every move $(v, a, w)$ an intermediary position~$u$, at which we
assign, for every player~$i$ the observation
$\{\lambda_j^i( w )~:~ i \preceq_w j \}$.
Intuitively, this can be viewed as a half-step lookahed signal that
player~$i$ receives from player~$j$ who may have been more informed at
the source position~$v$ -- thus the signal is not
necessarily information-consistent for player~$i$. 
Nevertheless, the game  remains essentially unchanged after adding the signal, as the players
cannot react to the received observation before reaching the target~$w$,
at which point the information is readily revealed. 
On the other hand, along moves at
which the information order between players switches, the
intermediary position ensures that the players attain equal
information. The construction is illustrated in Figure~\ref{sfig:lookahead}.

For any game~$\calG$ on~$G$, we adjust the winning condition 
to obtain one for the new game graph, by ignoring the added intermediary
positions. As the added positions have only one successor, the players have no relevant choice, 
so any distributed 
winning strategy for the new game corresponds to one for~$\calG$ and vice versa.
In particular, for $\omega$-regular winning conditions, 
the construction yields a game with no
crossings that is finite-state
equivalent to the original game.
\qed
\end{proof}

\subsection{Shadow players}

We are now ready to describe the construction of the shadow game
associated to~a game~$\calG = (V, E, \beta, W)$ with dynamic hierarchical information.
Without loss of generality, we can assume
that every position in $G$ is marked with the attributes $\rank^i(v)$
and $\preceq_j^i$, for
all players~$i$,~$j$ according to
Lemma~\ref{lem:rank-observable} and that the game graph is cross-free,
according to Lemma~\ref{lem:nocross}. 

The shadow game $\calG' = (V \cup\{ \ominus \}, E', \beta', W)$ is
also played by $n$ players
and has the same winning condition as $\calG$.
The action and the observation alphabet of each shadow player 
consists of the union of the action and observation alphabets of all
actual players. The game graph $G'$ has the same positions as~$G$, 
plus one sink $\ominus$ that absorbs all moves along unused action profiles.
The moves of $G'$ are obtained from $G$ by assigning the actions of each player~$i$ to
his shadow player~$j = \rank^i( v )$ as follows: 
for every move $(v, a, v') \in E$, 
there is a move $(v, x, v') \in E'$ 
labelled with the action profile $x$ obtained by a permutation of~$a$
corresponding to the rank order, that is, $a^i =
x^j$ for $j = \rank^i( v )$, for all players~$i$.
Finally, at every position $v \in V$, the observation of any player~$i$
in the original game~$\calG$ is assigned to his shadow player, that is
$\beta'^j( v ) := \beta^i( v )$, for $j = \rank^i( v )$.  

By construction, the shadow game yields static hierarchical
information, according to the nominal order of the players. 
We can verify, by induction on the length of histories, that for every
history~$\pi$, the information set of player~$i$ at~$\pi$ in~$G$
is the same as the one of his shadow player $\rank^i( \pi )$ in~$G'$.

\medskip

Finally, we show that the distributed synthesis problem for $G$
reduces to the one on $G'$, and vice versa.
To see that~$\calG'$ admits a winning strategy if~$\calG$ does, 
let us fix a distributed strategy~$s$ for the actual players
in~$\calG$. We define~a signal~$\sigma^j: \Hist( G' ) \to A$ 
for each player in~$\calG'$, by setting $\sigma^j( \pi ) := s^i ( \pi )$ 
if $j = \rank^i( \pi )$, for each history~$\pi$. 
This signal is information-consistent for player~$j$,
since, at any history~$\pi$, his information set is the same as for
the actual player~$i$ with $\rank^i( \pi ) = j$, 
and because the strategy of the actual player~$i$ is
information-consistent for himself. Hence, $\sigma^j$ is a strategy for
player~$j$ in~$G'$. 
Furthermore, at every history, the action taken by the shadow player 
$j = \rank^i( \pi )$ has the same outcome as if it was taken
by the actual player~$i$~in~$G$. 
Hence, the set of play outcomes of the profiles $s$ and $\sigma$ are the
same and we can conclude that, if there exists a distributed winning strategy for $G$,
then there also exists one for $G'$. 
Notice that this implication holds under any winning condition, 
without assuming $\omega$-regularity.

For the converse implication, let us suppose that the
shadow game~$\calG'$ admits a winning profile $\sigma$ of finite-state
strategies. We consider, for each actual player~$i$ of $G$, the signal
$s^i: \Hist( G ) \to A^i$ 
that maps every history~$\pi$ 
to the action $s^i( \pi ) := \sigma^j( \pi )$ of the shadow player $j = \rank^i( \pi )$. 
This is a finite-state signal, as we can implement it by synchronising
$G$ with $\rank^i$, the observations of the shadow players,
and the winning strategies $\sigma^j$, for all shadow players~$j$.
Moreover, $s^i$ is
information-consistent to the actual player~$i$, because all histories  
$\pi \in P^i( \pi )$, have the same value $\rank^i( \pi )
=: j$, and, since $\sigma^j$ is 
information-consistent for player~$j$, the actions prescribed by 
$\sigma^j( \pi )$ must
be the same, for all $\pi \in P^j ( \pi) = P^i( \pi )$. 
In conclusion, the signal $s^i$ represents a
finite-state strategy for player~$i$. The 
profile $s$ has the same set of play outcomes as $\sigma$, so
$s$ is indeed a distributed finite-state strategy, as desired. 

In summary, we have shown that any game $\calG$ with dynamic hierarchical
information admits a winning strategy if, and only if, the associated
shadow game with static hierarchical observation admits a finite-state
winning strategy. The latter question is decidable according to
Theorem~\ref{thm:static-decidable}. We showed that for every positive
instance~$G'$, we can construct a finite-state distributed
strategy for $G$. This concludes the proof of Theorem~\ref{thm:dynamic-decidable}.

\section{Transient Perturbations}

As a third pattern of hierarchical information, we consider
the case where incomparable information sets may occur at some
histories along a play, but it is guaranteed that a total order will be
re-established in a finite number of rounds.

\begin{definition}
  A play yields \emph{recurring hierarchical information} if 
 it has infinitely many prefix histories that yield hierarchical
 information.
 A game yields \emph{recurring hierarchical information} if all its
 plays do so. 
\end{definition}

Since the set of histories that yield hierarchical information is 
regular in any finite game, according to Lemma~\ref{lem:automaton-hierarchical-history}, 
it follows that the set of plays that yield recurring hierarchical information is 
$\omega$-regular as well.  

\begin{lemma}\label{lem:automaton-hierarchical-recurring}
  For every finite game, we can construct a deterministic B{\"u}chi automaton
  that recognises the set of plays that yield
  recurring hierarchical information.
  If~$G$ has~$n$ players and $|V|$ positions, 
the number of automaton states is bounded by $2^{O(n^2 |V|^2)}$.
\end{lemma}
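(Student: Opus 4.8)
The plan is to combine the nondeterministic automaton from Lemma~\ref{lem:automaton-hierarchical-history}, which recognises the histories that \emph{fail} to yield hierarchical information, with a standard subset construction, and then observe that recurring hierarchical information is exactly a B\"uchi condition asserting that a ``good'' history is seen infinitely often. First I would take the automaton $\mathcal{B}$ of Lemma~\ref{lem:automaton-hierarchical-history}, whose language (after synchronising with~$G$) is the set of histories in~$G$ that do \emph{not} yield hierarchical information; complementing by determinisation yields a \emph{deterministic} finite automaton $\mathcal{D}$ over the alphabet~$V$ that, after reading a history~$\pi$, is in an accepting state if and only if~$\pi$ yields hierarchical information. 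Since $\mathcal{B}$ has at most $2n^2|V|^2$ states, the subset construction gives~$\mathcal{D}$ at most $2^{2n^2|V|^2}$ states, i.e.\ $2^{O(n^2|V|^2)}$ many.

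Next I would reinterpret~$\mathcal{D}$ as a deterministic B\"uchi automaton over infinite words: keep the same states, initial state, and transition function, and declare the accepting (B\"uchi) states to be precisely the accepting states of~$\mathcal{D}$. A play $\pi = v_0, v_1, v_2, \dots$ is accepted by this B\"uchi automaton if and only if the run visits an accepting state infinitely often, which by the defining property of~$\mathcal{D}$ happens if and only if infinitely many prefixes $v_0, \dots, v_\ell$ yield hierarchical information --- that is, precisely when~$\pi$ yields recurring hierarchical information. Determinism is inherited from~$\mathcal{D}$, and the state bound is unchanged, so we obtain a deterministic B\"uchi automaton with $2^{O(n^2|V|^2)}$ states as claimed.

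One subtlety to address is that the automaton of Lemma~\ref{lem:automaton-hierarchical-history}, before synchronisation with~$G$, may accept words over~$V$ that are not genuine histories; I would handle this exactly as in the footnote there, by first taking the synchronised product with the game graph~$G$ so that the language is \emph{exactly} the set of non-hierarchical histories of~$G$, which only multiplies the state count by~$|V|$ and is absorbed into the $2^{O(n^2|V|^2)}$ bound. I do not expect any real obstacle here: the argument is a routine composition of the previous lemma with the subset construction and the trivial observation that ``infinitely many good prefixes'' is a B\"uchi condition on the deterministic recogniser of ``good prefix''. The only point requiring a line of care is the equivalence between ``run of $\mathcal{D}$ visits an accepting state at step~$\ell$'' and ``the length-$\ell$ prefix yields hierarchical information'', which follows immediately from~$\mathcal{D}$ being a deterministic recogniser of the latter property.
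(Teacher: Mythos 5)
Your proposal is correct and follows essentially the same route as the paper: determinise and complement the automaton of Lemma~\ref{lem:automaton-hierarchical-history} via the powerset construction, synchronise with~$G$, and read the result as a deterministic B\"uchi automaton whose acceptance condition is exactly ``infinitely many accepted prefixes''. The one bookkeeping point to watch is that the synchronised product with~$G$ should be taken \emph{after} determinisation (as the paper does), since taking it on the nondeterministic automaton first, as your last paragraph suggests, multiplies the state count \emph{before} the powerset construction and would push the bound to $2^{O(n^2|V|^3)}$ rather than the claimed $2^{O(n^2|V|^2)}$.
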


\begin{proof} Given a game graph~$G$, 
we can construct a nondeterministic finite 
automaton~$A$ that recognises the set of histories in~$G$ that do 
\emph{not} yield hierarchical information, 
as in Lemma~\ref{lem:automaton-hierarchical-history}.
By determinising the automaton~$A$ via the standard powerset construction, 
and then complementing the set of accepting states,  
we obtain a deterministic 
automaton~$A^\complement$ with at most $2^{2n^2|V|^2}$ states that accepts a word~$\pi \in V^*$ 
if either~$\pi \not \in \Hist( G )$, or~$\pi$ 
represents a history in~$G$ that yields hierarchical information.
Finally, we take the synchronised product~$B$ of $A^\complement$ with the graph~$G$. 
If we now view the resulting automaton as a B{\"u}chi automaton, 
which accepts an infinite words if infinitely many prefixes are accepted 
by~$B$, we obtain a deterministic automaton
that recognises the set of plays in~$G$ that yield
hierarchical information. The number of states in~$B$ is at most 
$|V|\,2^{2 n^2|V|^2}$, hence bounded by $2^{O(n^2 |V|^2)}$.
\qed
\end{proof}

The automaton construction provides an important insight about the 
number of consecutive rounds in which players may have incomparable information. 
Given a play~$\pi$ on a game graph~$G$,
we call a \emph{gap} any interval $[t, t + \ell]$ of
rounds such that the histories of~$\pi$ in  
in any round of
$[t, t + \ell]$ do not yield hierarchical information; 
the length of the gap is $\ell + 1$. The game graph has
\emph{gap size}~$k$ if the length of all gaps in its plays is
uniformly bounded by $k$. 

Clearly, every game graph with finite gap size yields 
recurring hierarchical information. 
Conversely, the automaton construction 
of~Lemma~\ref{lem:automaton-hierarchical-recurring} implies,
via a standard pumping argument, 
that the gap size of any game graph with recurring hierarchical
information is at most the number of states in the
constructed B{\"u}chi automaton.
In conclusion, a game~$G$ yields recurring hierarchical information
if and only if, the size of a gap in any play of~$G$ 
is bounded by $2^{O(n^2|V|^2)}$.

\begin{corollary}\label{cor:gapsize}
If a game yields recurring hierarchical information, then its
gap size is bounded by $2^{O( n^2 |V|^2 )}$, where $n$ is the number of players and $|V|$ is 
the number of positions. 
\end{corollary}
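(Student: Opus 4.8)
The plan is to carry out the pumping argument on the deterministic B{\"u}chi automaton~$B$ supplied by Lemma~\ref{lem:automaton-hierarchical-recurring}. Three features of~$B$ matter: it has $2^{O(n^2|V|^2)}$ states; it arises as a synchronised product with the game graph~$G$, so its runs project onto paths in~$G$ and, conversely, every play of~$G$ induces a unique run of~$B$; and along the run of~$B$ on a play, the state reached after a given round is accepting exactly when the corresponding prefix history yields hierarchical information.

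Assume, for contradiction, that~$G$ yields recurring hierarchical information yet some play~$\pi$ contains a gap $[t, t+\ell]$ whose length $\ell+1$ exceeds the number $|B|$ of states of~$B$. Write $q_0, q_1, q_2, \dots$ for the run of~$B$ on~$\pi$. By the definition of a gap, none of the histories of~$\pi$ at the rounds in $[t, t+\ell]$ yields hierarchical information, so the states $q_t, \dots, q_{t+\ell}$ are all non-accepting; since there are more than $|B|$ of them, two coincide, say $q_a = q_b$ with $t \le a < b \le t+\ell$. The factor of~$\pi$ between rounds~$a$ and~$b$ thus traces a cycle in~$B$ passing only through non-accepting states, and because~$B$ keeps the current position of~$G$ in its state, this cycle projects to a cycle in~$G$ based at the position occupied at round~$a$.

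Now let~$\pi'$ be the play that follows~$\pi$ up to round~$a$ and then iterates this cycle forever; as~$G$ has no dead ends, $\pi'$ is a genuine infinite play of~$G$. From round~$a$ on, the run of~$B$ on~$\pi'$ never leaves the non-accepting states of the cycle, so at most the finitely many prefixes of~$\pi'$ of length below~$a$ yield hierarchical information. Hence~$\pi'$ does not yield recurring hierarchical information, contradicting the hypothesis on~$G$. Therefore every gap of every play of~$G$ has length at most $|B| = 2^{O(n^2|V|^2)}$, which is the asserted bound on the gap size.

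The argument is essentially routine; the only delicate point is ensuring that the pumped run of~$B$ corresponds to a legitimate play of the game, which is precisely why Lemma~\ref{lem:automaton-hierarchical-recurring} forms the synchronised product with~$G$ and why the absence of dead ends in game graphs is invoked.
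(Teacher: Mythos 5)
Your argument is correct and is precisely the ``standard pumping argument'' that the paper invokes without spelling out: a gap longer than the state count of the deterministic B\"uchi automaton~$B$ forces a repeated non-accepting state, and iterating the corresponding cycle (which projects to a cycle in~$G$ because~$B$ is a product with the game graph) yields a play with only finitely many hierarchical prefixes, contradicting recurring hierarchical information. The only cosmetic remark is that the appeal to the absence of dead ends is superfluous, since the pumped cycle itself already supplies the infinite continuation of valid moves.
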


A family of game graphs where 
the gap size grows
 exponentially with the number of positions is illustrated in Figure~\ref{fig:primes}. 
The example is
adapted from~\cite{BerwangerDoyen08}: 
There are two players with no relevant action choices and
they can observe one bit, or a special
symbol that identifies a unique sink position~$v_\bullet$.
The family is formed of graphs $(G_m)_{m \ge 1}$, each 
constructed of $m$ disjoint cycles $(C^r)_{1 \le r \le m}$ of lengths 
$p_1, p_2,\dots, p_m$ corresponding to the first~$m$ prime numbers, respectively.
We number the positions on the cycle corresponding to the $r$-th prime number as
$C^r := \{ c^r_{0}, \dots, c^r_{p_r-1} \}$. 
On each cycle, both players receive the same observation~$0$. 
Additionally, there are two special positions $v_{01}$ and $v_{10}$, 
that yield different observations to the players:
$\beta^1( v_{01} ) = \beta^2( v_{10} ) = 0$ and $\beta^1( v_{01} ) = \beta^2( v_{10} ) = 1$. 
From the initial position~$v_0$ of a game graph~$G_m$, 
Nature can choose a cycle~$C^r$ with $r \le m$. 
From each position $c^r_{\ell} \in C^r$, 
except for the last one with $\ell = p_r-1$, there 
are is a moves to the subsequent cycle position $c^r_{\ell+1}$ and, 
additionally, to $v_{01}$ and $v_{10}$.
In contrast, the last cycle position 
$c^r_{p_r-1}$ 
has only the first position $c^r_{0}$ of the same cycle as a successor. 
From the off-cycle positions $v_{01}$ and $v_{10}$ 
the play proceeds to the unique sink state~$v_\bullet$
that emits the special observation~$\bullet$ to both players.

Now, we can verify, for each game graph~$G_m$, 
that in every play~$\pi$ that proceeds only through cycle positions,
the information sets of the two players are comparable at a prefix history of 
length $t > 2$ in~$\pi$, if and only if, all the first~$m$ primes divide $t-2$; 
any play that leaves a cycle reaches the sink~$v_\bullet$, where 
the  information sets of both players coincide. 
Accordingly, $G_m$ yields recurring hierarchical information. 
On the other hand, since the product of the first $m$ primes 
is exponential in their sum, 
(for a more precise analysis, see \cite{BerwangerDoyen08}), 
we can conclude that the gap size of the game graphs~$G_m$ 
grows exponentially with the number of positions.

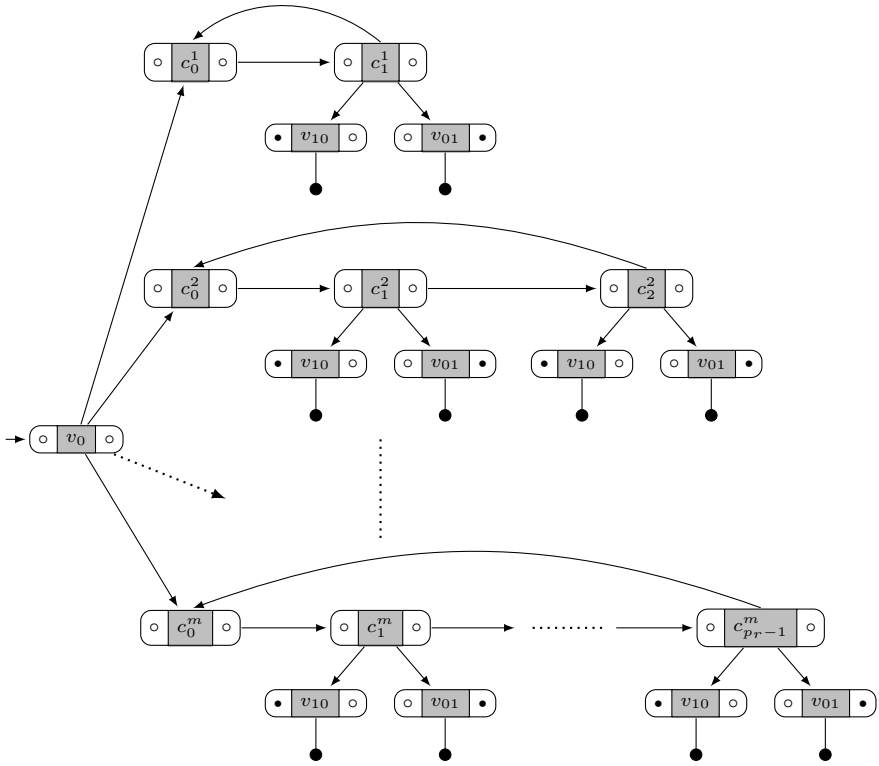
\begin{figure}
\begin{flushleft}
\begin{tikzpicture}

  \node[initial,state3P] 				(0) at (-4,6)         {$\circ$ \nodepart{two} $v_0$ \nodepart{three} $\circ$};


  \node[state3P]              (21) at (-2.5,11) {$\circ$ \nodepart{two} $c_0^1$ \nodepart{three} $\circ$};
  
  \node[state3P]              (22) at (0,11) {$\circ$ \nodepart{two} $c_1^1$ \nodepart{three} $\circ$};
  
  \node[state3P]              (22') at (-.85,10) {$\gob$ \nodepart{two} $v_{10}$ \nodepart{three} $\circ$};
  
  \node[state3P]              (22'') at (.85,10) {$\circ$ \nodepart{two} $v_{01}$ \nodepart{three} $\gob$};
  
  
  \node[state3P]              (31) at (-2.5,8) {$\circ$ \nodepart{two} $c_0^2$ \nodepart{three} $\circ$};
  
  \node[state3P]              (32) at (0,8) {$\circ$ \nodepart{two} $c_1^2$ \nodepart{three} $\circ$};
  
  \node[state3P]              (32') at (-.85,7) {$\gob$ \nodepart{two} $v_{10}$ \nodepart{three} $\circ$};
  
  \node[state3P]              (32'') at (0.85,7) {$\circ$ \nodepart{two} $v_{01}$ \nodepart{three} $\gob$};

  \node[state3P]              (33) at (3.5,8) {$\circ$ \nodepart{two} $c_2^2$ \nodepart{three} $\circ$}; 
  
  \node[state3P]              (33') at (2.65,7) {$\gob$ \nodepart{two} $v_{10}$ \nodepart{three} $\circ$};
  
  \node[state3P]              (33'') at (4.35,7) {$\circ$ \nodepart{two} $v_{01}$ \nodepart{three} $\gob$};

 
 \node[state3P]              (p1) at (-2.5,3.5) {$\circ$ \nodepart{two} $c^m_0$ \nodepart{three} $\circ$};
       
  \node[state3P]              (p2) at (0,3.5) {$\circ$ \nodepart{two} $c^m_1$ \nodepart{three} $\circ$};
  
   \node[state3P]              (p2') at (-0.85,2.5) {$\gob$ \nodepart{two} $v_{10}$ \nodepart{three} $\circ$};
  
  \node[state3P]              (p2'') at (0.85,2.5) {$\circ$ \nodepart{two} $v_{01}$ \nodepart{three} $\gob$};

 \node[state3P]              (p3) at (5,3.5) {$\circ$ \nodepart{two} $c^m_{p_r-1}$ \nodepart{three} $\circ$}; 
 
  \node[state3P]              (p3') at (4.15,2.5) {$\gob$ \nodepart{two} $v_{10}$ \nodepart{three} $\circ$};
  
  \node[state3P]              (p3'') at (5.85,2.5) {$\circ$ \nodepart{two} $v_{01}$ \nodepart{three} $\gob$};
 

 \path (0) edge (21) edge (31) edge (p1);

\path (21) edge (22);
\path (22) edge (22') edge (22'');


\path (22'.south) edge[-*] (-.85,9.2);

\path (22''.south) edge[-*] (.85,9.2);

\path (22.north) edge[bend right=40] (21.north);

\path (31) edge (32);
\path (32)  edge (32') edge (32'') edge (33);
\path (33) edge (33') edge (33'');


\path (32'.south) edge[-*] (-.85,6.2);

\path (32''.south) edge[-*] (0.85,6.2);

\path (33'.south) edge[-*] (2.65,6.2);

\path (33''.south) edge[-*] (4.35,6.2);

\path (33.north) edge[bend right=20] (31.north);


\path (p2'.south) edge[-*] (-.85,1.7);

\path (p2''.south) edge[-*] (0.85,1.7);

\path (p3'.south) edge[-*] (4.15,1.7);

\path (p3''.south) edge[-*] (5.85,1.7);

\path (p3.north) edge[bend right=20] (p1.north);

\path (p1) edge (p2);
\path (p2) edge (1.8,3.5) edge (p2') edge (p2'');
\path (p3) edge (p3') edge (p3'');

 \draw[->,>=latex]  (3.1,3.5) --  (p3);
 \draw[-,dotted,thick] (2,3.5) -- (3, 3.5);
 \draw[-,dotted,thick] (0,6) -- (0, 4.6);
 \draw[->,>=latex, dotted,thick]  (0) --  (-2,5.2); 

\end{tikzpicture}
\end{flushleft}
 
\caption{Game with an exponential gap of non-hierarchical information (for better readability, 
the positions $v_{01}$, $v_{10}$, and $v_\bullet$ are multiply represented)}
 \label{fig:primes}
\end{figure}

As a further consequence of the automaton construction
in Lemma~\ref{lem:automaton-hierarchical-recurring}, it follows that 
we can decide whether a game graph~$G$ yields recurring hierarchical
information, by constructing the corresponding B\"uchi automaton and
checking whether it accepts all plays in~$G$. 
However, due to the exponential 
blow-up in the determinisation of the automaton, 
this straightforward approach would 
require exponential time (and space) 
in the size of the game graph and the number of (pairs of) players.
Here, we describe an on-the fly 
procedure that yields better complexity bounds.

\begin{theorem}\label{thm:deciding-hierarchical-recurring}
The problem of deciding whether a game graph yields recurring hierarchical
information is \textsc{PSpace}-complete.
\end{theorem}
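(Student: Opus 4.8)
The plan is to prove the two bounds separately: the upper bound is an on-the-fly version of the automaton construction already at hand, while the lower bound encodes a space-bounded computation into the way the players' information sets evolve. For membership in \PSPACE{}, I would \emph{not} build the deterministic B\"uchi automaton $B$ of Lemma~\ref{lem:automaton-hierarchical-recurring} explicitly, since it has $2^{O(n^2|V|^2)}$ states, but search it symbolically. A game graph $G$ \emph{fails} to yield recurring hierarchical information exactly when $G$ has a play that is rejected by $B$, i.e.\ a play along which only finitely many prefixes are hierarchical; by finiteness of $B$ this is equivalent to the existence of a reachable ``lasso'' in $B$ whose loop contains no accepting state (no state recording that the current history is hierarchical). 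The crucial point is that every state of $B$ --- a subset of the polynomially many states of the witness automaton $A$ of Lemma~\ref{lem:automaton-hierarchical-history}, together with a position of $G$ --- has a polynomial-size description, its successors along the moves of $G$ are computable in polynomial time, and testing whether it is accepting (the subset is disjoint from the ``both flags set'' states of $A$) is polynomial as well. Hence the complement problem is solved by a nondeterministic procedure that guesses a path of $B$ over a play of $G$ one position at a time, records at a nondeterministically chosen point the current state $s^\star$, and then guesses a continuation returning to $s^\star$ while checking that no accepting state occurs in between; a binary counter with $O(n^2|V|^2)$ bits bounds the lengths of the two segments, so the whole search runs in polynomial space. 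Thus the complement lies in \textsc{NPSpace} $=$ \PSPACE{}, and since \PSPACE{} is closed under complementation, the problem itself is in \PSPACE{}. Note that this argument uses only the game graph, not the winning condition.

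For \PSPACE{}-hardness I would reduce from the acceptance problem of polynomially space-bounded deterministic Turing machines (a canonical \PSPACE{}-complete problem that stays \PSPACE{}-complete under complementation; one could equivalently start from the universality problem for nondeterministic finite automata). Given such a machine $M$ and an input $x$, I would construct in polynomial time a two-player game graph $G$ with singleton action sets --- so that all nondeterminism is Nature's --- whose plays simulate the run of $M$ on $x$, and in which an entire configuration of $M$ (a polynomial-length tape together with head position and control state) is \emph{encoded in the information set of one designated player}. Concretely, Nature announces the sequence of local cell updates performed by $M$; the designated player observes only a bounded window around the head, so at any history its information set consists of all configurations consistent with the announced updates, whereas the other player observes enough to pin the configuration down. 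The construction would be arranged so that a history yields hierarchical information exactly when the encoded configuration is the (unique) rejecting halting configuration of $M$, and so that any dishonest simulation by Nature is eventually exposed --- and thereby forced through a hierarchical history --- by a periodic re-certification gadget, in the spirit of the classical lower-bound constructions for games with imperfect information (cf.~\cite{PetersonRei79}). Then, if $M$ accepts $x$, the honest simulation reaches the accepting configuration and Nature can loop there forever without ever producing the rejecting configuration, yielding a play with only finitely many hierarchical prefixes, so $G$ does not yield recurring hierarchical information; conversely, if $M$ rejects $x$, every play is eventually driven to a hierarchical (rejecting) configuration and hence to one infinitely often, so $G$ yields recurring hierarchical information. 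Therefore $G$ yields recurring hierarchical information if and only if $M$ rejects $x$, which establishes \PSPACE{}-hardness.

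The real work sits entirely in the lower bound; the upper bound is routine once Lemma~\ref{lem:automaton-hierarchical-recurring} is in place, being just a polynomial-space reachability-plus-cycle search in an implicitly represented automaton. The hard part is to make the information set of a player faithfully track the exponentially many ``subset configurations'' of the simulated machine while keeping $G$ polynomial, and --- above all --- to design the re-certification mechanism so that Nature cannot manufacture a spurious infinite gap by cheating in the simulation without eventually being caught and forced back to a hierarchical history. Checking that this gadget interacts correctly with the definition of hierarchical information --- that is, with the comparability of information sets of the two players --- is where the technical effort goes.
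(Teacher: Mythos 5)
Your upper bound is essentially the paper's argument: rather than building the deterministic B\"uchi automaton of Lemma~\ref{lem:automaton-hierarchical-recurring}, you search it on the fly for a reachable lasso whose loop avoids accepting states, noting that each powerset state has a polynomial description and polynomial-time computable successors and acceptance test; the paper implements exactly this as a nondeterministic polynomial-space procedure (guessing the recurring configuration $\hat{Z}$ up front instead of marking $s^\star$ mid-run, an immaterial difference) and closes with \textsc{NPSpace} $=$ \PSPACE{} and closure under complement. That half is fine.

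The lower bound, however, is a genuine gap, in two respects. First, it is a plan rather than a proof: the configuration-encoding, the ``bounded window'' observation scheme, and above all the ``re-certification gadget'' that is supposed to prevent Nature from manufacturing a spurious infinite gap are all left as acknowledged open technical work, and that is precisely where the content of a hardness proof lies. Second, the sketch as stated would not work: you propose that one designated player's information set track the set of configurations consistent with the announced updates ``whereas the other player observes enough to pin the configuration down.'' But if the second player always knows the exact history, then $P^2(\pi) \subseteq P^1(\pi)$ at \emph{every} history, the information sets are always comparable, and the constructed game trivially yields (even static) hierarchical information, independently of the machine. Violating hierarchical information requires \emph{incomparable} information sets, i.e.\ both players must simultaneously hold private uncertainty that the other lacks; your construction gives all the uncertainty to one side. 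The paper avoids all of this with a much lighter reduction from NFA universality (the alternative starting point you mention but do not pursue): the game graph simulates the NFA with both players observing the input letters tagged with a bit $0$, and whenever an accepting state is reachable after reading a word, Nature may instead branch to one of two extra positions $v_a, v_a'$ at which exactly one of the two players privately sees the tag $1$. At the history corresponding to a word $\alpha$, each player then considers it possible that the other received a private bit, so the two information sets are incomparable precisely when $\alpha$ is accepted; acceptance of the word, not a halting-configuration test guarded by a re-certification mechanism, is what creates the symmetric two-sided uncertainty. If you want to keep a Turing-machine flavour, you would still need such a symmetric private-signal gadget; without it the reduction collapses.
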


\begin{proof}
We describe a nondeterministic procedure for verifying that
an input game~$G$ does \emph{not} yield recurring hierarchical information, 
that is, there exists a play in $G$ such that from some round $t$ onwards, 
no prefix of length $\ell \ge t$ yields hierarchical information.
By looking at the deterministic B{\"u}chi automaton~$B$ 
constructed in Lemma~\ref{lem:automaton-hierarchical-recurring}, 
we can tell that this is the case if, and only if, 
there exists a finally periodic play $\pi = \tau \rho^\omega \in V^\omega$ 
such that the run of~$B$ on~$\pi$
visits only non-accepting states after reading the 
prefix~$\tau$ and, moreover, it returns to a previously visited 
state when, but not earlier than, reaching the
prefix $\tau \rho$. 
In other words, the run on~$\tau\rho$
induces a \emph{lasso} in~$B$, 
and hence, the length of $\tau\rho$ is bounded 
by the number of states~$|V|\,2^{2 n^2|V|^2}$ in the automaton.

The idea of the procedure, pictured in Algorithm~\ref{algo:decision-rec}, 
is to guess such a history~$\tau\rho$, and to keep track of the 
states visited in the corresponding run of the automaton $B$, or more precisely, 
in the powerset construction of the automaton~$A$ 
from Lemma~\ref{lem:automaton-hierarchical-history}. 

Let us first fix a pair of players $i, j$.
Then, every state reachable by~$A$ upon reading the prefix~$\pi$ 
of an input word from $V^\omega$ is described by a tuple $(u, c, w, d)$ 
consisting of a pair of game positions~$u,w$ and two binary flags $c,d$ 
such that there exist histories $\pi', \pi''$ in $G$ that end at $u$ and $w$, 
satisfying $\pi \sim^i \pi'$, $\pi \sim^j \pi''$, and the flags 
$c$ or $d$ are set if, and only if, $\pi'' \not\sim^i \pi$ 
or $\pi \not\sim^j \pi'$, respectively.
Essentially,~this means that 
$\pi', \pi''$ are candidates for witnessing that players~$i,j$ have
incomparable information at some continuation of~$\pi$. 
Every state in the automaton~$B$ on the powerset of~$A$ 
records a set~$Z_{ij} \in (V \times \{ 0, 1\})^2$ of such tuples, 
each collecting the terminal positions of all witness candidates for $i$, $j$ 
flagged correspondingly -- we call 
such a set a \emph{cell}. 
At the beginning, 
the procedure guesses one cell~$\hat{Z}_{ij}$ for every pair of players $i,j$ (Line~\ref{alg:target}); 
we call such a collection of cells a \emph{configuration}. 
The guessed configuration stands for a state of~$B$ 
that shall be reached after reading 
$\tau$ and to which the run returns at $\tau\rho$. 
Then, starting from the initial configuration, where all players 
just see the initial position $v_0$ (Line~\ref{alg:init}), the  
procedure generates successively game positions of $\tau$ while updating 
the current configuration to simulate the run of~$B$ on~$\tau$.
The current configuration~$Z$ 
summarises the possible runs of~$A$ on the 
prefix of~$\tau$ generated so far, 
and it is updated for every new game position~$v$
according to Procedure~\ref{alg:update}.
Once the configuration~$Z = \hat{Z}$ is reached, 
the procedure enters a new loop. 
Here, it verifies in every iteration that the current configuration 
indeed contains a witness of incomparability for the input history, 
that is, there exists, in the cell of some pair of players,
a state of~$A$ in which both distinguishability flags are set (Line~\ref{alg:if-hier}). 
Provided the test succeeds, 
the procedure successively guesses the positions of~$\rho$ while updating the current 
configuration,  
until~$\hat{Z}$ is reached again, in which case the procedure accepts.
Apart of the case when the test in Line~\ref{alg:if-hier} fails, 
the procedure also rejects by looping.

\paragraph{Correctness and soundness} 
The procedure mainly requires space to store the configurations~$Z, \hat{Z}$ 
that collect a set of tuples from $(V \times \{0, 1 \})^2$
for each pair of players, hence it runs in polynomial space. 
To show correctness, we consider the sequence~$\pi$ of positions~$v$ 
generated in Lines~\ref{alg:next-prefix} and~\ref{alg:next-cycle}, and  
argue that it forms a history in~$G$ and that, at every iteration of the 
loops in Lines~\ref{alg:prefix} and~\ref{alg:cycle}, the configuration~$Z$ 
contains, in each cell~$Z_{ij}$, precisely the  
set of pairs~$u, w$ of terminal positions 
reachable by candidate witnesses $\pi' \sim^i \pi$ such that
$\pi'' \sim^j \pi$, with associated flags $c,d$ indicating correctly whether
$\pi' \sim^j \pi''$ and 
$\pi'' \sim^i \pi$, due to the way they are maintained in
 Lines~\ref{alg:flag-i} and~\ref{alg:flag-j} of Procedure~\ref{alg:update}. 
Accordingly, if the procedure accepts, 
then there exists a finally periodic play~$\pi := \tau \rho^\omega$ in~$G$ 
such that the information sets of at least two players are incomparable, 
in every round from~$\tau$ onwards. 
Soundness follows from the construction of the B{\"u}chi automaton in 
Lemma~\ref{lem:automaton-hierarchical-recurring} 
and the observation that every run of the procdedure corresponds to a 
run of the automaton with the same acceptance status. 

\paragraph{Hardness} 
Finally, we argue that the problem of deciding whether a game graph yields 
recurring hierarchical information is $\textsc{PSpace}$-hard by 
reduction from the universality problem for nondeterministic finite automata, 
shown by~\cite{MeyerS72} to be $\textsc{PSpace}$-hard. 
Given a nondeterministic automaton~$A = (Q, \Sigma, \Delta, q_0, F)$ 
over an alphabet~$\Sigma$, we 
construct a game graph~$G$ for two players 
with no action choices 
and with observations
in $B^1 = B^2 = \Sigma\times\{0,1\}$ corresponding to input letters for~$A$ tagged with one bit.
The set of positions in~$G$ contains $q_0$ and all letter-state pairs 
$(a, q) \in \Sigma\times Q$. From every state $(a, q)$ and from $q=q_0$, there is a move 
to position $(a', q')$ if $(q, a', q') \in \Delta$. Each position $(a,q)$ 
yields the same observation $(a, 0)$ to both players (the observation at $q_0$ is irrelevant).
Thus, every run of~$A$ on a word $\alpha \in \Sigma^*$ corresponds to a 
unique history in $G$ where both
players observe the letters of $\alpha$ tagged with $0$.  
In addition, $G$ has two fresh positions $v_a$ and $v_a'$, for every letter $a \in A$, with 
observations $\beta^1( v_a ) = \beta^2( v_a') = (a, 1)$ 
whereas $\beta^1( v_a' ) = \beta^2( v_a) = (a, 0)$. Whenever there is a move in~$G$ 
from a position $v$ to some position $(a, q)$ with $q \in F$, we also allow a move from $v$ to $v_a$. 
These fresh positions have one common successor, 
identified by a distinct observations to both players (essentially, indicating that the game is over). 
Clearly,~$G$ can be constructed from~$A$ in polynomial time.
  
Now, consider a nontrivial word~$\alpha \in \Sigma^*$ and suppose that it 
admits an accepting run in~$A$. 
At the corresponding history~$\pi$ in $G$, which yields
the letters of ~$\alpha$ tagged with~$1$ as an observation to both players, 
the information sets are incomparable, because
each player considers it possible that the other 
recieved the last letter with a~$0$-tag. 
In contrast, if~$\alpha$ is rejected, 
the information sets at the corresponding history in $G$ coincide, 
hence we have hierarchical information. 
In conclusion, the language of the automaton~$A$ is universal 
if, and only if, the constructed game graph yields herarchical information.
 \qed
\end{proof}

\SetKw{KwAccept}{accept}
\SetKw{KwReject}{reject}
\SetKw{KwGuess}{guess}
\SetKwArray{KwType}{type} 
\SetKwArray{KwVar}{var} 
\begin{algorithm}[t]
\caption{Deciding recurring hierarchical information}
\label{algo:decision-rec}
{
 \KwData{game graph $G = (V, E, \beta)$ for $n$ players}
 \KwResult{accept if $G$ does not yield recurring hierarchical information}
 \BlankLine
 \KwType $\textit{cell}$\,:  \text{subset of} $(V \times \{0,1\})^2$\;
 \KwType $\textit{configuration}$\,:  matrix of $\textit{cells}$ $(Z_{i,j})_{1\le i < j \le n}$\;
 \KwVar $Z, \hat{Z}$\,: \text{configurations}\;
 \KwVar $v, \hat{v}$\,: \text{positions in} $V$\;
 \KwVar $i, j$\,: \text{players in} $\{1, \dots, n \}$\;
 \BlankLine
 \nl \KwGuess{$(\hat{v}, \hat{Z})$\label{alg:target}} \tcp*[r]{fix target on cycle}
 \nl $v \gets v_0$\;
 \nl \lForEach(\tcp*[f]{initial configuration}){$i, j$ with 
   $i < j$}{$Z_{i,j} \gets \{ (v_0, 0, v_0, 0 )\}$\label{alg:init}}     
 \nl \While{$(v, Z) \neq (\hat{v}, \hat{Z})$\label{alg:prefix}}
     {
       \nl \KwGuess{$v \in vEA$\label{alg:next-prefix}}
       \tcp*[r]{guess next history state}
       \nl $Z \gets \mathrm{Update}(v, Z)$
       \tcp*[r]{follow powerset construction}
     }
     \nl 
     \Repeat{$(v, Z) = (\hat{v}, \hat{Z})$\label{alg:cycle}}{
       \nl \lIf(\tcp*[f]{hierarchical information}){$\displaystyle{ 
           \bigwedge_{i, j} Z_{i,j} \cap(V \times \{1\})^2 = \emptyset}$ 
         \label{alg:if-hier}} {
         \KwReject
       }
       \nl \KwGuess{$v \in vEA$\label{alg:next-cycle}}\;
       \nl $Z \gets \mathrm{Update}(v, Z)$\;
     }(\tcp*[f]{cycle found})
     \nl \KwAccept \;        
}
\end{algorithm}

\begin{procedure}[t]
\caption{Update($v, Z$)}
\label{alg:update}
{
 \KwData{new position~$v$, current configuration~$Z$}
 \KwResult{successor configuration after observing $\beta(v)$}
     \nl \ForEach{$i,j$}{
       \nl \ForEach{$(u,c,w,d) \in Z_{i,j}$}
               {\nl \ForEach{$u' \in uEA, w' \in wEA$ with $\beta^i( u' ) = \beta^i ( v )$ and 
                   $\beta^j( w' ) = \beta^j ( v )$}
               {
                 \nl $c' \gets c \lor \beta^j( u' ) \neq \beta^j( w' )$ \label{alg:flag-i}
                 \tcp*[r]{flag witness for $j \preceq i$}
                 \nl $d' \gets d \lor \beta^i( u' ) \neq \beta^i( w' )$ \label{alg:flag-j}
                 \tcp*[r]{flag witness for $i \preceq j$}
                 \nl $Z'_{i,j} \gets Z'_{i,j} \cup \{ ( u', c', w', d') \} $\;
               }
               }\;
     }
   }\KwRet $Z'$\;

\end{procedure}

We can show that the synthesis problem for the class of games 
with recurring hierarchical information is finite state-solvable, 
at least in the case when the winning 
conditions are observable. We conjecture that the result extends to the 
general case.

\begin{theorem} 
For games with recurring hierarchical information and observable
$\omega$-regular winning conditions, the synthesis problem is 
finite-state solvable.
\end{theorem}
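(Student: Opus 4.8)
The plan is to reduce the problem to the case of dynamic hierarchical information, which is handled by Theorem~\ref{thm:dynamic-decidable}. Given a game $\calG$ with recurring hierarchical information and an observable $\omega$-regular winning condition, I would transform it into a game $\calG'$ in which the information sets of the players are totally ordered at \emph{every} history. The key leverage is Corollary~\ref{cor:gapsize}: there is a uniform bound $k$ on the length of a gap, so between any two consecutive histories that yield hierarchical information there are at most $k+1$ rounds. The idea is then to \emph{contract} each such block of play into a single macro-move, so that the histories of $\calG'$ are exactly the histories of $\calG$ that yield hierarchical information --- at which the information sets are totally ordered by construction --- while the bounded-gap property ensures that the behaviour inside a block is a finite object and hence $\calG'$ is again a finite game.

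Concretely, I would first take the synchronised product of $\calG$ with the deterministic B\"uchi automaton of Lemma~\ref{lem:automaton-hierarchical-recurring}, so that ``the current history yields hierarchical information'' becomes a positional attribute, and observe that the argument in the proof of Lemma~\ref{lem:rank-observable} is local and applies verbatim at hierarchical histories, so that the identity of the most-informed player there (ties broken by index) is an information-consistent signal. The macro-game $\calG'$ has the hierarchical positions of this product as its positions, plus an absorbing sink. A macro-move from a hierarchical position~$v$ runs as follows: each player~$i$ chooses, as its action, an information-consistent \emph{block strategy} $\tau^i$, that is, a function from observation sequences of length at most~$k$ to actions in~$A^i$ (a finite object, hence a legitimate action alphabet); then Nature picks a block continuation of~$\calG$ of length at most $k+1$ starting at~$v$, ending at the next hierarchical position, and consistent with the profile~$(\tau^i)_i$; the successor macro-position is that next hierarchical position, and player~$i$ receives as its observation the sequence of its observations along the block. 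Because the winning condition is observable, it depends only on the colour sequence, which is recorded along the macro-moves and is common to all players, so it lifts to an $\omega$-regular winning condition on~$\calG'$ through a product with a small automaton.

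Next I would check that $\calG'$ yields dynamic hierarchical information. The information set of player~$i$ at a macro-history mirrors that of player~$i$ at the corresponding hierarchical history of~$\calG$, refined only by the players' own block-strategy choices --- which are functions of their own observations --- and by data derived from the B\"uchi automaton and the colour sequence, which is common to all players; an induction on the length of macro-histories shows these refinements preserve the total order. The forward transfer of solutions is then exactly as in the proof of Theorem~\ref{thm:dynamic-decidable}: a distributed winning strategy~$s$ for~$\calG$ yields one for~$\calG'$ by committing, at each hierarchical position, to the block strategy that reads off the prescriptions of~$s^i$ over the upcoming block; this is information-consistent and outcome-preserving, and needs no assumption on the winning condition. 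By Theorem~\ref{thm:dynamic-decidable}, $\calG'$ is finite-state solvable, so if it admits any solution it admits a finite-state one.

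The remaining and, I expect, hardest step is to redistribute a finite-state winning profile for~$\calG'$ into one for~$\calG$. The natural idea --- to have each player internally simulate~$\calG'$, keeping the macro-history seen so far and applying within the current block the block strategy last committed --- runs into the difficulty that, to run this simulation, a player must recognise from its own observation history where the current block began, i.e., which prefix of the current history is the most recent one that yields hierarchical information; but this property is \emph{not} information-consistent, so a player cannot in general locate the block boundaries, and two histories indistinguishable to a player may be segmented differently by~$\calG'$. Overcoming this is where the two new hypotheses pay off: the gap size is uniformly bounded, so a boundary can only lie within the last $k+1$ positions, and the winning condition is observable, so the only winning-relevant data, the colour sequence, is common knowledge. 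I would use these to arrange the construction so that the block boundaries become recoverable from a player's own observations --- the crux being to make the segmentation observable without changing which instances are solvable --- at which point the redistributed strategies are well-defined, information-consistent, finite-state, and outcome-equivalent. This would complete the proof.
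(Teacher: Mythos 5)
Your reduction to dynamic hierarchical information is a genuinely different route from the paper's, but it contains a gap that you yourself flag and do not close, and it is precisely the step on which the whole construction stands or falls. The difficulty is that the predicate ``the current history yields hierarchical information'' is a property of the entire profile of information sets and is in general not information-consistent for any single player: one can have $\pi \sim^i \pi'$ with $\pi$ hierarchical and $\pi'$ not, so the two histories are segmented into blocks differently even though player~$i$ cannot tell them apart. Neither of the two hypotheses you invoke repairs this. The bound of Corollary~\ref{cor:gapsize} only confines the ambiguity to a window of at most $k+1$ rounds; within that window player~$i$ still does not know which candidate prefix was the last hierarchical one, and the block strategies committed at the different candidates may prescribe different actions at the current round, so the redistributed strategy is not well defined. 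Observability of the winning condition makes the colour sequence common, but says nothing about the other players' information sets, which is what determines the segmentation. Making the segmentation positionally observable (e.g.\ by broadcasting the state of the automaton of Lemma~\ref{lem:automaton-hierarchical-recurring}) would genuinely add information and could change which instances are solvable, so that move is not available either. The same worry already touches your use of the rank signal of Lemma~\ref{lem:rank-observable}, whose information-consistency argument presumes that indistinguishable histories are both hierarchical. In short, the forward transfer of strategies is fine, but the backward transfer --- the direction actually needed for decidability --- is missing, and I do not see how to complete it along these lines.

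The paper takes a path that sidesteps the boundary problem entirely: it invokes the epistemic unfolding (tracking construction) of \cite{BKP11}, which reduces the distributed game to a two-player perfect-information game whose nodes are epistemic models recording all players' information sets jointly. Finiteness up to homomorphic equivalence then follows from two facts: totally ordered epistemic models are finite in number up to homomorphism, and by Corollary~\ref{cor:gapsize} every excursion through non-hierarchical histories has bounded depth. Observability of the winning condition is exactly the hypothesis under which the quotienting theorem of \cite{BKP11} applies. Because the epistemic models are maintained by a global construction rather than by each player individually, no player ever needs to recognise where a block begins. To salvage your macro-move idea you would need an argument of comparable strength for why the players' uncertainty about the segmentation is harmless --- which is, in effect, what the epistemic unfolding provides.
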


\begin{proof}
The argument relies on the tracking construction 
decribed in~\cite{BKP11}, which reduces the problem of solving 
distributed games with imperfect information for $n$ players against
Nature to that of solving a zero-sum game for two players 
with perfect information. 
The construction proceeds via
an unravelling process 
that generates epistemic models of the player's information along the
rounds of a play,
and thus encapsulates their uncertainty. 

This process
described as ``epistemic unfolding''
in the paper \cite[Section~3]{BKP11} is outlined as follows.
An \emph{epistemic model} for a game graph $G$ with the usual
notation, is a 
Kripke structure $\calK = (K, (Q_v)_{v \in V}, (\sim^i)_{1 \le i \le n})$ 
over a set $K$ of histories of the same length in in $G$, 
equipped with predicates $Q_v$ designating the histories that end in position
$v \in V$ and with 
the players' indistinguishability relations $\sim^i$.
The construction keeps track of how the knowledge 
of players about the actual history is updated during a round,
by generating for each epistemic model $\calK$
a set of new models, one for 
each assignment of an action profile~$a_k$ 
to each history $k \in K$ such that  
the action assigned to any player~$i$  
is compatible with his information, i.e.
for all $k, k' \in K$ with
$k \sim^i k'$, we have $a_k^i = a_{k'}^i$. 
The update of a model~$\calK$ with such an action 
assignment~$(a_k)_{k\in K}$ leads to a new, 
possibly disconnected epistemic model $\calK'$ 
over the universe 
\begin{align*}
  K' = \{k a_k w \mid k \in K \cap Q_v \text{ and } (v,a_k,w) \in
  E \},
\end{align*} 
with predicates $Q_w$ designating the histories $k a_k w \in K'$,
and with $k a_k w \sim^i k' a_k w'$ whenever
$k \sim^i k'$ in~$\calK$ and $w \sim^i w'$ in~$G$.
By taking the connected components of this updated model under 
the coarsening $\sim := \bigcup_{i = 1}^{n}\!\!\sim^i$, 
we obtain the 
set of epistemic successor models of $\calK$ in the unfolding. 
The tracking construction starts from 
the trivial model that consists only of the initial 
position of the game~$G$. 
By successively applying the update, it unfolds a tree 
labelled with  
epistemic models, which corresponds to a game graph $G'$ for two players 
with perfect 
information
where the strategies of one player translate into 
distributed strategies in~$G$
and vice versa. According to~\cite[Theorem 5]{BKP11}, for any winning condition,
a strategy in~$\calG'$ is winning if and
only if the corresponding 
joint strategy in~$\calG$ is so.
 
The construction can be exploited algorithmically if the
perfect-information tracking of a
game can be folded back into a finite game. 
A homomorphism from an epistemic model 
$\calK$ to $\calK'$ is a function $f: K \to K'$ that preserves the
state predicates and the indistinguishability relations, that is, 
$Q_v(k) \Rightarrow Q_v (f(k)) $ and 
$k \sim^i k' \Rightarrow f(k) \sim^i f(k')$. 
The main result of~\cite{BKP11} shows that, whenever two nodes of the unfolded 
tree carry homomorphically equivalent labels, 
they can be identified without changing the 
(winning or losing) status of the game~\cite[Theorem 9]{BKP11}. 
This holds for all imperfect-information games 
with $\omega$-regular winning conditions that are observable. 
Consequently, the strategy synthesis problem is decidable for a
class of such games, whenever  
the unravelling process of any game in the class 
is guaranteed to generate only finitely many
epistemic models,  
up to homomorphic equivalence.

Game graphs with recurring hierarchical information satisfy this
condition. Firstly, for a fixed game, there exist only finitely many
epistemic models, up to homomorphic equivalence, 
where the $\sim^i$-relations are totally ordered by
inclusion~\cite[Section 5]{BKP11}. In other words,
epistemic models of bounded size are sufficient to describe 
all histories with hierarchical information. 
Secondly, 
by~Corollary~\ref{cor:gapsize}, from any history with hierarchical information,
the (finitely branching) tree of continuation histories with incomparable information is of
bounded depth, hence only  finitely many epistemic models can occur in
the unravelling. 
Overall, this implies that every game with recurring hierarchical
information and observable winning condition has a finite quotient
under homomorphic equivalence. 
According to~\cite[Theorems 9 and 11]{BKP11}, we can conclude that the distributed strategy
problem for the class is finite-state solvable.

\qed
\end{proof}

We point out that 
the number of hierarchic 
epistemic models in games, and thus the complexity of our synthesis procedure,
grows nonelementarily with the number of players. 
This should not come as a surprise, as the solution
applies in particular to games with hierarchical observation 
under safety or reachability conditions (here, the distinction 
between observable and non-observable conditions  
is insubstantial), and it is known that already in this case
no elementary solution exists (see \cite{PetersonRei79,AzharPetRei01}).

\section{Games and Architectures}

The game perspective focuses on the 
interaction between players letting their individual entity 
slide into the background. 
For instance, two players may interact locally without
observing actions of a third, more remote player. 
Or, there may be independent teams specialised in solving particular 
tasks independently in response to inputs received from a central authority.
Such organisational aspects can be crucial for 
coordinating distributed processes, 
yet they are hardly apparent from the representation of the system as
a game graph.

\subsection{Monitored architectures}

To interpret our results with regard to processes and the 
infrastructure in which they interact, 
we wish to illustrate how the game model relates to the 
standard framework of distributed architectures of 
reactive systems introduced 
by \cite{PnueliRos90}. 
Here, we formulate the setting in more general terms 
to allow for a meaningful translation of games into architectures, 
and for the presentation of new decidable architectures.  
In our framework, 
processes are equipped with a local 
transition structure according to which actions are enabled or disabled 
depending on previous actions and observations. 
Most importantly, the communication structure is not hard-wired as in the classical setting, 
but instead implemented via a
finite-state device that can dispatch information 
in different ways depending on the previous run. 

A \emph{process} is represented by an 
automaton $\calP :=(Q, A, B, q_0, \delta)$ on a set~$Q$ of states, 
with output alphabet~$A$, input alphabet~$B$,
initial state~$q_0 \in Q$, 
and a partial transition function $\delta: Q \times A \times B \to Q$. 
The symbols of the output alphabet~$A$ are called actions
and those of the input alphabet~$B$ observations. 
We say that an action~$a \in A$ is \emph{enabled} at a state~$q \in Q$, 
if $(q, a, b) \in \mathrm{dom}( \delta )$ for some~$b \in B$.  
We assume that for every state~$q \in Q$ 
the set $\mathrm{act}( q)$ of enabled actions is nonempty, 
and that for each enabled action~$a \in \mathrm{act}( q)$ 
and each observation~$b \in B$, the transition $\delta(q, a, b)$ is defined.

The automaton describes the possible behaviour of the process as follows:
Every run starts in the initial state $q_0$. In any state~$q$, 
the process chooses one of the available actions~$a \in \mathrm{act}( q )$, 
then it receives an observation $b \in B$ and 
switches into state~$\delta( q, a, b)$ to proceed.
From a local perspective, the automaton expresses
which actions of the process are enabled or disabled, 
depending on the sequence of previous actions 
and observations.
From an external perspective,~$\calP$ defines a (local) \emph{behaviour} relation
$R_\calP \subseteq (A \times B)^\omega$ consisting of the sequences of action-observation 
symbols in possible runs, which 
represents
how the process may act in response to the observation sequence received as 
input.\footnote{Such automata are called 
synchronous sequential transducers or nondeterministic 
generalised sequential machines in the literature.} 
In the following, we will not distinguish between 
the automaton~$\calP$ and the defined behaviour relation~$R_\calP$.  

A \emph{black-box} process is one with a single state, i.e., 
the set of enabled actions is independent of previous actions or observations.   
A \emph{white-box} process is one where every state has precisely one enabled action -- 
hence, a Moore machine, or a finite-state strategy 
in the game setting.  
A \emph{program}, or strategy for process~$\calP$ is a white-box $\calS$ 
with behaviour $\calS \subseteq \calP$. 


A \emph{monitored architecture} is represented by two elements: 
a collection of processes~$\calP^1, \dots, \calP^n$ plus 
a distinguished black-box process $\calP^0$ called \emph{Environment} on the one hand,  
and a specification of the communication infrastructure, 
called \emph{view monitor}, on the other hand.
When we refer to process~$i$, 
we identify all associated  elements with a superscript and write
$\calP^i = (Q^i, A^i, B^i, q_0^i, \delta^i)$.
The set of \emph{global actions} is the 
product $\Gamma := A^0 \times A^1 \times \dots \times A^n$ of the 
action sets of all processes, including the Environment.
Then, a \emph{view monitor} is a Mealy machine 
$\calM = (M, \Gamma, B, m_0, \mu, \nu)$ 
with input alphabet~$\Gamma$ and an output alphabet 
consisting of the product 
$B := B^1 \times \dots \times B^n$ 
of the observation alphabets of all processes.

Hence, the monitor~$\calM$
transforms sequences of global actions into a tuple of   
observations, one for each process~$i \in \{ 1, \dots n \}$. 
The observations of the Environment are irrelevant, we always assume that~$B^0 = \{ 0 \}$. 
Since the Environement is a black box, it 
is completely specified by its set of actions, which 
already appears in the description of the view monitor.  
Therefore, $(\calP^1, \dots, \calP^n, \calM)$ 
yields a complete description of a monitored architecture. 


A \emph{global behaviour} in a monitored architecture is an 
infinite sequence of global actions and observations
$\pi := (a_0, b_0) (a_1, b_1), \dots  \in (\Gamma \times B)^\omega$
such that the observations corresponds to the 
output of the view monitor, $b_t = \mu( a_0, \dots a_t )$ for all~$t\ge0$ and,  
for every process~$i$, the corresponding action-observation sequence $(a^i_0, b^i_0)(a^i_1, b^i_1) \dots$ 
represents a behaviour in $\calP^i$.   
We refer to the sequence of $a_0 a_1 \dots$ of global actions in a global behaviour
as a (global) \emph{run}.

A \emph{distributed program} is a collection 
$\calS = (S^1, \dots, S^n)$ of programs, one for every process. 
A global behaviour~$\pi$ is \emph{generated} by a distributed program~$\calS$, 
if for every process~$i$, 
the action-observation sequence $(a^i_0, b^i_0)(a^i_1, b^i_1) \dots$ represents a behaviour 
in~$\calS^i \subseteq \calP^i$.


The \emph{run tree} of a distributed program~$\calS$ is the set 
$T_\calS \subseteq \Gamma^*$ 
of prefixes of global runs generated by~$\calS$. 
Properties of runs are described by a linear-time or branching-time specification, 
given by an $\omega$-word automaton or an $\omega$-tree automaton over~$\Gamma$, respectively. 
A run tree~$T$ satisfies a linear-time specification if all branches are accepted, 
and it satisfies a branching-time specification, if the tree $T$ is accepted by the 
specification automaton.
We say that a distributed program~$\calS$ is \emph{correct} with respect to a 
specification 
if the generated run tree~$T_\calS$ satisfies the specification.
Given an architecture together with a specification~$\Phi$, 
the distributed synthesis problem asks whether there exists a distributed
program~$\calS$ that is correct with respect to the specification~$\Phi$.


\subsection{From architectures to games and back}

A monitored architecture $(\calP^1, \dots \calP^n, \calM)$, can be  
translated into a distributed game $G = (V, E, \beta)$ for $n$ players as follows. 
The set~$V$ of positions is 
the product 
$B \times M \times Q^1 \times \dots \times Q^n$ of the global observation space
with the state sets of the view monitor and of all processes 
(excluding the Environment), with initial position 
$(b_0, m_0, q_0^1, \dots, q_0^n)$ for some (irrelevant) initial observation.
There is a move $((b, m, q), a, (b', m', q')) \in E$, 
whenever the components are updated correctly, that is, 
the global observation $b' = \nu(m, a)$, 
the memory state of the view monitor $m' = \mu(m, a)$, 
and the local control state $q'^i := \delta( q^i, a^i, b'^i )$ 
for every process~$i$.
(Notice that the observation at the source does not matter). 
Finally, the observation function for each player~$i$ assigns 
$\beta^i(b, m, q) := b^i$.

Every prefix of a global run in the architecture induces a 
unique run prefix in the view monitor~$\calM$ and thus a 
history in the associated game, such that any program~$s^i$ for a process~$i$ 
corresponds to a strategy for player~$i$ and every specification induces 
a winning condition expressible by an automaton over the states of~$\calM$. 
Further, the outcome of a 
distributed finite-state strategy induces the run tree generated 
by the corresponding distributed program, hence
every distributed programs that satisfies the specification   
correspond to winning strategy, and vice versa. 

In this paper, we only considered games with linear-time winning conditions.
Nevertheless, the automata technique of \cite{KupfermanVar01} 
for solving games with hierarchical observation (Theorem~\ref{thm:KV01}) 
also applies to the branching-time setting. 
Thus, our decidability results for static 
and dynamic hierarchical information 
(Theorem~\ref{thm:static-decidable} and~\ref{thm:dynamic-decidable})
generalise immediately to branching-time winning conditions. 

In summary, we can regard system architectures as a syntactic variant of 
games, modulo the transformation described above.

\begin{proposition}\label{prop:architecture-to-game}
Every instance of the distributed synthesis problem over architectures 
can be reduced to an instance over games
such that the finite-state solutions are preserved. 
For an architecture $(\calP^1, \dots, \calP^n, \calM)$ and a specification given by a parity 
automaton~$A_\Phi$, 
the reduction runs in time  $O(\, |\calM| \,( | A_\Phi | + |\calP^1 \times \dots \times \calP^n|)\,)$.  
\end{proposition}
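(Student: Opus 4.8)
The plan is to spell out the translation already sketched in the preceding subsection and to check that it meets the two requirements of the statement. Given an architecture $(\calP^1,\dots,\calP^n,\calM)$ with processes $\calP^i=(Q^i,A^i,B^i,q_0^i,\delta^i)$, environment action set $A^0$, and view monitor $\calM=(M,\Gamma,B,m_0,\mu,\nu)$, I would build the game graph $G$ whose positions are tuples $(a,b,m,q^1,\dots,q^n)$ recording the global action that produced the last move, the resulting global observation, the monitor memory, and the local process states, with an arbitrary dummy action at the initial position, plus an absorbing sink~$\ominus$. A move updates the components by $b'=\nu(m,a')$, $m'=\mu(m,a')$ and $q'^i=\delta^i(q^i,a'^i,b'^i)$ whenever the chosen action profile~$a'$ is enabled at every $q^i$, and is redirected to~$\ominus$ otherwise, the first component of the target simply storing~$a'$. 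Here the players $1,\dots,n$ choose their components of~$a'$ while Nature resolves the environment's action~$a'^0\in A^0$, so that each move is fully determined; and $\beta^i(a,b,m,q):=b^i$. Adding the sink is merely a device to meet the no-dead-ends convention of Section~2 while faithfully encoding the local enabling structure of the processes.

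Next I would turn the parity specification $A_\Phi$ over $\Gamma$ into a winning condition on $G$. Since the run $a_0a_1\dots$ underlying a play is, by construction, exactly the sequence of first components of the positions visited after the initial one, a play is winning if, and only if, $A_\Phi$ accepts that sequence; reading $A_\Phi$ through the projection to the recorded-action component, composed with the monitor as in the preceding subsection, and declaring every play through~$\ominus$ losing, yields a parity winning condition expressible by an automaton over the positions of~$G$. For branching-time specifications one proceeds identically with a tree automaton over~$\Gamma$, using that the run tree of a distributed program corresponds to the outcome tree of the associated distributed strategy.

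The correctness argument rests on a single bijection: prefixes of global runs generated in the architecture correspond, faithfully and respecting prefixes, to the histories of~$G$ that avoid~$\ominus$, and this bijection commutes with the process-$i$ projection, so that two global run prefixes yield the same observation sequence for process~$i$ exactly when the corresponding histories are $\sim^i$-indistinguishable. Consequently an information-consistent strategy for player~$i$ that never selects a disabled action is literally the same object as a program for~$\calP^i$ --- both are Moore machines over $(B^i)^*$ outputting enabled actions --- and, since the enabling sets are always nonempty, no distributed winning strategy ever needs to enter~$\ominus$. Hence the winning distributed strategies of $(G,W)$ are precisely the distributed programs whose run tree satisfies~$\Phi$, and, because finite-state strategies are exactly the Moore-implementable ones, that is, the white-box processes, that is, the programs, the finite-state solutions are preserved in both directions.

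For the complexity estimate I would simply count: the arena, including the sink and the recorded-action component, has $O(|\calM|\cdot|\calP^1\times\dots\times\calP^n|)$ positions and moves, since the sizes of~$\Gamma$,~$B$ and~$M$ are already subsumed in the description size of~$\calM$, and the winning automaton, obtained from~$A_\Phi$ by composing with~$\calM$, has size $O(|\calM|\cdot|A_\Phi|)$. Adding the two, and noting that both the arena and the automaton can be generated in time linear in their size, gives the announced bound. The one point requiring care is not deep: it is making sure the positions carry just enough auxiliary data --- the last action profile, to recover the run the specification speaks of, and the local states~$q^i$, to force strategies to respect process enabling --- so that the equivalence of solutions holds verbatim, while keeping the arena blow-up ($|\calM|\cdot|\calP^1\times\dots\times\calP^n|$) and the winning-condition blow-up ($|\calM|\cdot|A_\Phi|$) additive rather than multiplicative; once the positions are set up this way, everything else is routine verification.
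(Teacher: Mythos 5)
Your construction is essentially the one the paper gives in the subsection preceding the proposition (positions combining the global observation, the monitor memory, and the local process states; observations given by projection; programs identified with finite-state strategies), and your correctness and counting arguments match the paper's intent. The only differences are that you make explicit two details the paper leaves implicit --- recording the last global action in the position so that the $\Gamma$-specification becomes a winning condition on plays, and adding a losing sink for disabled actions to reconcile partial transition functions with the no-dead-ends convention --- both of which are consistent with devices the paper uses elsewhere and do not affect the stated bound.
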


For the converse, to translate a given game graph $G = (V, E, \beta)$ for $n$ players 
into an architecture, we proceed as follows. For every player~$i$, we first consider
the localised variant $G^i = (V^i, E^i, \beta^i)$ of the game graph, obtained by replacing
every action label~$a$ in a move $(u, a, v) \in E$ with the component $a^i$, 
and retaining only the observation function $\beta^i$. 
Then, we view the resulting one-player game graph as a 
nondeterministic finite-state automaton over the alphabet~$A^i \times B^i$,   
by placing the observation symbol~$b^i = \beta^i( v )$ from the target~$v$ 
of any move $(u, a^i, v)$ on the 
incoming edge to yield the transition $(u, (a^i, b^i), v)$. 
Finally we determinise and minimise this automaton. The resulting automaton
$(Q, A^i\times B^i, v_0, \delta)$
may not be complete in the second input component, as it is required for a 
process. Therefore, we add a fresh sink~$z$ with incoming transitions 
$(q, (a^i, b^i), z)$ from any state~$q$ where action $a^i$ is enabled, but 
$\delta(q, (a^i, b^i))$ is undefined. 
The automaton~$\calP^i$ obtained in this way corresponds to a process.

To construct the associated view monitor, we expand the alphabet~$A$ 
of joint actions by adding a set~$A^0$ of Environment actions, 
which we call \emph{directions}.  
Intuitively, directions correspond to choices of Nature. 
Accordingly, we choose $A^0$ to be of size outdegree of~$G$ and  
expand the action profiles in the moves of $E$ with directions, 
such that at any position, if there are two different outgoing moves, 
the direction in their action labels differs. 
Thus, we obtain a representation of~$G$ as a deterministic automaton with 
transition relation $\mu: V \times \Gamma \to V$ over the 
expanded alphabet $\Gamma := A^0 \times A^1 \times \dots \times A^n$. 
Additionally, we define an output function
$\nu: V \times \Gamma \to B$ that assigns to every 
position~$u$ and every direction-action profile $a \in \Gamma$, the profile of
observations $\beta^i( v )$ for $v = \mu( u, a)$. 
Finally,  we consider the
Mealy automaton $\calM = (V, \Gamma, B, v_0, \mu, \nu)$ as a 
view monitor for the collection of processes~$\calP^i$. 

By way of this translation, the histories in~$G$ correspond 
to prefixes of runs in the architecture~$(\calP^1, \dots, \calP^n, \calM)$, 
such that any finite-state strategy for a player~$i$ corresponds to 
a program for process~$i$, and the outcome of 
any finite-state strategy profile~$s$
corresponds to the run tree generated by the distributed program corresponding to~$s$. 
Since the game graph is deterministic in the action alphabet expanded with directions, 
every winning condition induces a specification, such that 
the solutions to the distributed synthesis problem are preserved. 
The specification is obtained as a product of the winning-condition automaton  
synchronised  with the  game graph (in the determinised variant with direction labels); 
in addition, any run that reaches a sink is deemed correct. 
Accordingly, the specification automaton is polynomial in the size of the game.
However, as the process automata are constructed by determinisation and minimisation 
of the localised 
game graph, the overall translation
involves a necessary exponential blowup (see \cite{vanGlabbeek2008}).

\begin{proposition}\label{prop:game-to-architecture}
Every instance of the distributed synthesis problem over games
can be reduced, in exponential time, to an instance over architectures
such that the finite-state solutions are preserved.
\end{proposition}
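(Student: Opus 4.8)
The plan is to carry out the translation sketched just above and verify that it preserves finite-state solutions, with an exponential cost arising solely from the process automata. First I would assemble the three ingredients of the target architecture. For each player~$i$, form the localised game graph $G^i$ by projecting every move label onto its $i$-th component and retaining only the observation $\beta^i$; read $G^i$ as a nondeterministic automaton over $A^i \times B^i$ by pushing the target's observation onto the incoming edge; determinise and minimise; and finally complete the result in the observation component by routing every missing transition to a fresh sink~$z$, obtaining a process $\calP^i = (Q^i, A^i, B^i, q_0^i, \delta^i)$. For the communication device, expand the action alphabet of~$G$ by a set~$A^0$ of \emph{directions} of size the outdegree of~$G$, chosen so that at every position any two outgoing moves carrying the same player-action profile get distinct directions; then $\mu : V \times \Gamma \to V$ is a deterministic (partial) transition function, and together with $\nu(u,a) := \beta(\mu(u,a))$ it defines the Mealy machine $\calM = (V,\Gamma,B,v_0,\mu,\nu)$, which we take as view monitor for $\calP^1, \dots, \calP^n$.

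Next I would establish the history--run correspondence. By determinism of~$\calM$ in the expanded alphabet, every history $v_0, v_1, \dots, v_\ell$ of~$G$ is realised by a \emph{unique} run prefix of~$\calM$ (at each step one picks the direction selecting the actual successor), along which the observation profiles emitted are exactly $\beta(v_1), \dots, \beta(v_\ell)$. Because $\calP^i$ arose from determinising the localised graph, its control state after such a prefix is a function of $\beta^i(v_1)\cdots\beta^i(v_\ell)$ alone, hence of the information set $P^i(\pi)$. Consequently a finite-state strategy $s^i$ for player~$i$ --- being information-consistent and readable off the $\beta^i$-history --- induces a white-box restriction $\calS^i \subseteq \calP^i$, and conversely any program for $\calP^i$, driven purely by the observation history, spells out an information-consistent finite-state strategy. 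The outcomes of a strategy profile~$s$ then correspond branch for branch to the runs generated by the associated distributed program, so the run tree $T_\calS$ is exactly the outcome set of~$s$; and this map between finite-state strategy profiles and distributed programs is onto in both directions.

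It remains to transport the winning condition and to bound the cost. I would take the specification~$\Phi$ to be the product of the winning-condition automaton with the determinised, direction-labelled game graph, an $\omega$-automaton over~$\Gamma$; any run that ever reaches a sink~$z$ (equivalently, any ``escaping'' run not corresponding to a history of~$G$) is declared correct, so such runs play no role. With this~$\Phi$, a distributed program is correct if and only if the corresponding strategy profile is winning in~$G$, so by the previous paragraph finite-state solutions are in bijection; and since the folding into $\calM$ and the product with the winning-condition automaton are polynomial in $|G|$ while each $\calP^i$ costs one subset construction on an automaton of size $|V|$, the whole reduction runs in exponential time. I do not expect a genuinely hard step: the construction is spelled out in the text. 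The point needing the most care is the process side --- checking that determinising, minimising and completing the localised game graph neither discards a legal program nor admits a spurious one, which rests on a process, viewed as an automaton, being deterministic in the action--observation pair, so that its reachable state is a function of the player's observation history --- together with recalling, via~\cite{vanGlabbeek2008}, that the exponential blow-up in $|V|$ is unavoidable.
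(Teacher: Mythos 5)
Your proposal follows the paper's own construction essentially step for step: the localised game graphs determinised into processes with a completing sink, the direction-expanded game graph serving as the deterministic Mealy view monitor, the history--run and strategy--program correspondences, the product specification with sink-reaching runs deemed correct, and the exponential cost isolated in the subset construction with the lower bound via \cite{vanGlabbeek2008}. It is correct and takes the same route as the paper.
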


\subsection{Pipelines, forks, and hierarchical observation}

The classical setting of \cite{PnueliRos90} corresponds to the special case of 
architectures where all processes $\calP^1, \dots, \calP^n$ are black boxes, 
and the action alphabet $A^i$ of each process~$i$ consists of a product 
$X^i \times Y^i$ of sets of \emph{control} and \emph{communication} signals.
The input alphabets and the communication infrastructure are described 
by a directed graph $((\{ 1, \dots, n \}, \mathrm{\to})$ 
with arcs~$i\to j$ expressing that, in every round, 
the communication signal $y^i$ emitted with the
action~$a^i=(x^i, y^i)$ of process~$i$ is received by process~$j$. 
Thus, each player~$i$ has an observation alphabet $B^i$ formed by 
the product of the signal sets $Y^j$ of all its predecessors $j \to i$ in the graph. 
In our framework, the communication graph corresponds to 
a view monitor~$\calM^i$ with only one state, which works as follows: upon input of
a global action $((x^0, y^0), (x^1, y^1) \dots, (x^n,y^n))$, output 
the global observation~$b$ composed of the collection of signals $b^i:=(y^j|~j \to i)$ 
for each player~$i \in \{1, \dots, n\}$.


A \emph{pipeline} architecture in the Pnueli-Rosner setting 
is one where the communication graph 
is a directed path $0 \to 1 \to 2 \dots \to n$. 
The main result of~\cite{PnueliRos90}, later extended by \cite{KupfermanVar01},
shows that the distributed synthesis problem is solvable for 
pipelines architectures. 
We prove that 
games with hierarchical observation and regular winning conditions 
can be translated into  
pipeline architectures with regular specifications
such that the solutions to the distributed synthesis problem are preserved, 
thus justifying our statement in Theorem~\ref{thm:KV01} as a 
corollary of the cited results.


\begin{theorem}\label{thm:hierarchical-obs-pipeline}
The distributed synthesis problem for games with hierarchical observations 
reduces to the corresponding problem for pipeline architectures.  
\end{theorem}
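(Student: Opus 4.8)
The plan is to make the hierarchical order among the players appear as the topology of a pipeline. First I would reorder the players so that $1 \preceq 2 \preceq \cdots \preceq n$ (player~$1$ most informed) and, after shrinking each observation alphabet to its image in~$G$, fix for every~$i$ a function $g^i \colon B^i \to B^{i+1}$ with $\beta^{i+1} = g^i \circ \beta^i$; such $g^i$ exist precisely by the hierarchical-observation hypothesis. Then I would set up a Pnueli--Rosner pipeline $0 \to 1 \to \cdots \to n$ over black-box processes $\calP^1,\dots,\calP^n$ and an Environment $\calP^0$, where process~$i$ has action alphabet $A^i = X^i \times Y^i$ with control component $X^i$ equal to the set of game actions of player~$i$ and communication component $Y^i := B^{i+1}$ forwarded down the pipe (with $Y^0 := B^1$ and $Y^n$ a singleton). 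Because the communication graph is a path, the input alphabet of process~$i$ is then $Y^{i-1} = B^i$, exactly the observation alphabet of player~$i$.

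The specification~$\Phi$ would be an $\omega$-regular word condition over the joint-action alphabet $\Gamma$ that, along a run $a_0 a_1 \cdots$ with $a_t = \bigl(y^0_t,(x^1_t,y^1_t),\dots,(x^n_t,y^n_t)\bigr)$, does three things: (i) it simulates~$G$, keeping the current position~$v_t$ in its state and advancing to a successor $v_{t+1}$ of~$v_t$ under the game action profile $(x^1_t,\dots,x^n_t)$ that is consistent with the Environment's signal, i.e.\ with $\beta^1(v_{t+1}) = y^0_t$; (ii) it checks that every process forwards honestly, $y^i_t = g^i(y^{i-1}_{t-1})$, a local test on consecutive letters; and (iii) it requires the simulated play $v_0 v_1 \cdots$ to lie in~$W$. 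Several successors of~$v_t$ may carry the same $\beta^1$-value --- and then stay indistinguishable to every player forever, so a distributed strategy must win from all of them --- so step~(i) is genuinely nondeterministic; I would therefore take~$\Phi$ to be the complement of the word automaton that guesses a \emph{losing} play decoding, so that a run satisfies~$\Phi$ exactly when \emph{all} its decodings are winning. Everything here stays $\omega$-regular.

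For the correspondence, from a distributed winning strategy $s=(s^i)$ for~$\calG$ I would let process~$i$ read its observation history --- which by~(ii) is a copy of $\beta^i(v_1)\beta^i(v_2)\cdots$ --- reconstruct the history it is consistent with, and output $x^i_t := s^i(v_0\cdots v_t)$ together with the forced forwarding symbol; information-consistency of~$s^i$ makes this well defined, and every decoded play of the resulting behaviour is an outcome of~$s$, hence in~$W$. Conversely, from a correct distributed program, the control signal of process~$i$ is a function of its observation history, which carries no more than~$\beta^i$ of the play --- this is exactly where honest forwarding matters, since the bounded channel content $g^i(\cdot)$ prevents process~$1$ from leaking its finer observation down the pipe --- so reading it as a map on $\beta^i$-histories yields an information-consistent strategy for player~$i$, and correctness against~$\Phi$ forces the induced play into~$W$. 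Hence~$\calG$ admits a (finite-state) distributed winning strategy if and only if the pipeline admits a (finite-state) correct distributed program, and since the latter problem is solvable by~\cite{PnueliRos90,KupfermanVar01}, Theorem~\ref{thm:KV01} follows as announced.

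The step I expect to be most delicate is the timing. In a pipeline a signal takes one round to travel one hop, so the information of process~$i$ at round~$t$ lags behind the Environment by~$i$ rounds, whereas the synchronous game hands every player $\beta^i$ of the \emph{current} position with no delay. I would absorb this mismatch either by padding~$G$ so that each move is stretched over~$n$ micro-rounds, releasing $\beta^i(v)$ to player~$i$ only at micro-round~$i$ --- a transformation that trivially preserves finite-state solutions, since the inserted positions have a unique successor and are observationally transparent to all players --- or, equivalently, by letting the specification automaton keep a sliding window of the last~$n$ positions so that it can collect the control signals belonging to one game round as they arrive on successive architecture rounds. The secondary subtlety is the one already noted: letting the Environment's one-round choice pin down the play for the specification without revealing to process~$1$ anything beyond $\beta^1(v_{t+1})$, which is what forces the nondeterministic-decoding-plus-complementation form of~$\Phi$.
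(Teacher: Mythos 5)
Your reduction is correct and shares the paper's skeleton: order the players, extract the composition maps $g^i$ with $\beta^{i+1}=g^i\circ\beta^i$ (the paper's $f^{i+1}$), build the pipeline $0\to1\to\cdots\to n$ in which each process forwards the next player's observation one hop down, and let the specification absorb the game structure, the winning condition, and the one-hop-per-round latency (your ``sliding window'' is exactly the paper's $\mathrm{pipe}(\alpha)$ shift $a'_t=(a^0_t,a^1_{t+1},\dots,a^n_{t+n})$). The execution differs in two genuine ways. First, the paper does not let the Environment emit $\beta^1(v_{t+1})$: it first passes through its monitored-architecture translation (Proposition~\ref{prop:game-to-architecture}), which enlarges the action alphabet with Environment ``directions'' that make the game graph deterministic, so each run decodes to a \emph{unique} play and the specification is simply the shifted image of~$W$ synchronised with the game graph. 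You keep the Environment's signal at exactly $\beta^1$, which makes the decoding relation nondeterministic and forces your universal, complementation-based specification (``all decodings winning''); this is still $\omega$-regular and arguably matches the information content of the game more faithfully, at the price of a complementation blow-up the paper avoids. Second, the paper builds explicit process automata (determinised localised game graphs, with the forwarding symbol stored in the state) and only replaces them by black boxes at the very end, restricting the specification to the runs those automata generate; you start from black boxes and enforce honest forwarding purely as a conjunct of the specification. That works, but you should make the forwarding clause an unconditional conjunct (not folded into the decoding automaton), so that it also binds on runs whose Environment signal sequence admits no decoding --- it is always implementable by the processes, and it is what guarantees in the converse direction that process~$i$'s observation history is a function of the $\beta^i$-history, hence that the extracted strategy is information-consistent. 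With that reading, both directions of your correspondence go through, and the delicate points you single out (latency, nondeterministic decoding) are precisely the ones the paper's proof also has to handle.
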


\begin{proof}
Consider a game graph $G = (V, E, \beta)$ that yields hierarchical observation
with a winning condition~$W \subseteq V^\omega$ given by an automaton. 
For simplicity, let us assume that the information hierarchy among the~$n$ 
players follows the nominal order $1 \le \dots \le n$. 
Accordingly, the observation of each player~$i > 1$
is determined by the observation of player~$i-1$, in the sense that 
there exists a function $f^i: B^{i-1} \to B^i$ 
such that $\beta^i( v ) = f^i( \beta^{i-1} ( v ))$
for every position $v \in V$.
By defining $f^1: B^1 \to B^1$ to be the identity,
we can hence write $\beta^i( v ) = f^{i} \circ f^{i-1} \circ \dots \circ f^1 ( \beta^1( v ) )$.

First, we transform the given game instance  
into a monitored architecture $(\calP^1, \dots, \calP^n, \calM)$ and a 
specification~$\Phi$ constructed as in 
Proposition~\ref{prop:game-to-architecture}. 
As the game yields hierarchical information, 
the output function of the view monitor~$\calM$ satisfies
$\nu^i( m, a ) = f^{i} \circ f^{i-1} \circ \dots \circ f^1 ( \nu^1( m, a ) )$, 
for each player~$i$, every memory state $m$, 
and global action $a$. 
Recall that~$\calM$ is
deterministic in the first component of its input alphabet, which 
corresponds to Environment actions (directions). Hence, we can 
write $\nu^1( m, a)$ as $\nu^1( m, a^0 )$.

Next, we sequentialise the obtained architecture:
For each process~$i < n$, we expand the action alphabet~$A^i$ 
with the observation alphabet~$B^{i+1}$ of the next player. 
Thus, we set 
$\hat{A}^i := A^i \times B^{i+1}$ and 
consider the actions in~$A^i$ as internal signals and the 
observations~$B^{i + 1}$ as communication signals. 
The actions of the last process
carry no relevant communication symbols, 
we set~$\hat{A}^n = A^n \times \{ 0 \}$. 
Now, we modify each process~$i < n$ to output in addition to his action $a^i$ 
(now an internal signal), the observation value $f^{i+1}( b^i )$ as
a communication symbol -- where~$b^i$ refers to the observation 
received by process~$i$ in the previous period (or the default initial observation); 
to implement this, the value $f^{i+1}( b^i )$ is stored in the state of the process
when receiving $b^i$ and then added to all outgoing transitions. 
Let us call the modified processes~$\hat{\calP}^i$. 
Again, the last process remains unchanged. 

Notice 
that process~$\hat{\calP}^1$ 
relies on observations $\nu^1( m, a^0 )$ 
rather than just environment action $a^0$. To fix this, we 
take its synchronised product with $\calM$. The resulting process 
$\calM \times \hat{\calP}^1$  
inputs only environment actions and updates the 
monitor state internally, to yield the same output $\nu^1(m, a^0)$ as 
$\hat{\calP}^1$.

As a view monitor~$\hat{\calM}$, we take the standard monitor for the communication graph 
$0 \to 1 \to 2 \to \dots \to n$, in the sense of \cite{PnueliRos90}, 
as described in the beginning of the subsection. 

To account for the latency introduced by sequentialisation,
we adjust the specification~$\Phi$ over the original set of global actions 
$\Gamma = A^0 \times A^1 \times \dots A^n$. 
For every infinite word $\alpha = a_0 a_1 \dots \in \Gamma^\omega$ 
formed of global actions $a_t = (a^0_t a^1_t \dots a^n_t)$ for all~$t \ge 0$, we denote
by $\mathrm{pipe} ( \alpha ) := a'_0, a'_2, \dots \in \Gamma^\omega$ 
the infinite sequence of global actions 
where $a'_t = (a_t^0, a_{t+1}^1, \dots, a_{t+n}^n)$, for all~$t > 0$. 
We can verify that for every global run $\alpha$ in $(\calP, \calM)$, 
the sequence $\mathrm{pipe} ( \alpha )$ represents a run in the 
pipeline $(\hat{\calP}, \hat{\calM})$. Finally, we define the
specification $\hat{\Phi} := \{ \mathrm{pipe}( \alpha ) \in \Gamma^\omega~|~\alpha \in W\}$. 

For the monitored architecture $(\calM \times \hat{\calP}^1, \hat{\calP}^2, \dots, \hat{\calP}^n, 
\hat{\calM})$ and the specification~$\hat{\Phi}$,  
every correct distributed program corresponds to a finite-state distributed 
winning strategy in the original game, and vice versa.

To conclude the reduction, we restrict the specification~$\hat{\Phi}$ 
to the set of global runs generated by the processes of this architecture, 
and replace the processes with black boxes, thus obtaining  
a hard-wired pipeline architecture in the framework of
\cite{PnueliRos90}, which preserves the solutions 
of the distributed synthesis problem for the game~$G$ with hierarchical observation 
from the outset.
\qed
\end{proof}

Interestingly, the direct translation of pipeline architectures into games
does not necessarily result in games with hierarchical observation. 
Consider, for instance a pipeline with three processes~$1 \le 2 \le 3$, each with a one-bit 
communication alphabet. In the corresponding game, while player receives an input bit from the 
environment, player~$2$ can play an action that 
reveals a bit to process~$3$ which is not revealed to player~$1$, thus the information sets of 
player~$1$ and player~$3$ become incomparable. Indeed, the 
architecture model 
does not a priori prevent processes to emit signals that are independent of the received input. 
The condition is ensured only in the implemented system, 
where every process follows its 
prescribed program. For arbitrary architectures (already in the Pnueli-Rosner framework), it  
seems hard to incorporate this condition 
into the process of designing a correct distributed program.

For the case of pipelines, it is easy to work around this circularity. 
The idea is to send the input of every player to all the previous players, as illustrated in 
Figure~\ref{sfig:pipeline}. 
Formally, this amounts to transforming a given architecture~$\calA$ 
on the communication graph 
$(\{1, \dots, n\}, \to)$ 
by adding \emph{feedback links} $i \to j$ 
from any process $i$ to all processes $j < i$. 
Clearly, the resulting architecture~$\calA'$ corresponds to a game with hierarchical observation. 
We argue that the distributed synthesis problem is invariant under this 
 transformation

\begin{figure}
\begin{center}
\subfigure[pipeline with added feedback links]{
  \label{sfig:pipeline}
  \begin{tikzpicture}[scale=0.8]
    
    \node[state1P,minimum width=.5cm,minimum height=.5cm] (p1) at (0,4) {$1$};
    \draw[->,>=latex]  (-1,4) --  (p1.west);
    \node[state1P,minimum width=.5cm,minimum height=.5cm] (p2) at (1.5,4) {$2$};
    \draw[->,>=latex] (p1) -- (p2);

    \node[state1P,minimum width=.5cm,minimum height=.5cm] (p3) at (3,4) {$3$};
    \draw[->,>=latex] (p2) -- (p3);
    
    \node[state1P,minimum width=.5cm,minimum height=.5cm] (p4) at (4.5,4) {$4$};
    \draw[->,>=latex] (p3) -- (p4);

    \node[state1P,minimum width=.5cm,minimum height=.5cm, draw=none, fill=none] (p5) at (6,4) {~};
    \draw[->,>=latex] (p4) -- (p5);
    
    \draw[->,>=latex, dotted] (p4) edge[bend right=30] (p1);
    \draw[->,>=latex, dotted] (p4) edge[bend right=25] (p2);
    \draw[->,>=latex, dotted] (p4) edge[bend right=15] (p3);
    \draw[->,>=latex, dotted] (p3) edge[bend right=25] (p1);
    \draw[->,>=latex, dotted] (p3) edge[bend right=15] (p2);
    \draw[->,>=latex, dotted] (p2) edge[bend right=15] (p1);
    
  \end{tikzpicture}
}
\hspace*{.3cm}
\subfigure[two-way chain]{ 
  \begin{tikzpicture}[scale=0.8]
      
        \node[state1P,minimum width=.5cm,minimum height=.5cm] (p1) at (0,4) {$1$};
        \draw[->,>=latex]  (-1,4) --  (p1.west);
         \node[state1P,minimum width=.5cm,minimum height=.5cm] (p2) at (1.5,4) {$2$};
         \draw[->,>=latex] (p1.15) -- (p2.165);
         \draw[->,>=latex] (p2.195) -- (p1.345);

          \node[state1P,minimum width=.5cm,minimum height=.5cm] (p3) at (3,4) {$3$};
	  \draw[->,>=latex] (p2.15) -- (p3.165);
          \draw[->,>=latex] (p3.195) -- (p2.345);
          
          \node[state1P,minimum width=.5cm,minimum height=.5cm] (p4) at (4.5,4) {$4$};
          \draw[->,>=latex] (p3.15) -- (p4.165);
	  \draw[->,>=latex] (p4.195) -- (p3.345);
                    
      \end{tikzpicture}
  \label{fig:two-way-chain}
}
 \end{center}
\caption{Pipelines and chains (original links solid, added feedback links dotted)}
\end{figure}
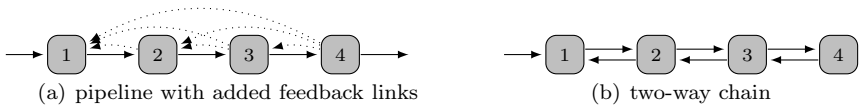 

\begin{lemma}\label{lem:as-if}
Every pipeline can be reduced, by adding feedback links, 
to an architecture that 
corresponds to a game with hierarchical observation and admits the same 
solutions to the distributed synthesis problem.
\end{lemma}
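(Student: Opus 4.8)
The plan is to establish a solution-preserving correspondence, in both directions, between the pipeline~$\calA$ and the feedback-augmented architecture~$\calA'$. Both directions will exhibit a translation of distributed programs that preserves the generated run tree; since the added feedback links only enlarge the observation alphabets of the processes and leave the set~$\Gamma$ of global actions untouched, the specification -- being a property of global runs in $\Gamma^\omega$ -- is literally the same object in both architectures, so preservation of run trees entails preservation of correctness. Combined with the observation (already recorded in the preceding paragraph) that $\calA'$ corresponds to a game with hierarchical observation, this yields the lemma.

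The direction from $\calA$ to $\calA'$ is immediate. Given a correct distributed program $\calS = (S^1, \dots, S^n)$ for the pipeline, we run exactly the same programs in~$\calA'$, letting each process~$j$ discard the extra signals it now receives along the feedback links from the processes $j+1, \dots, n$. Since the programs are unchanged, the communication signals emitted in every round are unchanged, so the set of generated global behaviours -- and hence the run tree -- is identical, and correctness transfers.

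For the converse, suppose $\calA'$ admits a correct distributed program $\calS' = (S'^1, \dots, S'^n)$. The key structural fact is that, in the feedback-augmented pipeline, the processes $\{j+1, \dots, n\}$ together form a closed subsystem whose only input from outside the group is the outgoing communication signal of process~$j$: every process~$k$ with $j < k \le n$ receives its chain input from process~$k-1$ -- which, for $k > j+1$, already lies inside the group -- together with feedback from processes $k+1, \dots, n$, all inside the group; only for $k = j+1$ does an external component appear, namely the signal of process~$j$. We therefore let process~$j$ in~$\calA$ carry, as part of its finite-state control, internal copies of $S'^{j+1}, \dots, S'^n$ wired up exactly as in~$\calA'$ and driven by its own outgoing signal; round by round, this lets process~$j$ recompute the feedback it would have received in~$\calA'$. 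Together with the chain signal it actually receives in~$\calA$, process~$j$ thus reconstructs its complete $\calA'$-observation history and feeds it to~$S'^j$ to choose its action. A routine induction on the round index -- using that the action of any process at round~$t$ depends only on observations received in rounds $< t$ -- shows that, for every behaviour of the Environment, the global run produced by this profile in~$\calA$ coincides with the one produced by~$\calS'$ in~$\calA'$; hence the two run trees coincide and correctness transfers back. As the simulated programs are finite-state, both translations preserve finite-state solvability.

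The step I expect to require the most care is making the simulation argument airtight against the precise scheduling of the (stateless) view monitor and the Moore-style processes: one must check that at the start of each round every process has already gathered -- or can compute from what it has gathered -- exactly the data that determines its next action in~$\calA'$, so that the reconstructed observation histories are correct on the nose, and one must verify the closed-subsystem property above for the feedback-augmented communication graph. Once this bookkeeping is in place, the equivalence of the two distributed synthesis problems, with finite-state solutions preserved, follows, which completes the proof.
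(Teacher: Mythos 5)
Your proposal is correct and follows essentially the same route as the paper: the forward direction ignores the feedback links, and the converse direction has each process~$j$ maintain an internal synchronised product of the downstream programs $S'^{j+1},\dots,S'^n$ (driven by its own emitted signal) so as to reconstruct the feedback observations locally. Your explicit ``closed subsystem'' justification is exactly the fact the paper relies on implicitly when it states that the observations of all players $k \ge j$ can be maintained in process~$j$'s control state.
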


\begin{proof}
Any distributed program for a pipeline architecture~$\calA$ 
generates the same run in the architecture~$\calA'$ with added feedback links 
(which are ignored), hence every solution for~$\calA$ is also 
a correct distributed program for~$\calA'$. 
Conversely, given a distributed program~$\calS'$ for an architecture~$\calA'$ with  
feedback links added to a pipeline architecture~$\calA$ as above, 
we can construct a distributed program~$\calS$ for~$\calA$ by considering, for every 
process~$i < n$, the synchronised product of $\calS'^i$ with $\calS'^{i+1} \times \dots \times 
\calS'^{n}$; hence, 
the current observation of each player $j \ge i$ is maintained in the control 
state. The program $\calS^i$ is built from the product automaton,
by using the observation data from the state rather than the signals 
from the incoming feedback links. 
Accordingly,~$\calS$ is a distributed program for the pipeline~$\calA$ that 
generates the same runs as $\calS'$, hence the transformation preserves 
correctness under any specification.
\qed 
\end{proof}

In terms of games, the argument of Lemma~\ref{lem:as-if}
can be rephrased as follows: in the game corresponding to a 
pipeline architecture, 
the game graph induced by any finite-state strategy profile yields 
hierarchical information.   
Then, we view the output of a hypothetical finite-state 
strategy (program) of every player~$i$ 
as a finite-state signal. 
Due to the pipeline structure, 
this signal is information-consistent for each receiving player~$j \le i$, 
hence it can be made observable (across the feedback links). 
Thus, we obtain a game with hierarchical observation that is 
equivalent to the one that corresponds to the pipeline at the outset.

Using this idea, 
we can recover further results on decidable architectures presented by 
\cite{KupfermanVar01} and \cite{FinkbeinerSch05}. 
Two-way chains, for instance, that is, pipelines 
where every process~$i>1$ 
has an additional link to process~$i-1$ as pictured in 
Figure~\ref{fig:two-way-chain}, 
lead to the same game as the underlying pipeline 
under the transformation of Lemma~\ref{lem:as-if}. 
The case of rings with up to four processes is similar:
a ring is a two-way pipeline with an additional two-way link 
between the first and the last process. 
In a four-process ring as pictured in Figure~\ref{sfig:two-way-ring-four}, for any distributed program, 
process~$1$ can infer the signals emitted by~$3$ (for~$2$ and~$4$), and process $2$ the 
signals emitted by $4$.
Hence, we can add feedback links from $3$ to $1$ and from $4$ to $2$ without changing the 
set of runs generated by distributed programs. By doing so, we obtain a game with
hierarchical observation in the order $1 \preceq 2 \approx 4 \preceq 3$. 
 
However, two-way rings with five or more black-box processes can in general not be transformed 
into games with hierarchical observation by adding feedback links. 
For any linear ordering~$\le$ of the processes, the addition of feedback links from 
each process $i$ to all $j \le i$ 
leads to an architecture that allows spurious runs, 
which cannot be generated by distributed programs 
in the original architecture. Figure~\ref{sfig:two-way-ring-four} illustrates 
that hierarchical observation cannot be attained with the 
ordering $1 \le 2 \le 3 \le 4 \le 5$, for instance, as the feedback link from $5$ to $3$ (dashed in the picture) 
represents a signal that is not observation-consistent for process~$3$.
Indeed, as pointed out by \cite{MohalikWal2003} and \cite{FinkbeinerSch05}, 
the synthesis problem for two-way rings with at least five black-box processes is 
undecidable.

\begin{figure}

\begin{center}
\subfigure[four processes attain hierarchical observation]{
\label{sfig:two-way-ring-four}  
\begin{tikzpicture}[scale=0.7]

\node[state1P] 				(0) at (3,8)         { $1$ };

 \node[state1P]              (00) at (5,10) { $2$ };
 \node[state1P]              (01) at (5,6) { $4$ };

  \node[state1P]              (000) at (7,8) { $3$};

 
 \draw[->,>=latex]  (1.5,8) --  (0.west);
 \path (0) edge 
 (00) edge 
 (01);
 \path (00.west) edge 
 (0.north);
 \path (00) edge 
 (000);
 
 \path (000.north) edge 
 (00.east);
 \path (000) edge  
 (01) ;
\path (01.west) edge 
(0.south); 
\path (01.east) edge 
(000.south); 
\path (000.west) edge[dotted] (0.east);
\path (01.north) edge[dotted] (00.south);
 

\end{tikzpicture}
\hspace*{.5cm}

}
\hspace*{.5cm}
\subfigure[link from $5$ to $3$ (and from $4$ to $2$) is not observation-consistent]{
\label{sfig:two-way-ring-five}  
\begin{tikzpicture}[scale=0.65]

\node[state1P] 				(0) at (3,8)         { $1$ };

 \node[state1P]              (00) at (5,10) { $2$ };
 \node[state1P]              (01) at (5,6) { $5$ };

 \node[state1P]              (000) at (8,10) { $3$};
 \node[state1P]              (001) at (8,6) { $4$};        
        
 
 \draw[->,>=latex]  (1.5,8) --  (0.west);
 \path (0) edge 
 (00) edge 
 (01);

 \path (00.west) edge 
 (0.north);

\path (01.west) edge 
(0.south); 

\path   (000.165) edge (00.15);
\path (00.-15) edge (000.195) ;

\path (01.15) edge (001.165);
\path (001.195) edge (01.-15);

\path (000.285) edge (001.75);
\path (001.105) edge (000.255);
 


\path   (000) edge[dotted] (0);
\path   (001) edge[dotted] (0);
\path   (001) edge[dotted] (00);
\path   (01) edge[dotted] (00);
\path   (01) edge[dashed,color=gray] (000);
 
\end{tikzpicture}
\hspace*{.5cm}
}
\end{center}
\caption{Two-way rings with information-consistent feedback links (dotted)}
\label{fig:rings}
\end{figure}
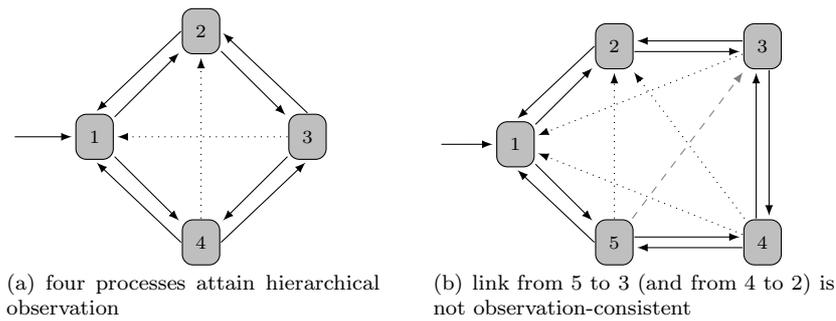

In general, achitectures in the classical framework of Pnueli and Rosner, 
where the communication links are hard-wired, present the following dichotomy: 
either there exists a total ordering $\preceq$ among processes, 
such that the addition of feedback links $\{ i \to j~|~j < i\}$ 
leaves the set of runs generated by any distributed program unchanged, 
or the architecture contains an information fork, in the sense defined 
by~\cite{FinkbeinerSch05} -- in the latter case, the authors of the cited paper show 
that there always exists a specification under which the distributed synthesis problem 
is undecidable.




\section{Effective synthesis for monitored architectures}

By relaxing the condition of hierarchical information 
to include changing or intermittent hierarchies, 
we obtained more general classes of games on which 
the distributed synthesis problem is effectively solvable.
How does this generalisation carry over to distributed architectures\,? 
In the standard framework of architectures with hard-wired communication links, 
there is little hope for redrawing 
the decidability frontier for 
distributed synthesis by 
exploiting patterns of 
dynamic or recurring hierarchical information:
According to the information-fork criterion of~\cite{FinkbeinerSch05}, 
any solvability condition on architectures must either 
restrict the specifications, or request that the 
communication graph is essentially a pipeline. 
Since hierarchical information ---\,in all considered variants\,---  
is a property of the game graph, independent of the winning condition, 
it would be unnatural to cast it as a
restriction on the system specification. 
On the other hand, solvability of pipeline architectures 
is already covered by the basic condition of 
hierarchical observation.

Nevertheless, we encounter situations in practice where 
the components of a distributed system coordinate successfully 
without being restricted to communicate along the links of
a fixed pipeline,
or even a fixed communication graph. 
Consider, for instance, a system that operates
in time phases, each with a specific workflow among
component processes that are organised in a pipeline, 
but just for the duration of one phase; in the next phase, 
the workflow may change and follow another pipeline. 
If, at the end of each phase, all players are updated with the 
information received in the global system, 
we obtain a situation corresponding to a game with 
dynamic hierarchical information. 
Even if the architecture allows any two processes to communicate 
in any direction, the described phase design
ensures that dynamic hierarchical information 
is maintained, hence the synthesis problem is solvable. 
Our shadow-player construction can be understood as an instance 
of this idea.
Beyond maintaining reconfigurable pipelines, one may furthermore 
allow workflows that propagate incomparable information, 
as long as this occurs for a bounded number of rounds.  

\subsection{Hierarchical architectures}

One way to ensure that the synthesis problem is solvable on a class of instances
is by restricting the architecture so that the corresponding game graph yields (static, dynamic, or recurring) 
hierarchical information. 
To verify whether a monitored architecture satisfies the condition, 
we can use the procedures from Lemma~\ref{lem:deciding-hierarchical-dynamic} and 
Theorem~\ref{thm:deciding-hierarchical-recurring} directly, 
without constructing the corresponding game.  
Given an architecture 
$(\calP^1, \dots, \calP^n, \calM)$ 
the procedure for dynamic (or static) hierarchical information 
runs in space $O( \sum_{i=1}^n \log(|\calP^i|) + \log( |\calM | ))$, and
for the case of recurring hierarchical information, 
it runs in space $O( (n | \bigtimes_{i=1}^n \calP^i \times \calM |)^2 )$.

In the fixed-architecture framework, solvable classes are characterised by a basic pattern 
of the communication graph: a pipeline where
no process $i$ is allowed to receive signals from any process $j < i - 1$. 
This ensures that every run maintains hierarchical observation, hence the  
game graph corresponding to the architecture yields hierarchical observation, which implies that  
the synthesis problem is solvable.
We can identify similar solvability patterns for monitored architectures by restricting to
view monitors that
never deliver signals
which would violate the condition of dynamic hierarchical information. 
For any architecture equipped with such a monitor, the synthesis problem 
is solvable with respect to every specification.
Deciding whether a given view monitor~$\calM$ yields dynamic or recurring hierarchical information 
(in the corresponding game graph) 
with every matching collection of processes amounts to deciding whether 
the architecture formed of black-box processes and the
view monitor~$\calM$ yields dynamic or recurring hierarchical information, and can hence
be done in logarithmic or polynomial space, respectively.  

Thus, we obtain classes of monitored architectures on which the 
distributed synthesis problem is solvable as a direct application of our 
game-theoretic analysis.

\subsection{Maintaining hierarchical information through strategies}\label{ssec:strategy-maintained}

Our second proposal for automated 
synthesis in the framework of monitored architectures
relies on enforcing hierarchical information
strategically, at the program level, 
rather than restricting the architecture a-priori.
In this way,
the task of avoiding incomparable information is put in the hands of the 
designer of the distributed system.   
We describe an automated 
method to help 
the system designer accomplish this task.

Let us first detail our argument in terms of games.
Note that, in view of recurring hierarchical information, 
it is undecidable whether, for a given game, there exists a strategy  
that is winning and also
avoids infinite gaps with incomparable information:
this follows by adapting the 
standard reduction from the Halting Problem to the synthesis  
of reachability strategies in two-player games with imperfect information 
(see, e.g., \cite{PetersonRei79}, \cite{BerwangerKai10}).
Therefore, we restrict our attention to dynamic hierarchical information.

\begin{definition}\label{def:hierarchic-info-strat}
  Given a game, a strategy~$s$ maintains
  hierarchical information 
  if every history that follows~$s$ yields hierarchical information.
\end{definition}

One straighforward, but important insight is that 
the synthesis problem 
restricted to strategies that maintain hierarchical information 
is effectively solvable. 
 
\begin{theorem}\label{thm:hierarchy-synthesis}
For any finite game, it is decidable whether  
there exists a distributed winning strategy that maintains hierarchical 
information, and if so, we can synthesise one.  
\end{theorem}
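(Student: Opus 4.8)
The plan is to reduce the problem to the decidability of dynamic hierarchical information (Theorem~\ref{thm:dynamic-decidable}), by cutting the game off at the first history that violates hierarchical information and declaring that event lost. Concretely, given $\calG=(G,W)$, I would first invoke Lemma~\ref{lem:automaton-hierarchical-history} to get a nondeterministic automaton over the alphabet $V$ for the histories of $G$ that do \emph{not} yield hierarchical information; determinising it and making its accepting sink absorbing yields a deterministic automaton $\mathcal{D}$ that, after reading a position sequence, sits in a distinguished sink state $\texttt{bad}$ exactly once some prefix of the input is a non-hierarchical history of $G$. Then I would form the synchronised product $\hat G = G\times\mathcal{D}$, keeping the moves and the observations of $G$ at every position whose $\mathcal{D}$-component is not $\texttt{bad}$, but assigning a fresh common observation $\dagger$ to all players at every position with $\mathcal{D}$-component $\texttt{bad}$. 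The winning condition $\hat W$ of the new game $\hat\calG=(\hat G,\hat W)$ collects the plays that never visit a $\texttt{bad}$-position and whose projection to $G$ lies in $W$; this is $\omega$-regular and effectively constructible, and $\hat G$ has no dead ends because $G$ has none and $\mathcal{D}$ is complete.

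The first thing to prove is that $\hat G$ yields dynamic hierarchical information. I would split the histories of $\hat G$ into those that never meet $\texttt{bad}$ (``good'' histories, in bijection with the histories of $G$ all of whose prefixes are hierarchical) and those that reach $\texttt{bad}$. For a good history $\pi$, every history indistinguishable from it to any player is again good, since entering $\texttt{bad}$ is signalled by the fresh observation $\dagger$; hence the information set of $\pi$ in $\hat G$ is, up to the bijection, $P^i(\pi)\cap\Hist_{\mathrm{good}}(G)$, and since the underlying history is hierarchical — all its prefixes are, in particular itself — the family $\{P^i(\pi)\}_i$ is totally ordered by inclusion, and intersecting it with the fixed set $\Hist_{\mathrm{good}}(G)$ keeps it so. For a history $\tau$ that reaches $\texttt{bad}$, written as a good prefix $\rho$ followed by a $\texttt{bad}$-segment, every indistinguishable history has the same shape, with good part equivalent to $\rho$ and with a trailing $\texttt{bad}$-segment of the same length; the admissible $\texttt{bad}$-continuations depend only on the good part, not on the player, so the information set of $\tau$ decomposes as a union, over the good histories equivalent to $\rho$, of player-independent pieces. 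Consequently the inclusion order among $\{P^i(\tau)\}_i$ is inherited from the hierarchical history $\rho$ and is again total.

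Next I would establish that $\calG$ admits a distributed winning strategy that maintains hierarchical information if and only if $\hat\calG$ admits a distributed winning strategy. For the forward direction, given such a strategy $s$ for $\calG$, define $\hat s$ on $\hat G$ by letting player~$i$ play $s^i$ on the projection of good histories (consistent, since good histories of $\hat G$ with equal $i$-observation project to $\sim^i$-equivalent histories of $G$) and arbitrarily on the $\texttt{bad}$-region; as $s$ maintains hierarchical information, no outcome of $\hat s$ ever reaches $\texttt{bad}$, so outcomes of $\hat s$ project to outcomes of $s$ and thus lie in $\hat W$. For the converse, Theorem~\ref{thm:dynamic-decidable} applied to $\hat G$ gives, on a solvable $\hat\calG$, a finite-state winning strategy $\hat s$; reading each of its memory machines directly as a strategy for player~$i$ in $G$ (whose observation alphabet is a subset of the one the machine expects) yields an information-consistent finite-state profile $s$ for $G$, and an easy induction shows that every outcome of $s$ stays in the good region — otherwise the matching $\hat G$-play following $\hat s$ would visit $\texttt{bad}$ — so all prefixes of every outcome are hierarchical and the outcome carries the same colouring as its lift, hence belongs to $W$.

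With these two claims the theorem follows: deciding whether $\hat\calG$ admits a distributed winning strategy is decidable by Theorem~\ref{thm:dynamic-decidable}, and on a positive instance the finite-state strategy synthesised for $\hat\calG$ translates back, as above, into a finite-state distributed winning strategy for $\calG$ that maintains hierarchical information. I expect the main obstacle to be the second paragraph — verifying that the surgery really produces a game with dynamic hierarchical information; the device that makes it work is declaring the escape event observable to all players simultaneously via the fresh symbol $\dagger$, so that no player's information set straddles the good and the $\texttt{bad}$ region and the totally ordered structure of the good histories is left intact.
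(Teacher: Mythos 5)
Your proposal is correct and follows essentially the same route as the paper: both truncate the game at the first history violating hierarchical information, make that event commonly observable via a fresh observation (the paper collapses the bad region to a single sink $\ominus$ rather than tagging a product region, a cosmetic difference), exclude such plays from the winning condition, and then invoke Theorem~\ref{thm:dynamic-decidable}. Your verification that the modified game yields dynamic hierarchical information and that winning strategies transfer in both directions matches the paper's (less detailed) argument.
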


\begin{proof}
Let~$\calG$ be an arbitrary finite game with an $\omega$-regular 
winning condition. 
According to Lemma~\ref{lem:automaton-hierarchical-history},
the set of game histories that yield hierarchical information is regular.
Let~$A$ be a deterministic automaton that recognises this set 
and consider its synchronised product $G \times A$ with the game graph~$G$.
From this product, we construct a new game graph~$G'$ by adding a
sink position~$\ominus$ with a fresh observation to be received by all players, 
and by replacing all moves $((v, q), a, (v, q'))$ in $G \times A$ 
where the automaton state~$q'$ at the target is non-accepting 
with $((v, q), a, \ominus)$. Hence, all histories in~$G'$ 
yield hierarchical information and
every history in $G$ that does not yield hierarchical information,  
maps to one in~$G'$ that ends at~$\ominus$. 
Finally, we
adjust the winning condition~of $\calG$ by expanding each play with 
the corresponding run of~$A$ 
and by excluding 
all plays that reach~$\ominus$.  

The game~$\calG'$ constructed in this way yields 
dynamic hierarchical information, 
hence the synthesis problem is effectively solvable.  
Moreover, every distributed winning strategy in~$\calG'$ corresponds 
to a distributed strategy 
in~$\calG$ that is winning and maintains hierarchical information
in the sense of Definition~\ref{def:hierarchic-info-strat}, and
conversely, every winning strategy that maintains hierarchical information in 
$\calG$ corresponds to a winning strategy in~$\calG'$. 
\qed
\end{proof}

The above theorem gives raise to an effective, sound, and incomplete method for 
solving the synthesis problem for monitored architectures: 
Given a problem instance consisting of an architecture and a specification, 
solve the synthesis problem for the corresponding game 
restricted to strategies that maintain hierarchical information, and translate 
any resulting finite-state winning strategy back into a distributed program.
This approach omits 
solutions that involve runs which do not yield hierarchical information. However,   
if there exist 
correct distributed programs that also maintain hierarchical information, 
the procedure will construct one.

Alternatively, the method based on Theorem~\ref{thm:hierarchy-synthesis} 
can be viewed as a complete procedure for solving the synthesis problem on 
instances with arbitrary architectures, 
but with specifications restricted to run trees in which every run (corresponds 
to a play that) yields dynamic hierarchical information -- a regular property 
according to Lemma~\ref{lem:automaton-hierarchical-history}. 

\subsection{Hierarchical routing}\label{ssec:routing}

To conclude, we present a concrete example of an architecture design 
that supports the application of our method without 
otherwise restricting the solution procedure.

We set out with the observation that
the condition of dynamic (or static) hierarchical information is a safety condition that 
can be supervised by the view monitor. 
Our proposal is to incorporate into view monitors the ability to send hierarchy-related 
data to the processes, which can be used by the programs
to ensure that hierarchical information is maintained whenever possible.

To illustrate the idea, we consider architectures of a particular format, 
which we call \emph{routed} architectures. Intuitively each processes 
can emit signals addressed to any other process, and the view monitor, called \emph{router}, 
either delivers a signal or denies it, according to a deterministic rule. 
In case of denial, the sending process receives a notification. The intention is to 
maintain hierarchical information on every run as far as possible. However, 
signals sent by the Environment cannot be denied, 
and inter-process signals may also be forced for delivery by 
the emitting process. 
This may lead to violations of the condition of hierarchical information, 
in which case the monitor sends a panic signal to all processes 
and henceforth simply delivers all emitted signals. 

Formally, a routed architecture for $n$ processes and the Environment features 
action of a special format. Each action has a control and a communication component:  
the control component is a single symbol as in the case of static architectures; 
the communication component is a (possibly empty) list of signals, 
each formed of a body -- a single symbol -- and a header, which contains the identities $i, j$ of the sending and the 
receiving process, and a priority number. 
In Environment actions, all signals carry top priority (which forces delivery). 
We assume that no process appears twice as a receiver in one action. 
The observations of each process are formed of a communication component, 
which consists of a list of signals, at most one from every other process, 
and a notification component, which consists of 
a panic flag and a list of delivery flags, one for every other process. 

Priority numbers are used to determine the observations delivered in response 
to a global action. Towards this, 
the priorities of all signals in the communication component of an observation are aggregated. 
The aggregation function is monotonous with respect to the inclusion between sets of 
signals, and each set of observations has a unique element of maximum 
value (to break ties, we may use process identifiers). 
Since the observation space is finite, it is always possible to define a priority aggregation function 
with these properties.     

The \emph{router} for a given collection of processes
is a view monitor that operates as follows. There is one sink state called panic state: 
here the router reads the 
communication components of the global action and delivers to each process~$i$ the list of all signals 
addressed to $i$ as a receiver, also setting the panic flag and 
delivery flags for each signal emmited by process~$i$. 
In any other states, including the initial one,
the router reads the global action~$a$ and considers the set of
admissible observations: a 
global observation~$b$ is admissible if (1) the communication component consists of signals emitted in the
global action~$a$ and contains all signals with top priority, and (2) 
the delivery flags are set correctly, and the panic flag is reset, if and only if, 
the run prefix that would be reached by delivering the observation $b$ (corresponds to a play that) 
yields hierarchical information. 
If there exist admissible observations that do not raise the panic flag, 
the monitor picks the one of maximal aggregated priority, delivers it to the processes, and 
switches into a non-panic successor state. Else, if all admissible observations raise the panic flag, 
the monitor delivers all received signals and switches into the panic state. 
As the condition of hierarchical observation corresponds to a finite-state signal, 
the described operation of the router can be implemented by a finite-state monitor. 
Essentially, the states store the data needed to supervise the 
information hierarchy ordering. 

In terms of expressiveness, routed architectures lie between hard-wired and monitored architectures. 
As in hard-wired architectures, signals are routed unaltered between processes, 
the difference being that contents and receiver of a signal are chosen by the 
emitting process. In contrast, the signals delivered by an arbitrary view monitor 
to different processes can be highly correlated depending on
on the global action and the state of the monitor. 
For routed architectures, the correlation is restricted to 
non-delivery of signals due to violating the condition of 
hierarchical information. 

Given a routed architecture and a (linear or branching-time) specification, we can use the incomplete 
synthesis method presented in Subsection~\ref{ssec:strategy-maintained} to 
synthesise a distributed program that maintains hierarchical information. If this succeeds, 
we obtain a correct distributed program for which no panic or non-delivery flag will ever be raised.

However, assuming that the given specification is insensitive to signals that are not delivered, 
it is sufficient to synthesise a distributed strategy 
that avoids the panic signal, but allows non-delivery of signals.
In game-theoretic terms, this corresponds to viewing signalling attempts as cheap-talk actions. 
The option of sending signals that may be denied does 
not enlarge the class of solvable problem instances:  in principle, each process~$i$ can infer 
from its observation sequence that a certain signal will not be delivered, since this
can occur only when the receiving process is less informed than~$i$ -- in other words, the non-delivery 
notification is information-consistent for process~$i$, so it yields no new information. 
Nevertheless, in practice, the warning mechanism against driving the system into 
a non-hierarchical state offers the system designer more freedom in programming the processes 
than the setting where such attempts are forbidden.

Finally, for problem instances that do not admit a complete hierarchical solution, 
our model of routed architectures allows to combine strategies synthesised 
via the method from Subsection~\ref{ssec:strategy-maintained} 
applied on the prefix of the run tree on which hierarchical information is maintained   
with other strategies that are triggerred when the panic signal is raised. 

\section{Discussion}

The task of coordinating several players with imperfect information 
towards attaining a common objective is quintessential in the design of
computational systems with multiple interacting components. 
Known automated methods for accomplishing this task rely on a basic pattern of 
hierarchical information flow. 
Here, we showed that this pattern can be relaxed considerably, 
towards allowing the hierarchy to 
rearrange in the course of the interaction, or even to vanish temporarily.

Our technical analysis is based on finite-state games with 
perfect recall and synchronous dynamics. The model has the advantage of 
exposing the fundamental connection between knowledge and action
as put forward by~\cite{Moses16}. Indeed, our results on 
solvable cases for the coordination problem in games rely on 
the key concept of identifying 
finite-state signals that provide sufficient knowledge 
for triggering winning strategies. 

The game model has shortcomings as well.  
The assumption of perfect recall is rather uncritical, 
as we finally synthesise strategies implementable by 
finite-state automata. However, the assumption of synchronous dynamics does restrict 
the scope of our results: 
the uncertainty about the ordering of event occurrences in asynchronous systems 
cannot be captured directly by imperfect information in games, 
and different methods are required to approach the synthesis problem.
(See, e.g., ~\cite{MadhuThiagu02}, \cite{GastinSZ09}, \cite{MuschollWal14}.)
Insights on information and coordination in games
may shed light on the asynchronous setting as well, however, at the current state of research 
the areas appear divided. 

Perhaps the greatest challenge is to make concrete how the game-theoretic results 
can help in designing real-world systems.
In terms of operational models, 
our games are close to the distributed reactive systems of \cite{PnueliRos90}, 
described by a communication architecture and a regular specification:
Instances of the synthesis problem can be translated back and forth between the game and the 
system model. 
Nevertheless, it turns out that games with (static, dynamic, or recurring) hierarchial information,  
do not correspond to natural classes of distributed architectures. 
This is because most of the interaction structure described by a game graph 
is incorporated into the specification on the side of the architecture. 
While the game graph expresses which actions \emph{will be} available or not in a certain state, 
the specification expresses which actions \emph{should}  occur or not in a run of a correct program. 
Since the original model of distributed reactive systems is too permissive in describing
possible behaviours of processes,  a meaningful classification 
in terms of information flow patterns 
would need to refer to behaviours of correct programs -- which we aim yet to construct.

To overcome the cleavage between game and architecture representations, 
we introduced the model of monitored architectures where
the hard-wired communication graph of the Pnueli-Rosner model is replaced by a transducer 
that transforms global actions into signals sent to the processes, in a dynamic way, depending on the previous 
run. This yields a faithful representation of the information flow
in the operational model, which allows to translate 
the decidability results on games with hierarchical information. Thus, we obtain
rich classes of distributed architectures 
on which the automated synthesis problem is solvable. 

In continuation of our study, we see three directions for further research. 
One promising approach is to refine the classification of solvable classes according to
the structure of the winning condition (or the specification) 
rather than considering only the set of possible behaviour represented by the game 
graph (or the architecture). In the classical framework of Pnueli and Rosner,
\cite{MadhusudanT01} considered specifications that are \emph{local}, in the sense that they require each 
process to satisfy a condition expressed on the state space of its own automaton. The authors show that, 
under such specification, the class of solvable architectures includes clean pipelines, 
that is pipelines where the last process receives private signals from the environment. Technically, 
the games corresponding to clean pipelines do not yield hierarchical information, nevertheless, the synthesis 
problem can be solved by a straightforward adaptation 
of the automata-theoretic method of~\cite{PnueliRos90,KupfermanVar01}. 
As another example, our method for synthesising distributed programs that maintain hierarchical information, 
presented in Subsection~\label{ssec:strategy-maintained}, can be interpreted as a complete solution for 
arbitrary architectures with specifications restricted to runs that 
yield dynamic hierarchical information. 

A second concrete direction relies on  
assessing the condition of hierarchical information 
at runtime rather than considering the raw system model.
In terms of games, this corresponds to asserting that 
the game graph induced by every strategy yields hierarchical information, 
although this may not be true for the original game.
Such an approach would allow to capture functional dependencies in runs generated by distributed programs, 
and lead to even more liberal classes of solvable architectures.

Finally, we may include the view monitor in the synthesis process. 
So far, we modeled the view monitor as a white-box process that represent the available communication infrastructure. 
However, in practical applications, the infrastructure does not need to be fixed, 
the system designers may be able to configure certain of its parameters. 
On the other hand, as we showed in Subsection~\ref{ssec:routing}, 
it can be helpful to use the abilities of the view monitor as a global observer 
to provide the processes with signals about the global run, even if they 
are observation-consistent and hence could be deduced locally by the receiver. 
Therefore, it will seems appropriate to consider 
view monitors as grey-box processes, and synthesise white-box implementations using automated procedures.

At bottomline, we are confident that the quest for 
effective automated synthesis is worth pursuing. 
To correct the discouraging picture drawn by the fundamental undecidability results, it is important to 
identify natural classes of models on which the problem is solvable. We believe that hierarchical 
information offers a convincing example in this direction. 

\bibliographystyle{my-spr-chicago}      
\bibliography{hi}   

\end{document}